\newcommand{\cI}{\mathcal{I}}
\newcommand{\cJ}{\mathcal{J}}
\newcommand{\cN}{\mathcal{N}}
\newcommand{\cP}{\mathcal{P}}
\newcommand{\W}{{\rm W}}
\newcommand{\beq}{\begin{equation}}
\newcommand{\eeq}{\end{equation}}
\newcommand{\beqa}{\begin{eqnarray}}
\newcommand{\eeqa}{\end{eqnarray}}
\newcommand{\bra}[1]{\ensuremath{\left\langle#1\right|}}
\newcommand{\ket}[1]{\ensuremath{\left|#1\right\rangle}}
\newcommand{\dd}{\mathrm{d}}
\newcommand{\id}{\mathds{1}}
\renewcommand{\today}{\number\day\space\ifcase\month\or
   January\or February\or March\or April\or May\or June\or
   July\or August\or September\or October\or November\or December\fi
   \space\number\year}
\definecolor{myurlcolor}{rgb}{0,0,0.7}
\definecolor{myrefcolor}{rgb}{0.8,0,0}
\definecolor{purple}{RGB}{128,0,128}
\definecolor{ultramarine}{RGB}{63, 0, 255}
\definecolor{medblue}{RGB}{0, 0, 100}
\definecolor{googleblue}{RGB}{34, 0, 204}
\definecolor{panblue}{RGB}{0,24,150}
\definecolor{carmine}{RGB}{150, 0, 24}
\definecolor{gray}{RGB}{150, 150, 150}
\newcommand{\term}[1]{\textcolor{medblue}{\textbf{#1}}}
\newtheorem{thm}{Theorem}
\newtheorem{theorem}[thm]{Theorem}
\newtheorem{prop}[thm]{Proposition}
\newtheorem{proposition}[thm]{Proposition}
\newtheorem{lemma}[thm]{Lemma}
\newtheorem{cor}{Corollary}[thm]
\newtheorem{conj}{Conjecture}
\newtheorem{definition}[thm]{Definition}
\newtheoremstyle{defblock}{0.7\topsep}{0pt}{}{}{}{: }{0pt plus 1pt minus 1pt}{\thmname{\bfseries{#1}}\thmnumber{\bfseries{#2}}\color{medblue}\bfseries\thmnote{#3}}
\theoremstyle{defblock}
\theoremstyle{remark}
\newcommand{\same}[0]{{\textsf{Same}}}
\newcommand{\Bell}[0]{{\textsf{Bell}}}
\newcommand{\tagprop}[1]{\tag{\hyperref[#1]{P\ref{#1}}}}
    \newwrite\bibnotes
    \def\bibnotesext{Notes.bib}
\write\bibnotes{@CONTROL{REVTEX42Control}}
\write\bibnotes{@CONTROL{%
    apsrev42Control,editor="0",pages="0",title="0",year="1"}}
\write\@auxout{\string\citation{apsrev42Control}}%
\begin{document}

\begin{abstract}
We introduce the class of Genuinely Local Operation and Shared Randomness (LOSR) Multipartite Nonlocal correlations, that is, correlations between $N$ parties that cannot be obtained from unlimited shared randomness supplemented by any composition of $(N-1)$-shared causal Generalized-Probabilistic-Theory (GPT) resources. We then show that noisy $N$-partite GHZ quantum states as well as the $3$-partite W quantum state can produce such correlations.
This proves, if the operational predictions of quantum theory are correct, that \emph{Nature's nonlocality must be boundlessly multipartite} in any causal GPT. 
We develop a computational method which certifies that a noisy $N=3$ GHZ quantum state with fidelity $85\%$ satisfies this property, making an experimental demonstration of our results within reach. We motivate our definition and contrast it with preexisting notions of genuine multipartite nonlocality.
This work extends a more compact parallel letter~[Phys. Rev. Lett. 127, 200401
(2021)] on the same subject and provides all the required technical proofs. 
\end{abstract}

\title{Any Physical Theory of Nature Must Be Boundlessly Multipartite Nonlocal}

\date{\today}

\author{Xavier Coiteux-Roy}
\email{xavier.coiteux.roy@usi.ch}
\affiliation{Faculty of Informatics, Università della Svizzera italiana, Lugano, Switzerland.}

\author{Elie Wolfe}
\email{ewolfe@perimeterinstitute.ca}
\affiliation{Perimeter Institute for Theoretical Physics, Waterloo, Ontario, Canada.}

\author{Marc-Olivier Renou}
\email{Marc-Olivier.Renou@icfo.eu}
\affiliation{ICFO-Institut de Ciencies Fotoniques, The Barcelona Institute of Science and Technology, Castelldefels (Barcelona), Spain.}

\maketitle

\twocolumngrid
\section{Introduction}\label{Introduction}

Correlated events are ubiquitous.
A fundamental objective of science is to understand the causal links between these events, behind correlations. Bell's seminal theorem~\cite{bell1964einstein} demonstrated the failure of classical causal theories~\cite{pearl2009causality} to reproduce the predictions of quantum theory. A natural interpretation of Bell's theorem is that the structural links between non-observed underlying variables (also called sources or resources) and observed variables (also called parties) in a network causal model are not sufficient to delimit all possible correlations that might be observed between them: the \emph{physical nature} of the sources is also important~\cite{Wolfe2020quantifyingbell}. Indeed, even in a simple Bell scenario involving one source and several observed variables, a source producing quantum signals allows for ``nonlocal'' correlations that cannot be modelled classically~\cite{CHSHOriginal}. Simply put, the correlations achievable with quantum common causes are richer than those achievable with purely classical sources.

The existence of nonclassical quantum correlations inspired the study of even more general causal theories, capable of explaining quantum correlations and even stronger-than-quantum correlations~\cite{Popescu1994}.
The explanatory power of such exotic theories is so strong that one might wonder if such a theory might describe all the correlations that may be observed in Nature while at the same time never exceeding some measure of complexity. In this article, we focus on the following question: Do there exist causal theories able to model all observable correlations based on \emph{finite-size} nonclassical resources? More precisely, could Nature's correlations be explained by $N$-partite resources, for some finite $N$? 

Unsurprisingly, even some \emph{classical} correlations would be inexplicable in the absence of universal ($N$-way) shared randomness.
In particular, no causal theory restricted to sharing bipartite resources of \emph{any} physical nature could accommodate perfect correlations between three parties~\cite{Hensen2015} (We will show that this no-go result readily generalizes to $N$ parties restricted to $({N{-}1})$-partite resources).
Accordingly, the \enquote{No Shared Randomness} hypothesis is far too strong an assumption in general. Shared randomness is facially an accessible resource: Indeed, $N$ parties can share randomness by simply agreeing on a common stochastic phenomenon to observe, such as the weather. Alternatively, pre-established high-entropy shared randomness can be stored indefinitely in local memories through the use of any number of digital technologies.

As such, in the following we consider shared randomness to always be accessible.
We focus on the (non)simulability of certain $N$-partite correlations in scenarios allowing for the local composition of ${(N{-}1)}$-shared \emph{nonclassical} resources and Local Operations and Shared Randomness (LOSR) between $N$ parties. $N$-partite correlations which \emph{cannot} be simulated in such a scenario are hereafter deemed \emph{genuinely LOSR multipartite nonlocal}.

Some causal theories of correlations generalizing quantum mechanics have already been introduced. 
In particular, \emph{boxworld} is an alternative theory for correlations motivated by nonsignalling boxes~\cite{Janotta2012Boxword}.
Although boxworld produces some correlations which are strictly \emph{beyond} the scope of the predictions of quantum theory, it should also be noted that boxworld cannot reproduce \emph{all} quantum correlations in scenarios with independent sources even when allowing for shared randomness~\cite{Chao2017genuinemultipartite,WeilenmannQuantumBest,Bierhorst2020Tripartite}. 
Here we aim to derive an argument in a theory-agnostic perspective, so that it be compatible with any causal theory. This includes classical; quantum; nonsignalling boxes; and, more generally, any hypothetical causal theory that can be defined in networks.
In the following, we refer to such theories as causal Generalized Probabilistic Theories (GPT) in networks, or more shortly as GPTs. 
It is the role of these theories to define the resources, or states, emitted by each source, as well as the measurements made by each party.
In our theory-agnostic approach, however, we do not refer to, nor rely on, any concrete formalism for GPTs; different ones~\cite{Barrett2007GPT,Short2010couplers} can be used. 
Our unique requirements for the considered theory is to be causal, and to allow for device replication. (These requirements are formalized in Section~\ref{Sec:GenuineMultipNonlocCorr}.)
We call \emph{theory-agnostic} any correlation which can be obtained from such causal theory (equivalent notions are already introduced in \cite{Henson2014,GisinNSI}, see also related work~\cite{Chiribella2011Reconstruction,Chiribella_2014,Bancal2021Networks,Beigi2021,Pironio2021InPreparation}).
The present text extends a more compact parallel letter on the same subject~\cite{PRL} and furnishes all the required technical proofs.

The question of the (non)simulability of certain $N$-partite correlations in setups allowing for the local composition of any ${(N{-}1)}$-shared GPT resources and $N$-shared randomness is intuitively clear. 
Nevertheless, it requires a technical definition of what are \emph{genuinely LOSR $N$-multipartite nonlocal correlations} --- \emph{i.e.}, the correlations which can be obtained through such a process.
In the following Section~\ref{Sec:GenuineMultipNonlocCorr}, we base this definition on a causality principle and device replication, through the inflation paradigm. 
Then, in Section~\ref{sec:GenuineLOSRMultipartitenessGHZW} we prove that the $N$-partite quantum states $\ket{{\rm GHZ}_N}$ can create genuinely $N$-partite nonlocal correlations. 
This proves Theorem~\ref{thm:NatureNotNloc}, the main result of this paper: Nature is not merely ${N}$-partite, for any $N$. Our result is noise tolerant. We also generalize this result to the tripartite state $\ket{{\rm W}}$.
In Section~\ref{sec:NumericalNoiseTolerence}, we provide a linear-programming (LP) method to generate \emph{certificates} of genuine multipartite nonlocality, based on the inflation technique. We illustrate it over the $\ket{{\rm GHZ}_3}$ state, obtaining better noise-robust results accessible to current technologies. Such improvements illustrate the practical importance of this LP method for experimental realizations.
Since there already exist several definitions of the concept of genuinely multipartite nonlocal correlations, we discuss in Section~\ref{Sec:LOCCvsLOSR} the adequacy of ours: an LOSR theory-agnostic framework which optimally accommodates an intuitive concept of genuinely multipartite nonlocal correlations. In particular, we compare our definition to the historically accepted notion due to Svetlichny~\cite{Svetlichny}.

\section{Definition of genuinely LOSR-multipartite-nonlocal correlations}\label{Sec:GenuineMultipNonlocCorr}

In this section, we provide a definition of genuinely LOSR $N$-partite nonlocal correlations. Our approach is closely related to the concept of network nonlocality, which has been a subject of extensive study in the past decade~\cite{TavakoliReview,fritz2012bell,Branciard2010,Renou2019}.
By specializing to the case of $N=3$, the definition herein will precisely formalize the more informal Definition~2 of Ref.~\cite{PRL}. 

As prelude to defining genuine LOSR multipartite nonlocality, we first provide a definition of \emph{${(N{-}1)}$-partite LO theory-agnostic correlations}, that is, of correlations which can be obtained from causal GPT limited entirely to ${(N{-}1)}$-partite resources. 
Then, we extend it to a definition of \emph{${(N{-}1)}$-partite LOSR theory-agnostic correlations}, allowing for $N$-partite shared randomness in addition to ${(N{-}1)}$-partite GPT resources. 
Lastly, we define genuinely LOSR $N$-partite nonlocal correlations as the correlations that are \emph{not} ${(N{-}1)}$-partite LOSR theory-agnostic. 

Recall that standard Bell scenarios involve a single common cause accessible to all parties. In the absence of any particular physical restriction on the nature of that common cause the only \emph{a priori} constraints over such theory-agnostic correlations in a Bell scenario are the No Signalling equalities~\cite{Barrett2007GPT}.
By contrast, in our case theory-agnostic correlations are  restricted by nontrivial inequality constraints in addition to the equality constraints coming from No Signalling. We will show how these inequalities are consequences of the scenario being composed of several independent theory-agnostic sources available only to ${N{-}1}$ parties.

In the following, we base ourselves on a \emph{causality principle} (formalized below, see also its definition in the framework of operational probabilistic theories~\cite{Chiribella2011Reconstruction,Chiribella_2014}) that consists in accepting the causal structure of the scenario.
We also assume that any device distributing a resource, or locally operating on resources, can be replicated in independent copies which can be reordered to form a new setup.
These two ingredients --- causality and device replication --- are all that are needed in order to draw inferences from the nonfanout-inflation technique~\cite{Wolfe2016inflation}; the latter also powers the analytic and computational results in this article.

\subsection{Notations}

\begin{figure*}[htb]\centering
    \includegraphics[width=\linewidth]{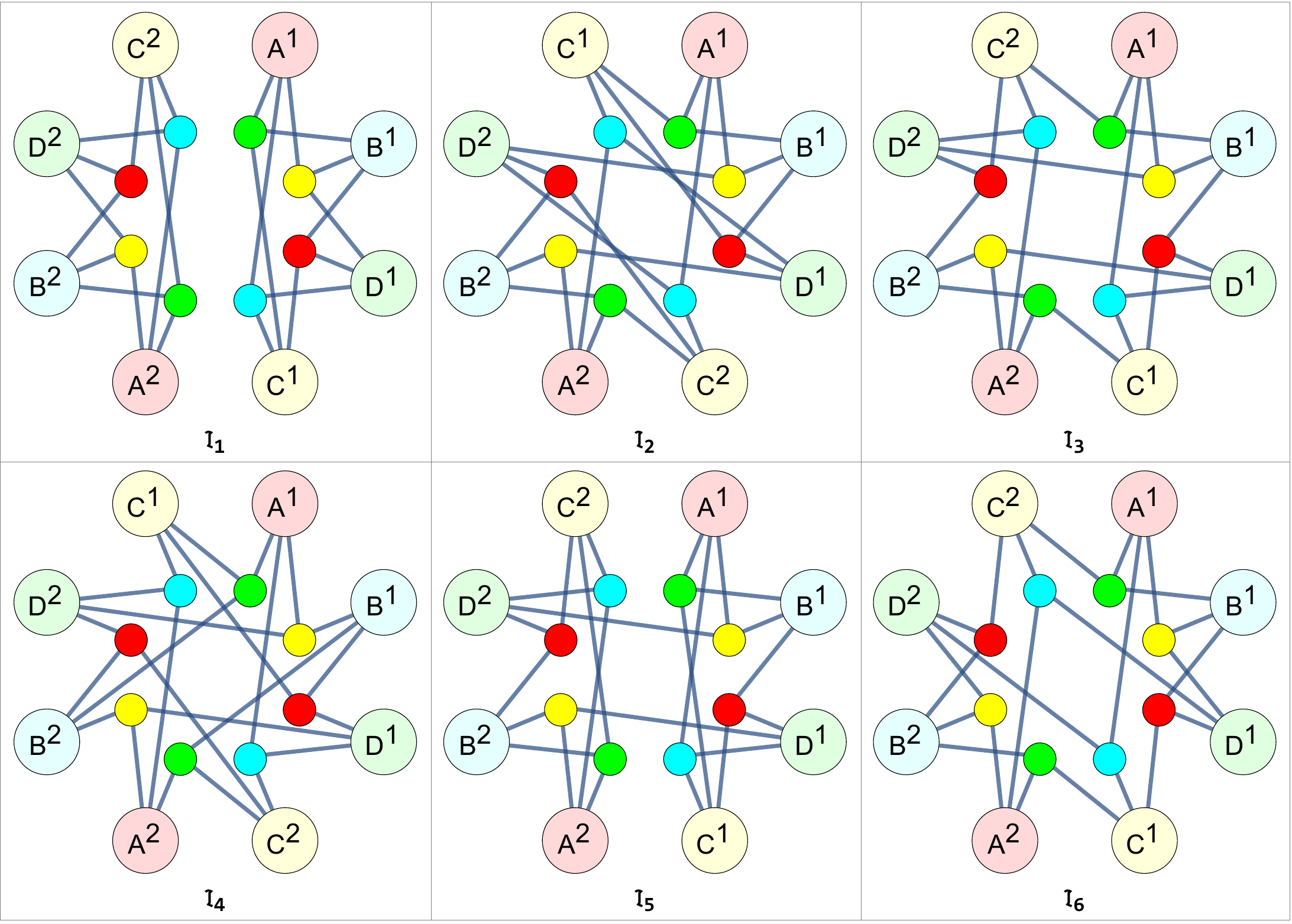}\label{fig:InflationN=4}

  \caption[]{\label{Fig:3WayL2Classes} 
  All nonfanout inflations of order $K=2$ for the tetrahedron network $\cN_4$ (\emph{i.e.}, the network with four 3-way sources). $\cN_4$ is composed of parties $A$ (red), $B$ (blue), $C$ (yellow), $D$ (green) and sources of colours (red, blue, green, yellow) such that each party is connected to each source except for the one of his own colour. 
We represent above the six non-isomorphic inflations $\cI_1, ..., \cI_6$ (which are of respective multiplicity $1,3,3,1,12,12$). 
  Let $P$ be a nonsignalling correlation over $A,B,C,D$.
  For $P$ to be a ${3}$-{LO} theory-agnostic correlation, Definition~\ref{def:LOTheoryAgnosticCorrelation} requires the existence of a correlation $Q^{(1)}, ..., Q^{(6)}$ for each inflated $\cI_1, ..., \cI_6$ such that $(C1)$, $(C2)$ and $(C3)$ are satisfied.
  For example:}\justify
  {\vspace*{-1em}
  \justify\begin{minipage}{\linewidth}
  \begin{compactenum}
  \item[(C1)] implies that ${Q^{(1)}}_{|A^1B^1C^1D^1}=P$ and ${Q^{(2)}}_{|A^1D^2}=P_{|AD}$, but not that ${Q^{(2)}}_{|A^1B^1C^1D^1}=P$. 
  \item[(C2)] implies that every inflation is invariant under the exchange of all copy indices, \emph{e.g.}, ${Q_{|A^1,B^1,C^1,D^1,A^2,B^2,C^2,D^2}}={Q_{|A^2,B^2,C^2,D^2,A^1,B^1,C^1,D^1}}$. 
    \item[(C3)] implies that ${Q^{(3)}}_{|A^1B^1A^2B^2}={Q^{(3)}}_{|A^1B^1}\cdot{Q^{(3)}}_{|A^2B^2}$.
  \end{compactenum}\end{minipage}}
  Note that one can in principle also consider $\cJ={\cI_1,...,\cI_6}$, which is a valid inflation of $\cN_4$ (but of order $K=12$) and which implies the existence of a correlation $Q$ of the parties over $\cJ$ that factorizes as the product ${Q^{(1)}}\cdot...\cdot{Q^{(6)}}$ and satisfies the compatibility conditions imposed by $(C2)$: for instance, it implies $Q^{(3)}_{|A^1B^1C^1A^2B^2C^2}=Q^{(6)}_{|A^1B^1C^1A^2B^2C^2}$.  
\end{figure*}

Let us introduce $\cN_N$, the $N$-partite network scenario in which every $N=\binom{N}{N{-}1}$ subset of ${N{-}1}$ parties is connected to an arbitrary causal GPT resource.
Let $A_1, ..., A_N$ be its parties and $S_1, \dots, S_N$ its sources, such that $A_i$ is connected to every source except for $S_i$ and similarly $S_i$ is connected to every party except for $A_i$. For $N=3$, this corresponds to the triangle network (\emph{without} shared randomness).

Consider now a nonzero integer $K$. We call \emph{$K^{\text{th}}$-order nonfanout inflation of $\cN$} any network $\cI$ composed out of $K$ copies $S_i^1, \dots, S_i^K$ of each source $S_i$ and $K$ copies $A_j^1, \dots, A_j^K$ of each party $A_j$, such that the following inflation-compatibility rules are satisfied:
\begin{itemize}
    \item In $\cI$, any party $A_j^k$ is connected to the same number and same types of sources as in $\cN_N$,
    \item In $\cI$, any source $S_i^{k}$ connects the same number and same types of parties as in $\cN_N$.
\end{itemize}
There exist several nonfanout inflations of a given order. For instance, the case of the triangle ($N=3$) admits two distinct inflations of order $K=2$: one is two copies of the triangle, and the second is a hexagon.
The case of the tetrahedron ($N=4$) has six classes of inflations of order $K=2$ (defined up to graph isomorphism, see Figure~\ref{fig:InflationN=4}). 
Our arguments will often be based on the correlations shared in a sub-network of some large inflation. We call such sub-network an inflation cut. In this paper, most of our figures represent inflation cuts.
In the following, sub-network isomorphisms are of particular interest. A sub-network $G$ of the network $\cN$ or of its inflation $\cI$ consists in a subgroup of parties with all the sources these parties are connected to. 

We say that $(G_1, G_2)$, two sub-networks of ($\cN,\cI$) or of ($\cI,\cI$), are \emph{isomorphic} if they are isomorphic under the dropping of the indices of the parties and sources (because even if two copies of a same party or source have different indices, they are otherwise indistinguishable). A sub-network is defined by an \emph{ordered} list of parties. The ordering means that a nontrivial isomorphism may exist between two sub-networks of $\cI$ even if both sub-networks refer to precisely the same \emph{unordered} set of parties. Hence, an inflated network can be a sub-network of itself in a nontrivial way.
See Figure~\ref{Fig:3WayL2Classes} for an illustration.

In the following, when $R$ denotes a correlation --- that is, a probability distribution of some outputs given some inputs --- in some network $\cJ$, and if $G$ denotes a sub-network of $\cJ$, then $R_{|G}$ represents the marginal distribution of $R$ over the parties in $G$. If $G_1, G_2$ are two non-overlapping (that is, sharing no parties) sub-networks of $\cJ$, we write $G_1\cupdot G_2$ the sub-network of $\cJ$ composed of the parties of $G_1, G_2$ and of the sources they are connected to. We write $R_{|G_1\cupdot G_2}=R_{|G_1}\cdot R_{|G_2}$ to indicate that the marginal distribution can be factorized.

\subsection{Genuinely LO-multipartite-nonlocal correlations}\label{sec:LOMultiNonlocCorr}

We now formalize our causality principle. It first leads to a definition of \textbf{LO} theory-agnostic correlations, which we then extend to \textbf{LOSR} theory-agnostic correlations in Section~\ref{sec:LOSRMultiNonlocCorr}. 

\begin{definition}[${(N{-}1)}$-\textbf{LO} theory-agnostic correlation]\label{def:LOTheoryAgnosticCorrelation}
Consider an $N$-partite nonsignalling correlation $P$.
$P$ is said to be an \term{${(N{-}1)}$-LO theory-agnostic correlation} if, for every nonfanout inflation $\cI$ of $\cN_N$ (of any order), there exists a nonsignalling correlation $Q$ of the parties in $\cI$ such that:
\begin{enumerate}
    \item[(C1)]\label{c1}
    
    For all two $(G_1,G_2)$ sub-networks of $(\cI,\cN_N)$, if the two are isomorphic, then $Q_{|{G_1}}=P_{|{G_2}}$.

    \item[(C2)]\label{c2} For all two $(G_1,G_2)$ sub-networks of $(\cI,\cI)$, if the two are isomorphic, then $Q_{|{G_1}}=Q_{|{G_2}}$.

    \item[(C3)]\label{c3} For all two non-overlapping $(G_1$, $G_2)$ sub-networks of $(\cI,\cI)$, if the two have no sources in common, then $Q_{|{G_1\cupdot G_2}}=Q_{|{G_1}}\cdot Q_{|{G_2}}$.

\end{enumerate}
Note that $(C1)$ is a compatibility condition of $Q$ with $P$, whereas $(C2)$ and$(C3)$ are self-consistency conditions of $Q$ with itself.
\end{definition}

Note that the set of correlations in $\cN_N$ singled out by this definition has already been introduced in other works, under different names. 
In particular, it is the set of the generalized Markov correlations in $\cN_N$, introduced in~\cite{Henson2014}. 
It is also equivalent to the correlations restricted by the No Signalling and Independence principles of~\cite{GisinNSI}. There, $(C1)$ is seen as a consequence of an (extended) No Signalling principle, $(C3)$ is called the Independence principle, and $(C2)$ is implicit. 
In our paper, we view $(C1)$, $(C2)$, and $(C3)$ as consequences of causality. 
More precisely, more than a consequence of causality, these three conditions can actually be seen as the \emph{technical definition} of the intuitive notion of causality, once device replication is allowed.

Before introducing shared randomness, let us first remark that a random bit shared between three parties is not a ${2}$-\textbf{LO} theory-agnostic correlation, as proven in~\cite{Hensen2015}.
The proof can be easily extended to show that for any $N$, $N$-partite shared randomness is not an ${(N-1)}$-\textbf{LO} theory-agnostic correlation: see Figure~\ref{fig:InflSharedRaqndomness}.
As we justified, however, in our introduction, the concept of \textbf{LO} theory-agnostic correlation is not appropriate to discuss the simulability of Nature's correlations, as classical shared randomness is easily accessible.
Hence, we now adapt this definition to take into account a shared source of classical randomness $\lambda$.

\begin{figure}[htb]
    \centering
    \includegraphics[width=\linewidth]{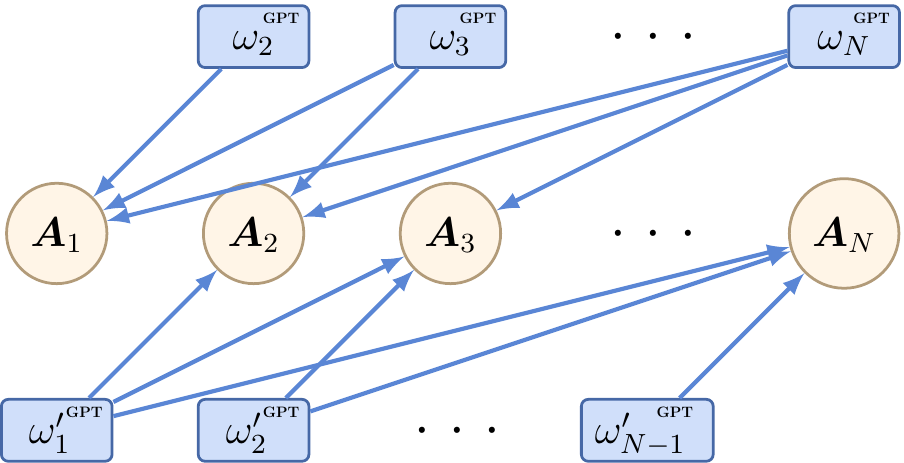}
    \caption{In this inflation $\cI$, each party $A_i$ is connected to the original sources $\omega_j$ for $j>i$ and to the cloned sources $\omega_j'$ for $j<i$.
    Assume by contradiction that there exists an arbitrary setup, with some causal GPT, that allows to simulate a shared random bit in $\cN_N$. In $\cI$, $(C1)$ imposes that two consecutive parties $A_j, A_{j+1}$ share an identical random bit. This implies that any chain of consecutive parties should all together share the same random bit.
    In particular, $(A_1, A_N)$ share the same random bit, which is in contradiction with (C2) as they do not have any sources in common.
    }\label{fig:InflSharedRaqndomness}
\end{figure}

\subsection{Genuinely LOSR-multipartite-nonlocal correlations}\label{sec:LOSRMultiNonlocCorr}

Consider an $N$-partite correlation $P$ which is obtained from a physical process in a scenario involving arbitrary causal GPT resources distributed in $\cN_N$, complemented by shared randomness. 
For any given randomness outcome $\lambda_0$, we obtain a distribution $P_{\lambda_0}$, which is a ${(N{-}1)}$-LO theory-agnostic correlation (note that, \emph{a priori}, it does not have the same marginal as $P$). Writing $\dd\mu(\lambda_0)$ the probability density of a given $\lambda_0$, $P$ can then be written as $P=\int \dd\mu(\lambda) P_{\lambda}$. This discussion motivates the following definition:

\begin{definition}[${(N{-}1)}$-\textbf{LOSR} theory-agnostic correlation]\label{def:LOSRTheoryAgnosticCorrelation}
$P$ is said to be an \term{${(N{-}1)}$-{LOSR} theory-agnostic correlation} if it is a convex mixture of ${(N{-}1)}$-\emph{LO} theory-agnostic correlations. More precisely, the latter implies that there exists a random variable $\lambda$ of density $\dd\mu(\lambda)$ such that $P=\int \dd\mu(\lambda) P_{\lambda}$, and that for every any-order nonfanout inflation $\cI$ of $\cN_N$, there exists nonsignalling correlations $Q_\lambda$ of the parties in $\cI$ such that for all $\lambda$, $Q_\lambda$ satisfies $(C1)$ with respect to $P_\lambda$, as well as $(C2)$ and $(C3)$ with respect to $\cI$.\\
Note that if we introduce $Q\coloneqq\int \dd\mu(\lambda) Q_{\lambda}$, the above conditions imply that $Q$ satisfies $(C1)$ with respect to $P$ and that $Q$ itself satisfies $(C2)$ via linearity of integration. 
\end{definition}

We can now define genuinely \textbf{LOSR}-multipartite-nonlocal correlations.
\begin{definition}[Genuine \textbf{LOSR} multipartite nonlocality]\label{def:GenuineLOSRMultip}
An $N$-partite nonsignalling correlation $P$ is said to be \term{genuinely LOSR multipartite nonlocal} if and only if it is not an ${(N{-}1)}$-LOSR theory-agnostic correlation.
\end{definition}

Note that Definition~\ref{def:LOSRTheoryAgnosticCorrelation}~and~\ref{def:GenuineLOSRMultip} are quite difficult to manipulate, in particular because $(C3)$ is a nonlinear constraint. 
In Section~\ref{sec:NumericalNoiseTolerence}, we propose a relaxation of this set in which we drop this condition. There we show that a weaker --- but more practical --- notion of factorization survives, related to the De Finetti theorem.

\begin{figure}[htb]
    \centering
        \includegraphics{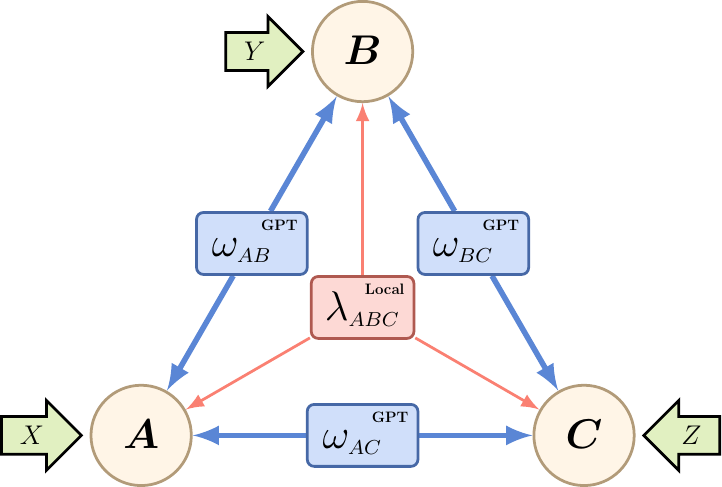}
        
        \caption{A tripartite distribution is \emph{genuinely tripartite nonlocal} according to our Definition~\ref{def:GenuineLOSRMultip} if it is not a ${2-}$LOSR theory-agnostic correlation, that is if it cannot be realized by the above scenario, where the output of each player is determined by local operations (such as joint measurements) on 1)~its input, 2)~the 3-way randomness and 3)~2-way GPT resources.  \label{fig:GPT_triangle}}
\end{figure}

\section{\texorpdfstring{$\ket{{\rm GHZ}_N}$}{ket(GHZ-N)} and \texorpdfstring{$\ket{{\rm W}}$}{ket(W)} create genuinely LOSR-multipartite-nonlocal correlations}\label{sec:GenuineLOSRMultipartitenessGHZW}

In order to explore constraints on ${(N{-}1)}$-\textbf{LOSR} theory-agnostic correlations, we are required to move beyond the case of no-input networks. This is a consequence of the fact that in the presence of shared randomness \emph{any} correlation is compatible with \emph{every} no-input network. Consequently, hereafter we consider exclusively networks with inputs.

In the following Section~\ref{sec:GHZ3}, we show that $\ket{{\rm GHZ}_3}$ can create genuinely $3$-partite nonlocal correlations. 
We also prove a similar result for $\ket{{\rm W}}$ in Section~\ref{Sec:Wstate}. 
Most importantly, we extend this first result in Section~\ref{sec:GHZN} to show that $\ket{{\rm GHZ}_N}$ can create genuinely $N$-partite nonlocal correlations.
This is the main result of this paper, which proves:
\begin{theorem}[Nature is not merely $N$-partite]\label{thm:NatureNotNloc}
Under the hypothesis that quantum mechanics’ predictions for local measurements over $\ket{{\rm GHZ}_N}$ are correct, Nature is not merely $N$-local. More precisely, there exist correlations which cannot be explained by any $N$-partite causal resources and shared randomness.
\end{theorem}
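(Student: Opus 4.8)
The plan is to exhibit, for every $N$, an explicit family of correlations obtained by local measurements on $\ket{\GHZ_N}$ (and on its noisy mixtures) that fails to be an ${(N{-}1)}$-\textbf{LOSR} theory-agnostic correlation in the sense of Definition~\ref{def:LOSRTheoryAgnosticCorrelation}, hence is genuinely \textbf{LOSR} multipartite nonlocal. Since any correlation produced from the $N$-partite state $\ket{\GHZ_N}$ is a fortiori producible from $N$-partite causal resources together with shared randomness, establishing this for all $N$ is exactly the statement "Nature is not merely $N$-local": reindexing $N\mapsto N{+}1$ gives correlations inexplicable by $N$-partite causal resources plus shared randomness, for every $N$.

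I would treat $N=3$ first (Section~\ref{sec:GHZ3}). Choose measurement settings on $\ket{\GHZ_3}$ that include (i) a correlation input on which the three parties output a perfectly correlated uniform bit --- the \textsf{Same} box --- and (ii) complementary "Bell" inputs whose quantum statistics are incompatible with the correlation input ever being reproduced by $\lambda$-deterministic local outputs. Assume for contradiction that the resulting correlation $P$ is $2$-\textbf{LOSR} theory-agnostic, $P=\int\dd\mu(\lambda)P_\lambda$ with each $P_\lambda$ a $2$-\textbf{LO} theory-agnostic correlation. Because a convex combination supported on the diagonal has every component supported on the diagonal, each $P_\lambda$ must also exhibit perfectly correlated outputs on the correlation input, say with bias $p_\lambda$. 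Passing to a nonfanout inflation $\cI$ of the triangle $\cN_3$ (the hexagon, or a larger cyclic inflation) and invoking the correlations $Q_\lambda$ guaranteed by Definition~\ref{def:LOTheoryAgnosticCorrelation}: $(C1)$ identifies two-party marginals of $Q_\lambda$ along adjacent edges with those of $P_\lambda$; $(C3)$ forces the marginal on any two inflation parties sharing no source to factorize; $(C2)$ supplies the copy-symmetry. Chaining the \textsf{Same}-box constraint around the cycle --- exactly as in the classical impossibility illustrated by Figure~\ref{fig:InflSharedRaqndomness} --- forces $p_\lambda\in\{0,1\}$ for $\mu$-almost every $\lambda$, i.e. the correlation-input outputs are $\lambda$-deterministic; feeding this back into the constraints coming from the Bell inputs then yields the contradiction.

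For noise tolerance I would replace every equality used above by a robust linear estimate, so that the argument becomes a single linear functional $\ell$ with $\ell(P)\le c$ for all $2$-\textbf{LOSR} theory-agnostic $P$ and $\ell\bigl(P_{\ket{\GHZ_3},v}\bigr)>c$ for visibility $v$ above an explicit threshold; an analogous argument on a suitable tripartite measurement of $\ket{\W}$ handles the W case (Section~\ref{Sec:Wstate}). The generalization to arbitrary $N$ (Section~\ref{sec:GHZN}) follows the same template: the $N$-way \textsf{Same} box sits inside $\ket{\GHZ_N}$, a subset of fewer than $N$ parties conditioned on the others still retains a Bell-nonlocal resource, and one selects a nonfanout inflation of $\cN_N$ --- a cyclic arrangement playing the role of the hexagon --- and re-runs the chaining of $(C1)$, $(C2)$, $(C3)$.

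The main obstacle is the interaction between the convex mixture over $\lambda$ and the \emph{nonlinear} condition $(C3)$: one cannot integrate $(C3)$ over $\lambda$, so factorization must be exploited at the level of each $P_\lambda$ while compatibility with the observed $P$ is only available after integration. Carrying this split bookkeeping through the inflation --- and, for general $N$, choosing an inflation whose cut structure simultaneously lets $(C1)$ import the \textsf{Same}-box and Bell features and lets $(C3)$ forbid them --- is the delicate part; it is also what forces, in the quantitative treatment of Section~\ref{sec:NumericalNoiseTolerence}, a relaxation in which $(C3)$ is weakened to a De Finetti-type consequence amenable to linear programming.
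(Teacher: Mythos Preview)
Your high-level architecture --- a \textsf{Same} input coexisting with Bell inputs, analysed through a nonfanout inflation --- is indeed the paper's, but the mechanism you propose for extracting the contradiction is genuinely different from what the paper does, and that difference matters.

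\medskip

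\textbf{Your route.} You use $(C3)$ essentially: in the hexagon, chaining the perfect \textsf{Same} correlations along adjacent edges and then invoking factorisation between parties with no common source forces the \textsf{Same}-input outputs to be $\lambda$-deterministic, $p_\lambda\in\{0,1\}$. That step is correct. The final step, which you leave as ``feeding this back into the Bell inputs,'' \emph{can} be completed --- once $A_0$ is deterministic inside each no-signalling $P_\lambda$, the bipartite marginal of Alice and Bob (even after conditioning on Charlie's steering outcome) satisfies $\mathrm{CHSH}\le 2$ --- but you do not say this, and it is not entirely obvious; it requires a short argument (or the monogamy lemma) that you should make explicit. Note also that the unconditioned Alice--Bob marginal of $\ket{\GHZ_3}$ is separable, so the Bell part \emph{must} be the steered CHSH conditioned on Charlie, which your description does not make clear.

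\medskip

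\textbf{The paper's route.} The paper \emph{never invokes $(C3)$}. Instead it works entirely with $(C1)$ and $(C2)$ --- which survive integration over $\lambda$ --- together with the no-signalling monogamy inequality of Augusiak \emph{et al.}: in a suitable order-$3$ inflation cut one has $I_{\Bell}^{C_1^1=1}\!\circ\!\{A^1B^1\}+2\langle A_0^1 C_0^2\rangle_{C_1^1=1}\le 4$, then the isomorphism $\{A^1,C^2\}\simeq\{A^2,C^2\}$ (a pure $(C2)$ step) and an elementary transitivity bound convert $\langle A_0^1 C_0^2\rangle$ into $I_{\same}$. The output is the explicit linear inequality of Proposition~\ref{prop:ghz3-technical}, violated by $\ket{\GHZ_3}$, with the $N$-party version (Proposition~\ref{prop:alt-xavierclaim-general}) obtained by the same template.

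\medskip

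\textbf{What each buys.} Your argument is conceptually clean in the exact case and does establish the theorem as stated. But your claim that ``for noise tolerance I would replace every equality by a robust linear estimate'' is where the approaches diverge sharply: the determinism conclusion $p_\lambda\in\{0,1\}$ is produced by the \emph{nonlinear} constraint $(C3)$, and it degrades badly under noise --- a slightly imperfect \textsf{Same} box no longer forces near-determinism $\lambda$-by-$\lambda$, only on average, and the split bookkeeping you flag as ``the delicate part'' does not straightforwardly linearise. The paper sidesteps this entirely: because only $(C1)$, $(C2)$ are used, the resulting bound is linear from the outset and immediately noise-robust (this is exactly why the relaxation of Section~\ref{sec:NumericalNoiseTolerence} drops $(C3)$). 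So the paper's mechanism is not merely a different packaging of yours; it is structurally better suited to the noise-tolerant claim that accompanies the theorem.
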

\begin{proof}
This theorem is proven by Proposition~\ref{prop:alt-xavierclaim-general} below. Note that this proof is noise tolerant.
\end{proof}

\subsection{The \texorpdfstring{$\ket{{\rm GHZ}_3}$}{ket(GHZ-3)} quantum state produces genuinely LOSR-tripartite-nonlocal correlations}\label{sec:GHZ3}

\begin{figure*}[htb]\centering
    \includegraphics[width=\linewidth]{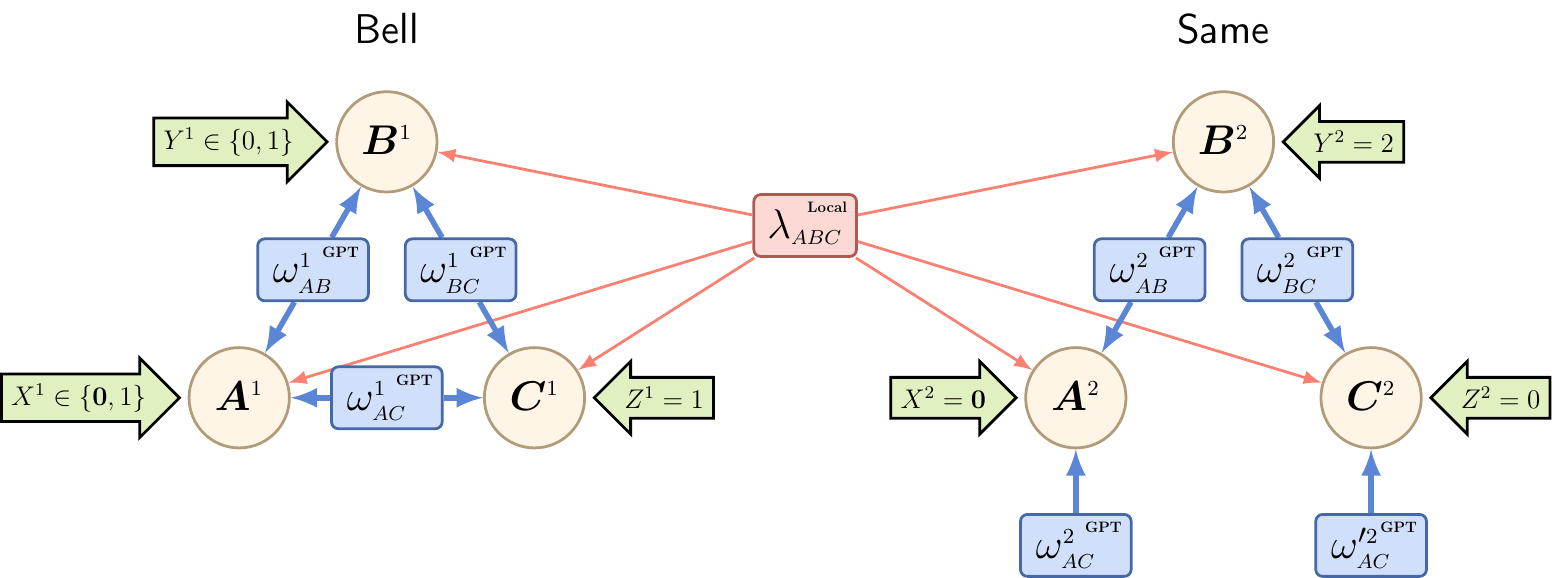}
    \caption{
    The inflation technique consists of duplicating and rearranging players, sources, and input distributions. 
    Here we inflate the (non genuinely tripartite-nonlocal) triangle scenario of Figure~\ref{fig:GPT_triangle} as to have the players play two parallel games ($\Bell$ and $\same$).
    It leads to a contradiction with the statistics of measurements on $\ket{\rm GHZ}$, and therefore to the conclusion that the $\ket{\rm GHZ}$ quantum state is a genuinely tripartite-nonlocal resource.
    The duplicated players are indistinguishable copies of the same abstract process, hence Alice, on input $X{=}\mathbf{0}$, could be playing either game ($A_1$ and $A_2$ must have the same behaviour). The only condition on the random inputs is that they be independent from all of the sources.
    The figure represents a cut of a larger inflation of order 3, consisting of a triangle and a hexagon (three parties of the hexagon are here fully ignored and only the input values relevant for the contradiction are featured).
    \label{fig:MinInflationArgument}}
\end{figure*}

In this section, we prove that the state $\ket{{\rm GHZ}_3}$ can produce genuinely LOSR-tripartite-nonlocal correlations. 
To this end, we first prove the following proposition, which states a constraint for all $2$-LOSR theory-agnostic correlations.
Then, we show that appropriate local measurements of $\ket{{\rm GHZ}_3}$ violate this constraint.
It generalizes Proposition~3 of \cite{PRL} to any value of $\langle C_1\rangle $. This generalization is of particular interest from an experimental perspective.

\begin{prop}[{${\rm GHZ}_3$, technical}\label{prop:ghz3-technical}]
In the absence of any 3-way nonclassical cause,
\begin{align}
 I_\Bell^{C_1{=}1} + \frac{4I_\same}{1+\langle C_1\rangle} \le 6+\frac{4-4\langle C_1\rangle}{1+\langle C_1\rangle}\,.
 \tagprop{prop:ghz3-technical}
 \label{eq:ghz3-technical}
\end{align}
Measurements on the $\ket{\rm GHZ_3}$ quantum state can violate the above by reaching $I_\Bell^{C_1{=}1}+\frac{4I_\same}{1+\langle C_1\rangle}=2\sqrt{2}+8>10$.
\end{prop}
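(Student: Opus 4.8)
The plan is to establish the inequality \eqref{eq:ghz3-technical} by the nonfanout-inflation method, using the specific order-3 inflation depicted in Figure~\ref{fig:MinInflationArgument} (a triangle together with a hexagon), and then exhibit explicit Pauli measurements on $\ket{\rm GHZ_3}$ that overshoot the bound. First I would fix notation: $I_\Bell$ is a CHSH-type functional evaluated on a pair of the three parties with the third party's outcome used as a promise (hence the superscript $C_1{=}1$ indicating we condition on, or weight by, the event that the ``which-game'' correlator $C_1$ takes value $1$), and $I_\same$ is the success metric for the auxiliary ``$\same$'' game in which two copies of a party must output identical values. I would write both $I_\Bell^{C_1{=}1}$ and $I_\same$ as explicit linear combinations of the conditional probabilities appearing in the inflated correlation $Q$, so that the target inequality becomes a linear inequality in the entries of $Q$.

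The core of the argument is then purely structural. Assume $P$ is a $2$-LOSR theory-agnostic correlation; by Definition~\ref{def:LOSRTheoryAgnosticCorrelation} it suffices to prove the bound for each $2$-LO component $P_\lambda$ and average, so I would assume outright that $P$ is $2$-LO theory-agnostic and produce the corresponding $Q$ on the chosen inflation $\cI$. The key steps are: (i) use $(C1)$ to pin the single-party and two-party marginals of $Q$ on isomorphic cuts to the corresponding marginals of $P$ — in particular, the Bell-game marginals and the correlator $\langle C_1\rangle$ are inherited from $P$; (ii) use $(C2)$ to identify the statistics of a party with those of its copy, which is exactly what forces ``Alice on input $\mathbf 0$ behaves the same whether she is in the $\Bell$ game or the $\same$ game'' — the indistinguishability of the duplicated process; (iii) use $(C3)$, the factorization across source-disjoint sub-networks, to decorrelate the two parallel games played in the hexagon cut. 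Chaining (i)–(iii) across the triangle cut and the hexagon cut yields a linear relation among the $Q$-entries which, after eliminating the auxiliary variables, collapses to \eqref{eq:ghz3-technical}; the $\frac{1}{1+\langle C_1\rangle}$ weighting is the natural normalization that makes the two games comparable when $C_1$ is noisy, and the right-hand side $6+\frac{4-4\langle C_1\rangle}{1+\langle C_1\rangle}$ is what remains after the algebra.

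For the quantum violation I would take $\ket{\rm GHZ_3}=\tfrac{1}{\sqrt2}(\ket{000}+\ket{111})$ and choose the standard CHSH-optimal pair of settings on two parties (giving the Bell value $2\sqrt2$) together with a $Z$-measurement on the third party so that $\langle C_1\rangle=1$ exactly; with $\langle C_1\rangle=1$ the constraint reads $I_\Bell^{C_1{=}1}+2I_\same\le 6$, while the GHZ correlations give $I_\Bell^{C_1{=}1}=2\sqrt2$ and a perfect $\same$ value contributing $8$, for a total $2\sqrt2+8>10$. I would verify these two correlator computations directly from the stabilizer structure of $\ket{\rm GHZ_3}$ (the relevant products of Paulis are stabilizers, so the expectation values are $\pm1$ on the nose), which is a short and routine calculation.

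The main obstacle is step (ii)–(iii): correctly identifying \emph{which} sub-network isomorphisms of the order-3 inflation are in play and checking that the cuts used for the $\Bell$ game and the $\same$ game are simultaneously realizable inside one valid nonfanout inflation with the source-disjointness required by $(C3)$. This is the delicate bookkeeping that the inflation technique always demands — one must exhibit the inflation explicitly, confirm it satisfies the two inflation-compatibility rules, and track the copy indices through every application of $(C1)$, $(C2)$, $(C3)$ so that no illegitimate identification sneaks in. Once the inflation is drawn correctly (as in Figure~\ref{fig:MinInflationArgument}) the algebra eliminating the auxiliary $Q$-entries down to \eqref{eq:ghz3-technical} is mechanical; making the $\langle C_1\rangle$-dependence come out exactly as stated, rather than with a looser constant, is the part that requires care.
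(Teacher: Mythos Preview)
Your outline has two substantive gaps.

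For the inequality: the paper does not obtain \eqref{eq:ghz3-technical} by ``eliminating auxiliary $Q$-entries'' from the structural constraints $(C1)$--$(C3)$ alone. The driving ingredient is an external \emph{monogamy-of-nonlocality} bound (Theorem~1 of Ref.~\cite{AugusiakMonogamy}), valid for any nonsignalling correlation, which in the inflated picture reads
\[
I_{\Bell}^{C^1_1{=}1}\circ\{A^1B^1\} + 2\langle A_0^1 C_0^2\rangle_{C^1_1{=}1}\le 4\,.
\]
A total-probability estimate then converts the conditioned correlator $\langle A_0^1 C_0^2\rangle_{C^1_1{=}1}$ into the unconditioned one (this is where the $\tfrac{1}{1+\langle C_1\rangle}$ weighting and the $\langle C_1\rangle$-dependent right-hand side come from), a $(C2)$-symmetry gives $\langle A_0^1 C_0^2\rangle=\langle A_0^2 C_0^2\rangle$, and the algebraic transitivity bound $\langle A_0^2 C_0^2\rangle\ge \langle A_0^2 B_2^2\rangle+\langle B_2^2 C_0^2\rangle-1=I_\same\circ\{A^2B^2C^2\}-1$ closes the chain. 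Note that $(C3)$ is not used; because only nonsignalling, $(C1)$ and $(C2)$ enter, the argument applies directly to the LOSR-averaged $Q$ without your per-$\lambda$ decomposition --- which would in any case be problematic, since the inequality is not linear in $P$. Also, $I_\same=\langle A_0 B_2\rangle+\langle B_2 C_0\rangle$ is a correlator among the three \emph{original} parties, not a ``two copies output identical values'' game.

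For the violation: your choice $\langle C_1\rangle=1$ is wrong and internally inconsistent. On $\ket{\rm GHZ_3}$ a single-qubit $Z$-measurement has $\langle C_1\rangle=0$, not $1$; and if Charlie's $Z$-outcome were deterministically $+1$ the residual Alice--Bob state would be the product $\ket{00}$, so $I_\Bell^{C_1{=}1}=2\sqrt2$ would be impossible. The paper has Charlie measure in the \emph{Hadamard} basis on input $Z{=}1$, so $\langle C_1\rangle=0$; conditioned on $C_1{=}1$ (outcome $\ket{+}_C$) Alice--Bob hold $\ket{\phi^+}$ and achieve $I_\Bell^{C_1{=}1}=2\sqrt2$, while rectilinear measurements on the $\same$ inputs give $I_\same=2$. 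With $\langle C_1\rangle=0$ the bound is $10$ and the left side is $2\sqrt2+8$ --- which is also why you ended up comparing to $10$ rather than to the $6$ your own parameter choice would dictate.
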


In the above, $I_\Bell^{C_1{=}1}\le4$ and $I_\same\le2$ are respectively defined through the following two tasks:
\begin{enumerate}
    \item[i.] The standard CHSH game between Alice and Bob, with the particularity that it is scored only when Charlie outputs $C{=}1$ (the observables take value in $\{-1,+1\}$):
\begin{align}\label{eq:BellStandard}
I_\Bell^{C_1{=}1}\coloneqq   \langle A_0B_0 &\rangle_{C_1{=}1} + \langle A_0B_1 \rangle_{C_1{=}1} \nonumber\\&+\langle A_1B_0 \rangle_{C_1{=}1} -\langle A_1B_1 \rangle_{C_1{=}1} \,.
\end{align}
    \item[ii.] A game whose goal is for all players to output the same result (\emph{i.e.}, either all $+1$ or all $-1$):
\begin{align}\label{eq:defsame}
I_\same\coloneqq & \langle A_0B_2 \rangle +\langle B_2C_0 \rangle \,.
\end{align}
\end{enumerate}
Note that $A_0\coloneqq A_{X{=}0}$ belongs to both games; Alice is oblivious on that input and thus she cannot adopt a different strategy for the first and second task.
In the following, we first prove the inequality~\ref{eq:ghz3-technical}. Then, we show that the $\ket{\rm GHZ_3}$ state violates it.

\begin{proof}[Proof of Eq.~\eqref{eq:ghz3-technical}]
The proof is based on the inflation argument illustrated in Figure~\ref{fig:MinInflationArgument}. There are four main steps to the proof:

First is the idea behind device-independent randomness certification: \emph{True} randomness is a necessary condition to the violation of Bell inequalities --- if a third party Charlie can guess Alice's input, then Alice and Bob can only win Bell's game with limited success (\textit{i.e.}, $\Bell$ rewards nonclassical resources).
This true randomness is quantified by Theorem~1 of Ref.~\citep[Eq.~(2)]{AugusiakMonogamy}, which states in our case (note that the inequality remains valid when conditioned on $C^1_1=1$ because $C^1$ is space-time separated from $A^1B^1C^2$\/\footnote{Note also that the original theorem in Ref.~\citep[Eq.~(2)]{AugusiakMonogamy} is formulated for (amongst others) the $I_{{\rm BKP}_2}$ Barrett-Kent-Pironio~\cite{BKP} correlations of parameters $M{=}2$ and $d{=}2$, but they are equivalent (up to a relabelling symmetry) to the standard CHSH quantity.})  that

\begin{align}\label{eq:monogamyM-A}\begin{split}
I_{\Bell}^{C^1_1{=}1}\circ \{A^1B^1\} + 2\langle A_0^1C_0^2\rangle_{C^1_1{=}1} \le  4\,.
\end{split}\end{align}
The $\circ$ notation is here used to specify that the $I_{\Bell}^{C^1_1{=}1}$ quantity
is computed over the players Alice-1 and Bob-1, which are the players on the left-hand side of the inflated graph in Figure~\ref{fig:MinInflationArgument} .

Second, we bound $\langle A_0^1C_0^2\rangle_{C^1_1{=}1}$ with $\langle A_0^1C_0^2\rangle$:
For any two events $\{E_1,E_2\}$, the law of total probability implies the bound
\begin{subequations}\begin{align}
&P(E_1,E_2)= P(E_1)-P(E_1,\neg E_2)\label{eq:second_a}\,,\\
\therefore\quad &P(E_1|E_2)= \frac{P(E_1)-P(E_1,\neg E_2)}{P(E_2)}
\ge \frac{P(E_1)-P(\neg E_2)}{P(E_2)}\,.\label{reasoningA8}
\end{align}\end{subequations}
($\neg E_2$ represents the negation of event $E_2$, so $P(E_2)=1-P(\neg E_2)$.)
In our case, we apply the reasoning of Eq.~\eqref{reasoningA8} to the probabilities $P(A^1_0=C^2_0|C_1^1{=}1) = (1+{\langle A^2_0C^2_0 \rangle_{C_1^1{=}1}})/2$ and $P(C^1_1=\pm 1)= (1\pm \langle C^1_1 \rangle)/2$. It leads to the worst-case bound
\begin{align}
\frac{1+\langle A^1_0C^2_0 \rangle_{C_1^1{=}1} }{2} &\ge  \frac{\frac{1+\langle A^1_0C^2_0 \rangle  }{2}-\frac{1-\langle C^1_1\rangle}{2}}{\frac{1+\langle C^1_1\rangle}{2}}\\
\iff  \langle A^1_0C^2_0 \rangle_{C_1^1{=}1} &\ge  \frac{2\langle A^1_0C^2_0 \rangle +2\langle C^1_1\rangle}{1+\langle C^1_1\rangle}-1\,.\label{eq:worst-case-bound}
\end{align}
We use Ineq.~\eqref{eq:worst-case-bound} to rewrite Ineq.~\eqref{eq:monogamyM-A},
\begin{equation}
I_\Bell^{C^1_1{=}1}\circ\{A^1B^1\} +\frac{4\langle A^1_0C^2_0 \rangle +4\langle C^1_1\rangle}{1+\langle C^1_1\rangle}  \le 6\,.\label{equationA44}
\end{equation}

Third, we enter $I_\same$ into the equation:
\noindent We remark that $A^1_0C^2_0 \sim A^2_0C^2_0$ ($\sim$ denotes that the two joint distributions are similarly distributed). This can be seen by observing the inflation in Figure~\ref{fig:MinInflationArgument}: The view of the couple \{Alice-1, Charlie-2\} is exactly the same as the one of the couple \{Alice-2, Charlie-2\}; namely, the joint distributions of all their input resources are identical. One conclusion is hence that 
\begin{align}\label{eq:Isameswitch}
\langle{A^1_0C^2_0}\rangle=\langle{A^2_0C^2_0}\rangle\,.
\end{align}

\noindent We then use an algebraic argument applied to inflation; we find from reformulating Ref.~\citep[App.~A, Eq.~(3)]{NetworkEntanglement2020} that

\begin{align}%
\langle A^2_0 C^2_0\rangle&\geq \langle A^2_0 B^2_2 \rangle + \langle B^2_2 C^2_0 \rangle  - 1
= I_\same\circ\{A^2B^2C^2\} -1 \,.
\label{eq:transitivity-A}
\end{align}

\noindent We now link Ineq.~\eqref{equationA44} and Eq.~\eqref{eq:transitivity-A}, thanks to Ineq.~\eqref{eq:Isameswitch}, and obtain
\begin{align}
    I_\Bell^{C^1_1{=}1}\circ\{A^1B^1\} + \frac{4I_\same\circ\{A^2B^2C^2\}}{1+\langle C^1_1\rangle} \le 6+\frac{4-4\langle C^1_1\rangle}{1+\langle C^1_1\rangle}\,.\label{eqA12}
\end{align}

At last, fourth, we apply the standard lemmas of the inflation technique to recognize that
\begin{subequations}\begin{align}
A^1 B^1C^1X^1Y^1Z^1&\sim ABCXYZ\,,\label{eq:inflationlemma1}\\
A^2 B^2 X^2 Y^2 \sim ABXY&\text{~and~} B^2 C^2 Y^2 Z^2\sim BCYZ\,,
\end{align}\end{subequations}
such that respectively
\begin{subequations}\begin{align}
I_\Bell^{C^1_1{=}1}\circ \{A^1B^1\} = I_\Bell^{C_1{=}1} \text{~and~}\langle C^1_1\rangle=\langle C_1\rangle\,,\\
\text{and~} I_\same\circ\{A^2B^2C^2\} = I_\same\,
.
\end{align}\end{subequations}
As such, Eq.~\eqref{eqA12} --- which applies to the specific inflated-scenario experiment of Figure~\ref{fig:MinInflationArgument} --- is transformed into the general statement of Proposition~\ref{prop:ghz3-technical}.

\end{proof}

\begin{proof}[Proof of violation]
The quantum violation is achieved using $\ket{\rm GHZ}$:
On inputs corresponding to the $\same$ game ($XYZ{=}020$), all players measure in the rectilinear basis. On input $Z{=}1$, Charlie measures his state in the Hadamard basis and obtains marginal $\langle C_1\rangle =0$; when he obtains $C_1{=}1$ (corresponding to $\ket{+}_C$), the state of Alice and Bob is steered towards the maximally entangled state $\ket{\phi^+}_{AB}$ and they can play the $\Bell$ game using the standard optimal strategy for CHSH.\end{proof}

\begin{figure*}[htb!]\centering
    \includegraphics[width=\linewidth]{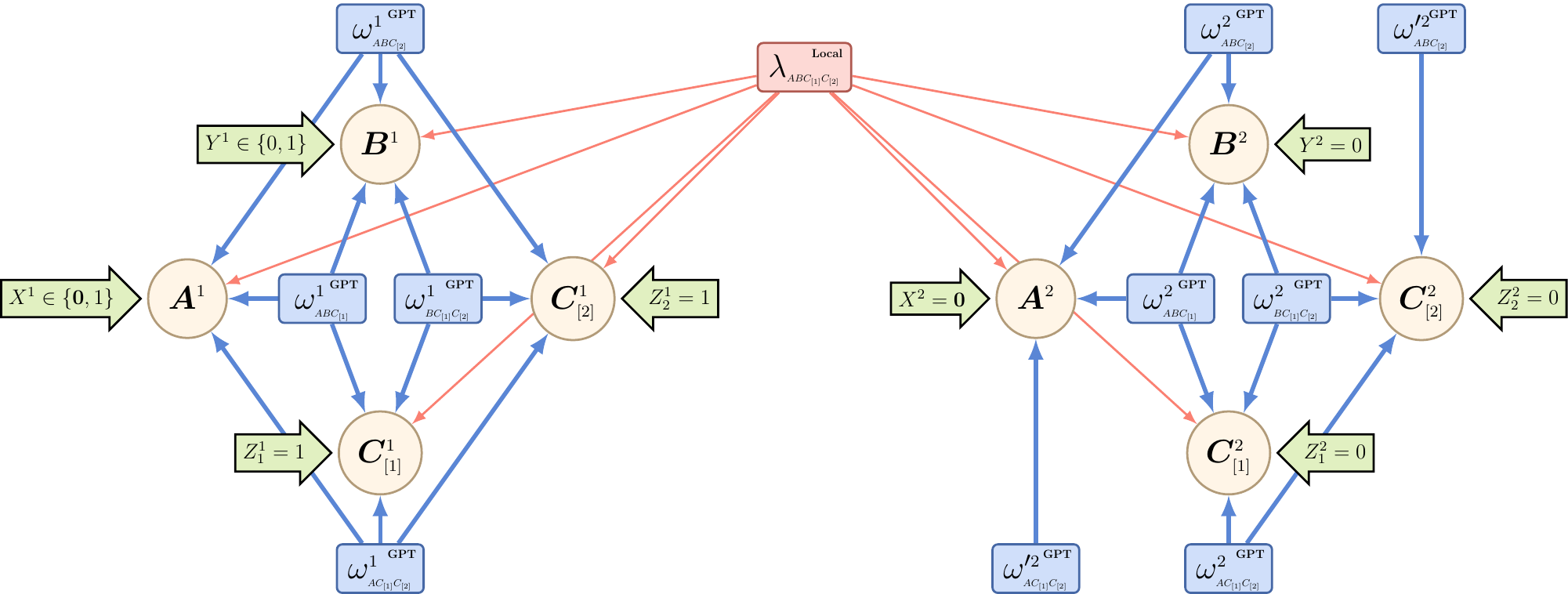}
    \caption{This 4-party nonfanout inflation cut exposes that the quantum state $\ket{\rm GHZ_4}\coloneqq (\ket{0000}+\ket{1111})/\sqrt{2}$ is a genuinely 4-partite nonlocal resource. }
    \label{fig:MinInflationArgument4p}
\end{figure*}

\subsection{The \texorpdfstring{$\ket{{\rm GHZ}_N}$}{ket(GHZ-N)} quantum state produces genuinely LOSR \texorpdfstring{$N$}{N}-multipartite nonlocal correlations}\label{sec:GHZN}

We now generalize Proposition~\ref{prop:ghz3-technical} to all $N$-party scenarios in Eq.~\eqref{eq:mainprop-N} of the following proposition.
The violation of this inequality by the $\ket{{\rm GHZ}_N}$ state (see below) provides the proof of Theorem~\ref{thm:NatureNotNloc} and of Proposition~5 of~\cite{PRL}.
\begin{prop}\label{prop:alt-xavierclaim-general} In the absence of any $N$-way nonclassical common cause,
\begin{align}
    I_{\Bell}^{\tilde{C}_1{=}1} + \frac{4I_{\same_N}}{1+\langle \tilde{C}^1_1\rangle} \le 6+ \frac{4(N-2)-4\langle \tilde{C}^1_1\rangle}{1+\langle \tilde{C}^1_1\rangle} \,,\label{eq:mainprop-N}
\tagprop{prop:alt-xavierclaim-general}
\end{align}
where the game scores $I_{\Bell}^{\tilde{C}_1{=}1}$ and $I_{\same_N}$ are defined below in~\eqref{eq:BellStandard-N} and~\eqref{eq:defsame-N}.
Measurements on the $\ket{{\rm GHZ}_N}$ quantum state can violate the above inequality.
\end{prop}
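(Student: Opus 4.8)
The plan is to replay the four-step inflation argument behind Proposition~\ref{prop:ghz3-technical}, upgrading its single ``transitivity'' step to a chain of length $N-1$; the $N=3$ inflation (a triangle glued to a hexagon) is replaced by the $N$-party nonfanout inflation whose $N=4$ instance is drawn in Figure~\ref{fig:MinInflationArgument4p}. First I would fix one nonfanout inflation of $\cN_N$ that simultaneously contains a ``Bell copy'' $A^1B^1\cdots$, in which Alice-$1$ and Bob-$1$ play CHSH scored on the guessing event $\tilde{C}^1_1{=}1$ with the $\tilde{C}^1$ parties space-time separated from $A^1B^1\tilde{C}^2$; a ``$\same_N$ copy'' $A^2B^2\cdots$, in which the $N$ parties play $\same_N$ along the path of \eqref{eq:defsame-N}; and a crossed sub-network joining Alice-$1$ to the $\tilde{C}^2$ parties. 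Conditions $(C1)$--$(C3)$ then furnish a joint nonsignalling correlation $Q$ on this inflation reproducing the relevant marginals of $P$ (and agreeing across mutually isomorphic cuts).

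The algebra then mirrors the proof of \eqref{eq:ghz3-technical}. (i)~The CHSH-monogamy bound of Ref.~\citep[Eq.~(2)]{AugusiakMonogamy}, still valid conditionally on $\tilde{C}^1_1{=}1$ by space-time separation, gives $I_{\Bell}^{\tilde{C}^1_1{=}1}\circ\{A^1B^1\}+2\langle A^1_0\tilde{C}^2_0\rangle_{\tilde{C}^1_1{=}1}\le 4$. (ii)~Removing the conditioning exactly as in \eqref{reasoningA8}, with $P(\tilde{C}^1_1{=}{\pm}1)=(1\pm\langle\tilde{C}_1\rangle)/2$, turns this into $I_{\Bell}^{\tilde{C}^1_1{=}1}\circ\{A^1B^1\}+\frac{4\langle A^1_0\tilde{C}^2_0\rangle+4\langle\tilde{C}^1_1\rangle}{1+\langle\tilde{C}^1_1\rangle}\le 6$. (iii)~The inflation symmetry $A^1_0\tilde{C}^2_0\sim A^2_0\tilde{C}^2_0$ moves everything to the second copy, where iterating the elementary $\pm1$ bound $\langle XZ\rangle\ge\langle XY\rangle+\langle YZ\rangle-1$ of Ref.~\citep[App.~A, Eq.~(3)]{NetworkEntanglement2020} along the $N-1$ edges of the $\same_N$ path yields $\langle A^2_0\tilde{C}^2_0\rangle\ge I_{\same_N}\circ\{A^2B^2\cdots\}-(N-2)$; substituting gives
\begin{equation*}
I_{\Bell}^{\tilde{C}^1_1{=}1}\circ\{A^1B^1\}+\frac{4\,I_{\same_N}\circ\{A^2B^2\cdots\}}{1+\langle\tilde{C}^1_1\rangle}\le 6+\frac{4(N-2)-4\langle\tilde{C}^1_1\rangle}{1+\langle\tilde{C}^1_1\rangle}.
\end{equation*}
(iv)~The standard inflation lemmas ($A^1B^1\cdots X^1Y^1\cdots\sim AB\cdots XY\cdots$, and the analogue for each edge of the $\same_N$ path in the second copy) replace every inflated quantity by its original counterpart, producing \eqref{eq:mainprop-N}.

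For the violation I would use $\ket{\GHZ_N}$ just as for $N=3$: on the $\same_N$ inputs all parties measure rectilinearly, so every path correlator equals $+1$ and $I_{\same_N}=N-1$; on the Bell inputs the $\tilde{C}$ parties measure so that $\langle\tilde{C}_1\rangle=0$ and conditioning on $\tilde{C}_1{=}1$ steers Alice and Bob to $\ket{\phi^+}_{AB}$, whereupon the optimal CHSH strategy gives $I_{\Bell}^{\tilde{C}_1{=}1}=2\sqrt2$. The left-hand side of \eqref{eq:mainprop-N} is then $2\sqrt2+4(N-1)$, which exceeds the right-hand side $6+4(N-2)=4N-2$ since $2\sqrt2>2$.

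The real work is the inflation bookkeeping for general $N$: exhibiting a single nonfanout inflation of $\cN_N$ in which the crossed pair $\{A^1,\tilde{C}^2\}$, the full Bell copy, and every consecutive pair of the $\same_N$ path in the second copy all occur as sub-networks isomorphic to sub-networks of $\cN_N$ or of the inflation itself, so that $(C1)$--$(C3)$ simultaneously yield the symmetry $A^1_0\tilde{C}^2_0\sim A^2_0\tilde{C}^2_0$ and place all $N-1$ path correlators, together with the monogamy data, under one and the same $Q$; one must also check that the intermediate parties of the path are genuinely shared copies (so each $\langle XZ\rangle\ge\langle XY\rangle+\langle YZ\rangle-1$ step is legitimate) and that $\tilde{C}^1$ is space-time separated from $A^1B^1\tilde{C}^2$ (so the conditional monogamy bound applies). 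Once the cut is pinned down, with Figure~\ref{fig:MinInflationArgument4p} as the $N=4$ template, every remaining manipulation is identical to the $N=3$ case.
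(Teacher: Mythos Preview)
Your proposal is correct and follows essentially the same four-step argument as the paper: monogamy bound conditioned on $\tilde{C}^1_1{=}1$, removal of the conditioning via \eqref{reasoningA8}, the inflation symmetry plus the iterated transitivity inequality along the $\same_N$ chain, and finally the inflation lemmas; the violation with $\ket{\GHZ_N}$ is also identical. The only discrepancy is notational: where you write $\tilde{C}^2_0$ the paper uses the single endpoint party $C^2_{0[N-2]}$ of the $\same_N$ path (not the product of the Charlies' outputs on input~$0$), since this party must simultaneously serve as the guesser in the monogamy bound and as the last vertex of the transitivity chain---you should make that choice explicit.
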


\begin{proof}[Proof of Eq.~\eqref{eq:mainprop-N}]
This is done by adding, to the two-player argument, extra players whose collective role is similar to Charlie's role in the three-player case. For this reason we name Charlie$_{[i]}$ ($i\in \{1,\dots,N-2\}$) the players that are neither Alice nor Bob. In Figure~\ref{fig:MinInflationArgument4p} we illustrate, for the 4-player case, the inflation scenario on which the following argument is based.

We start by defining the generalization of the two previous, three-player games.
The game that necessitates nonclassical resources to be won maximally is
\begin{align}\label{eq:BellStandard-N}\begin{split}
I_\Bell^{\tilde{C}_1{=}1}\coloneqq  &\langle A_0B_0 \rangle_{\tilde{C}_1{=}1} + \langle A_0B_1 \rangle_{\tilde{C}_1{=}1} \\&\qquad+\langle A_1B_0 \rangle_{\tilde{C}_1{=}1} -\langle A_1B_1 \rangle_{\tilde{C}_1{=}1} \,,\end{split}
\end{align}
where the difference with the three-player game $I_\Bell^{{C}_1{=}1}$ is that $\tilde{C}\coloneqq C_{1[1]} \cdot C_{1[2]} \cdot [\dots] \cdot C_{1[N-2]}$ is defined over the collective of Charlies (\emph{i.e.}, $\tilde{C}_1{=}1$ indicates that all Charlie players had input~1 and an even number of them outputted $-1$);
the game that favours no randomness or genuine tripartite resources (including classical shared randomness) is
\begin{align}\label{eq:defsame-N}\begin{split}
I_{\same_N}\coloneqq \langle A_0B_2\rangle +& \langle B_2C_{0[1]}\rangle+ \langle C_{0[1]}C_{0[2]}\rangle\\&\qquad+[\dots]+ \langle C_{0[N-3]}C_{0[N-2]}\rangle\,.\end{split}
\end{align}

The proof closely follows the one given in Section~\ref{sec:GHZ3} for three parties:

First, as in the three-player case, we use, \emph{mutatis mutandis}, Theorem~1 of Ref.~\citep[Eq.~(11)]{AugusiakMonogamy},
\begin{equation}\label{eq:monogamy-N}
I_{\Bell}^{\tilde{C}^1_1{=}1}\circ \{A^1B^1\} + 2\langle A_0^1C_{0[N-2]}^2\rangle_{\tilde{C}^1_1{=}1} \le  4 \,.
\end{equation}

Second, we bound $\langle A_0^1C_{0[N-2]}^2\rangle_{\tilde{C}^1_1{=}1}$ with $\langle A_0^1C_{0[N-2]}^2\rangle$ and obtain (see Eqs.\eqref{eq:second_a}--\eqref{eq:worst-case-bound})
\begin{equation}
\langle A^1_0C_{0[N-2]}^2 \rangle_{\tilde{C}_1^1{=}1} \ge  \frac{2\langle A^1_0C_{0[N-2]}^2 \rangle +2\langle \tilde{C}^1_1\rangle}{1+\langle \tilde{C}^1_1\rangle}-1\,.
\end{equation}
Therefore,
\begin{equation}
I_{\Bell}^{\tilde{C}^1_1{=}1}\circ \{A^1B^1\} + \frac{4\langle A^1_0C_{0[N-2]}^2 \rangle +4\langle \tilde{C}^1_1\rangle}{1+\langle \tilde{C}^1_1\rangle}\le 6\,.\label{appendixBtherefore}
\end{equation}

Third, we remark from the inflation technique that $A^1_0C_{0[N-2]}^2\sim A^2_0C_{0[N-2]}^2$, so
\begin{equation}
\langle A^1_0C_{0[N-2]}^2 \rangle=\langle A^2_0C_{0[N-2]}^2 \rangle\,.\label{appendixBso}
\end{equation}
We find from applying the recursive algebraic argument of Ref.~\citep[App.~A, Eq.~(27)]{NetworkEntanglement2020} that
\begin{align}\begin{split}
&\langle A^2_0C^2_{0[N-2]}\rangle\geq \langle A^2_0 B^2_2 \rangle + \langle B^2_2 C^2_{0[1]}\rangle+\langle C^2_{0[1]} C^2_{0[2]} \rangle\\&\qquad\qquad\qquad\quad  +[\dots]+\langle C^2_{0[N-3]} C^2_{0[N-2]} \rangle  - N+2\,,\end{split}\\
\shortintertext{or, equivalently,}
&\langle A^2_0C^2_{0[N-2]}\rangle\geq I_\same\circ \{A^2B^2C^2_{[0]}\dots C^2_{[N-2]}\} - N+2\,.\label{eq:transitivity-A-N}
\end{align}

Combining the above (Eqs~\eqref{appendixBtherefore}, \eqref{appendixBso}~and~\eqref{eq:transitivity-A-N}), we get
\begin{align}\label{eq:C7}
\begin{split}
    &\cr I_{\Bell}^{\tilde{C}^1_1{=}1}\circ \{A^1B^1\} + \frac{4I_{\same_N}\circ \{A^2B^2C^2_{[0]}\dots C^2_{[N-2]}\}}{1+\langle \tilde{C}^1_1\rangle}  \\& \cr \le 6+ \frac{4(N-2)-4\langle \tilde{C}^1_1\rangle}{1+\langle \tilde{C}^1_1\rangle} \,.\end{split}
\end{align}

At last, fourth, we recognize using the inflation technique that
\begin{align}
    & I_{\Bell}^{\tilde{C}^1_1{=}1}\circ \{A^1B^1\}=I_{\Bell}^{\tilde{C}_1{=}1}\,,\\
           &\langle \tilde{C}^1_1\rangle=\langle \tilde{C}_1\rangle\,,\\
       &I_{\same_N} \circ \{A^2B^2C^2_{[1]}\dots C^2_{[N-2]}\}=   I_{\same_N} \,;
\end{align}
we conclude the robust statement generalized to $N$ parties:
\begin{equation}\tag{\ref{eq:mainprop-N}}
    I_{\Bell}^{\tilde{C}_1{=}1} + \frac{4I_{\same_N}}{1+\langle \tilde{C}^1_1\rangle} \le 6+ \frac{4(N-2)-4\langle \tilde{C}^1_1\rangle}{1+\langle \tilde{C}^1_1\rangle} \,.
\end{equation}
\end{proof}

\begin{proof}[Proof of violation]
Eq.~\eqref{eq:mainprop-N} admits a violation using measurements on the $N$-partite quantum state $\ket{{\rm GHZ}_N}\coloneqq (\ket{0_1\dots 0_N}+\ket{1_1\dots 1_N})/\sqrt{2}$. The strategy is straightforward --- on inputs corresponding to the $\same$ game, all players measure in the rectilinear basis; on inputs corresponding to the $\Bell$ game, the Charlie players measure in the Hadamard basis (if their product is positive, they have then successfully steered Alice and Bob to the maximally entangled state $\ket{\phi^+}_{AB}$), while Alice and Bob use optimal measurements for the standard Bell game, centred on Alice measuring in the rectilinear basis on input $X=0$. The resulting value for the left-hand side of Eq.~\eqref{eq:mainprop-N} is then ${2\sqrt{2}+4(N{-}1)}$, while the right-hand side is ${4N{-}2}$, hence smaller. This proves that $\ket{{\rm GHZ}_N}$ can produce correlations that are genuinely LOSR $N$-partite nonlocal.
\end{proof}

\subsection{The \texorpdfstring{$\ket{{\rm W}}$}{ket(W)} quantum state produces genuinely LOSR-tripartite-nonlocal correlations}\label{Sec:Wstate}

In this section, we prove that some measurements on $\ket{\rm W}$ lead to correlations that are LOSR genuinely tripartite nonlocal. We use a technique that is similar to the one that we use with $\ket{\rm GHZ}$: a multi-game format analyzed through inflation.
It provides the proof of Proposition~4 of~\cite{PRL}. As opposed to the previous examples, our proof is not noise tolerant.

\begin{prop}[{${\rm W}$}]
In the absence of any 3-way nonclassical cause,
there are quantum measurements on the quantum state $\ket{\rm W}$ that cannot be simulated exactly.
\end{prop}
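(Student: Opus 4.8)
The plan is to replay the architecture of the $\ket{{\rm GHZ}_3}$ argument of Section~\ref{sec:GHZ3}, but with a game pair tailored to the structure of $\ket{\rm W}$. One keeps a $\Bell$-type game --- a CHSH test between Alice and Bob, scored only on a flag raised by Charlie --- whose maximal success certifies a genuinely nonclassical common cause, and pairs it with a second game of $\same$-type, here an \emph{anti-correlation} (token-counting) game that $2$-way resources plus $3$-way shared randomness can win. The target is to show that no $2$-LOSR theory-agnostic correlation can simultaneously do well at both, whereas an explicit measurement strategy on $\ket{\rm W}$ does; by Definition~\ref{def:GenuineLOSRMultip} this makes those $\ket{\rm W}$ correlations genuinely LOSR tripartite nonlocal, and hence not exactly simulable.

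First I would fix the quantum strategy. On the second-game inputs all three parties measure in the computational basis: since $\ket{\rm W}=(\ket{001}+\ket{010}+\ket{100})/\sqrt{3}$ carries exactly one excitation, the outcomes obey the perfect constraint ``exactly one party outputs the token'' (e.g.\ $P(A{=}1,B{=}1)=0$), which is the source of the probability-zero events that later render the bound saturated only at the noiseless point. On the $\Bell$-game inputs Charlie measures $\sigma_z$; when his outcome is ``$0$'' (probability $2/3$) Alice and Bob are steered onto the maximally entangled state $\ket{\psi^+}_{AB}=(\ket{01}+\ket{10})/\sqrt{2}$, and conditioned on this flag they run the standard (suitably rotated) CHSH strategy, reaching Tsirelson's value $2\sqrt{2}$.

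Then I would carry out the four-step inflation argument on an order-$2$ nonfanout inflation of the triangle (a hexagon cut), exactly as in Figure~\ref{fig:MinInflationArgument}. (1)~Apply the CHSH monogamy relation of Ref.~\cite{AugusiakMonogamy}, as in Eq.~\eqref{eq:monogamyM-A}, to bound $I_\Bell\circ\{A^1B^1\}$ (scored on Charlie-1's flag) plus twice a chosen correlator between $A^1$ and a copy-$2$ Charlie by $4$; one must check, as in the footnote after Eq.~\eqref{eq:monogamyM-A}, that the flag on $C^1$ is space-time separated from $A^1B^1C^2$ so the relation survives the conditioning. (2)~Replace the conditional correlator by its unconditional version using the total-probability estimate of Eqs.~\eqref{eq:second_a}--\eqref{eq:worst-case-bound}, which introduces the marginal of Charlie-1's flag (here $2/3$ rather than the $1/2$ of the GHZ case). (3)~Use the hexagon isomorphism $A^1C^2\sim A^2C^2$, as in Eq.~\eqref{eq:Isameswitch}, to transport the correlator to copy $2$, then chain the appropriate transitivity/triangle inequality --- the $\ket{\rm W}$ analogue of Eq.~\eqref{eq:transitivity-A}, reflecting the token-counting anti-correlations --- to lower-bound it by the second-game score on copy $2$. (4)~Recognize each inflated quantity as its single-copy counterpart by the standard inflation lemmas, yielding one inequality valid for every $2$-LOSR theory-agnostic correlation, and verify that the strategy above violates it.

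The main obstacle is step (3): unlike $\ket{{\rm GHZ}_3}$, the two-party reductions of $\ket{\rm W}$ are mixed and Charlie's steering onto $\ket{\psi^+}$ succeeds only with probability $2/3$, so the observables feeding the monogamy relation, the isomorphism, and the transitivity chain must be chosen so that (i)~the chain of correlators actually closes, (ii)~the resulting slack is strictly negative for the $\ket{\rm W}$ statistics, and (iii)~the relevant flag marginals and second-game correlators sit at the extreme values the chain needs --- which, combined with the probability-zero constraints used in the quantum strategy, is exactly why this proof, unlike the $\ket{{\rm GHZ}}$ one, leaves no room for noise.
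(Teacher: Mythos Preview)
Your proposal has a genuine gap at exactly the point you flag as the ``main obstacle.'' The GHZ argument closes because the $\same$ game yields \emph{perfect} two-body correlators $\langle A_0B_2\rangle=\langle B_2C_0\rangle=1$, so the transitivity chain of Eq.~\eqref{eq:transitivity-A} forces $\langle A^2_0 C^2_0\rangle\geq 1$, which in turn saturates the monogamy relation and kills any CHSH violation. For $\ket{\rm W}$, the token-counting game gives only $\langle A_0B_0\rangle=\langle B_0C_0\rangle=-1/3$ at the two-body level; no transitivity/triangle inequality on such correlators yields an extremal $\langle A^1_0C^2_0\rangle$, and the CHSH monogamy bound then leaves plenty of slack. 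You write ``the $\ket{\rm W}$ analogue of Eq.~\eqref{eq:transitivity-A}'' without saying what it is; there is no obvious candidate, and an order-$2$ hexagon cut does not seem to supply one.

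The paper's proof takes a structurally different route. It replaces CHSH by the chained (BKP) inequalities in the limit $M\to\infty$, whose key feature is that the maximally entangled state reaches the \emph{algebraic} minimum $I_{{\rm BKP}_M}^{C=0}\to 0$; this lets one infer, for almost every value $\lambda$ of the shared randomness, that Alice's output cannot be deterministic. It uses \emph{three} games --- two symmetric BKP games (Alice--Bob conditioned on $C{=}0$, and Bob--Charlie conditioned on $A{=}0$) plus the token game --- and introduces indicator variables $f_{AB}^\lambda,f_{BC}^\lambda,f_{AC}^\lambda$ marking whether, at fixed $\lambda$, both members of a pair retain a chance to output the ``$1$.'' The contradiction is then $\mathop{\mathbb{E}}_\lambda(f_{AB}^\lambda+f_{BC}^\lambda)\geq 4/3$ (forced by the two BKP games) versus $f_{AB}^\lambda+f_{BC}^\lambda+f_{AC}^\lambda\leq 1$ for every $\lambda$ (a ``monogamy of the nonlocal one'' proved via cuts of a \emph{third-order} ring inflation, not an order-$2$ hexagon). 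None of these ingredients --- the asymptotic BKP limit, the per-$\lambda$ indicator analysis, the higher-order inflation --- appears in your plan, and each is essential precisely because the $\ket{\rm W}$ two-body marginals are too weak to drive a GHZ-style correlator chain.
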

The global construction rest on a fourth-order inflation (a triangle plus a third-order ring). In the rest of this section, we detail the proof by examining various relevant cuts.

\subsubsection{Preliminaries}
For ease of notation, in this section we take the convention to denote the binary outputs as $\{0,1\}$ rather than as $\{\pm 1\}$.
\paragraph{BKP Inequalities.}
The argument presented in the present section is based on the Barrett--Kent--Pironio correlations of parameter $d{=}2$~\cite{BKP}, which are equivalent to the chained Bell inequalities of Refs.~\cite{pearle1970hidden,braunstein1990wringing} and can be defined as
          \begin{align}\begin{split}
 I_{{\rm BKP}_M}&\coloneqq P(A {=} B|X{=}1,Y{=}M)
 \\&\qquad\qquad+P(A {\neq} B|X{=}M,Y{=}M) \\&\qquad\qquad+  \smashoperator{ \sum_{\substack{i\in \{1,\dots,M-1\}\\j\in \{0,1\}}}} P(A {\neq} B|X{=}i+j,Y{=}i) \,.\end{split}
\end{align}
The inputs have values $X,Y \in \{1,\dots,M\}$. All outputs are binary (\emph{i.e.}, 0 or 1). The algebraic minimum is $I_{{\rm BKP}_M}=0$ but local resources cannot reach less than 1.

The BKP inequalities concern two players, but as for the $\ket{\rm GHZ}$ case, we consider the lifted case where we condition on the outcome $C{=}0$ of a third, space-like separated player (the same local and algebraic minima then apply).
          \begin{equation}\begin{split}
 I_{{\rm BKP}_M}^{C{=}0}&\coloneqq P(A {=} B|X{=}1,Y{=}M,C{=}0)
 \\&\qquad\qquad+P(A {\neq} B|X{=}M,Y{=}M,{C{=}0}) \\&\qquad\qquad+  \smashoperator{ \sum_{\substack{i\in \{1,\dots,M-1\}\\j\in \{0,1\}}}} P(A {\neq} B|X{=}i+j,Y{=}i,{C{=}0}) \,.\end{split}
\end{equation}

Key results concerning BKP correlations are that, in the asymptotic limit $M \to \infty$, the optimal violation allowed by a maximally entangled state $\ket{\phi^+}$ is the algebraic minimum $I_{{\rm BKP}_M}^{C{=}0}=0$~\citep[Eq.~(9)]{BKP}, while for all $\lambda$ for which the output of Alice is completely determined given output $C$, the classical bound of $I_{{\rm BKP}_M}\ge 1$ applies (corollary of Theorem~1 in Ref.~\citep[Eq.~(11)]{AugusiakMonogamy}).

\paragraph{A multi-game which can be won with perfect probability using a \texorpdfstring{$\ket{{\rm W}}$}{ket(W)}-state quantum strategy.}
Similarly to the proof presented for the $\ket{\rm GHZ}$ case, the proof here also follows a multi-game format. Here we define three games which can all be won at the same time using a quantum strategy.
\begin{enumerate}[label=(\roman*)]
    \item\label{BKPgameAB} On inputs $X\in\{1,\dots,M\},Y\in\{1,\dots,M\},Z{=}1$, the players are asked (in the asymptotic limit) to reach $\displaystyle\lim_{M\to\infty} I_{{\rm BKP}_M}^{C{=}0}=0$, while having $P(C{=}0|Z{=}1)=2/3$.
    \item\label{BKPgameBC} A game identical to \ref{BKPgameAB} but with the roles of Alice and Charlie swapped.
    \item\label{One1Two0}  On inputs $XYZ{=}101$ the players must collectively output exactly one ``1'' and two ``0''s (\emph{i.e.}, $ABC\in\{001,010,100\}$).
\end{enumerate}

An important fact is that on input $X{=}1$ (or $Z{=}1$), Alice (or Charlie) does not know which one of the three games she (or he) is playing.

The quantum strategy to win maximally all three games with the $\ket{\rm W}\coloneqq (\ket{001}+\ket{010}+\ket{100})/\sqrt{3}$ state is straightforward. On inputs $X=1$, $Y=0$ and $Z=1$, the players measure in the rectilinear basis and by doing so always win the third game of outputting exactly one $1$. On input $Z{=}1$, Charlie obtains $0$ with probability $2/3$ and in that case the state at Alice--Bob is steered towards a maximally entangle state ${\ket{01}+\ket{10}} / {\sqrt{2}}$. Alice and Bob can then on inputs $X,Y\in\{1,\dots,M\}$ apply the strategy described in Ref.~\cite{BKP} to violate maximally the BKP inequality, achieving asymptotically ${\displaystyle \lim_{M\to \infty} I_{{\rm BKP}_M}^{{A}{B}|{C}{=}0} \to 0}$. The strategy is symmetric in Alice--Charlie and thus the players also violate the BKP inequality when the roles of Alice and Charlie are switched.

\paragraph{``Nonlocal sharing-of-the-one.''}
Our proof relies on a concept that we call ``nonlocal sharing-of-the-one.'' (Note that, without a loss of generality, we consider any private randomness as also part of $\Lambda$.) 
\begin{definition}
The \emph{nonlocal--sharing-of-the-one indicators} are defined for distributions that simulate perfectly (on inputs $XYZ=101$) the classical ${\rm W}$ distribution (\emph{i.e.}, game (iii)); they are
\begin{align}
f_{{A}{B}}^\lambda \coloneqq
\begin{cases} 
      1 & \text{\rm if~}P({A}{=}1|{X}{=}1,\Lambda{=}\lambda)> 0 \\
       & ~~~\mbox{\rm AND~}P({B}{=}1|{Y}{=}0,\Lambda{=}\lambda)> 0\,,\\
    0 & \mbox{\rm otherwise}\,,
   \end{cases}\\
f_{{B}{C}}^\lambda \coloneqq
\begin{cases} 
      1 & \text{\rm if~}P({B}{=}1|{Y}{=}0,\Lambda{=}\lambda)> 0 \\
       &~~~\mbox{\rm AND~}P({C}{=}1|{Z}{=}1,\Lambda{=}\lambda)> 0\,, \\
    0 & \mbox{\rm otherwise}\,,
   \end{cases}\\
   f_{{A}{C}}^\lambda \coloneqq
\begin{cases} 
      1 & \text{\rm if~}P({A}{=}1|{X}{=}1,\Lambda{=}\lambda)> 0 \\
       & ~~~\mbox{\rm AND~}P({C}{=}1|{Z}{=}1,\Lambda{=}\lambda)> 0\,, \\
    0 & \mbox{\rm otherwise}\,. \\
   \end{cases}
\end{align}
\end{definition}

Intuitively $f_{{A}{B}}^\lambda$, for any fixed $\lambda$, is 0 if Alice or Bob (or both) automatically output ``0'' for that $\lambda$ without considering the GPT sources (on inputs $X=1$ and $Y=0$). It equals 1 if both players need the result of manipulations involving GPT resources before ruling out the output ``1.'' The two other indicators have a similar interpretation.

\subsubsection{The proof by contradiction}
We show that the players are not able to complete all three tasks perfectly using shared randomness and merely bipartite resources (while they were able to do so using measurements on the quantum state $\ket{{\rm W}}$). More precisely, in the triangle scenario (see Figure~\ref{fig:GPT_triangle}) succeeding perfectly at all tasks implies two contradictory statements:
\begin{compactenum}
    \item[a.]$\mathop{\mathbb{E}}_\lambda( f_{{A}{B}}^{\lambda}+f_{{B}{C}}^{\lambda})\le 1$. 
    \item[b.] $ \mathop{\mathbb{E}}_\lambda( f_{{A}{B}}^{\lambda}+f_{{B}{C}}^{\lambda})\ge 4/3$.
\end{compactenum}
We prove those two contradictory statements in the subsections below.

\paragraph{\texorpdfstring{Proof that winning perfectly Game~\ref{One1Two0} implies $ \mathop{\mathbb{E}}_\lambda( f_{{A}{B}}^{\lambda}+f_{{B}{C}}^{\lambda})\le 1$.}{Proof part two}}

We look exclusively at the third game and show that simulating perfectly the image of the classical ${\rm W}$ distribution in the LOSR framework with bipartite resources leads to the upper bound $ \mathop{\mathbb{E}}_\lambda( f_{{A}{B}}^{\lambda}+f_{{B}{C}}^{\lambda})\le 1$. In fact, we prove a stronger statement:
\begin{prop}[Monogamy of the nonlocal one]\label{monogamyprop}
In the triangle scenario (see Figure~\ref{fig:GPT_triangle}), when sampling perfectly from the image $\{001,010,100\}$, the following bound must hold.
\begin{equation}
    f_{AB}^\lambda+f_{BC}^\lambda+f_{AC}^\lambda\le 1\,. \label{sharing-of-the-one}
\end{equation}
\label{monogamy1}
\end{prop}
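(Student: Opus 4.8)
The plan is to replace Ineq.~\eqref{sharing-of-the-one} by the equivalent but sharper statement that the three quantities $\pi_A\coloneqq P(A{=}1\mid X{=}1,\Lambda{=}\lambda)$, $\pi_B\coloneqq P(B{=}1\mid Y{=}0,\Lambda{=}\lambda)$, $\pi_C\coloneqq P(C{=}1\mid Z{=}1,\Lambda{=}\lambda)$ cannot all be strictly positive, and then to obtain this from a single use of the inflation constraints of Definition~\ref{def:LOTheoryAgnosticCorrelation} on the hexagon (one of the two order-$2$ nonfanout inflations of the triangle). The reduction is pure bookkeeping: from the definition of the indicators, $f_{AB}^\lambda=1\iff\pi_A>0\text{ and }\pi_B>0$, and cyclically, so $f_{AB}^\lambda+f_{BC}^\lambda+f_{AC}^\lambda$ equals $3$ when all three of $\pi_A,\pi_B,\pi_C$ are positive, equals $1$ when exactly two are, and equals $0$ otherwise. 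So I may fix a $\lambda$ with $\pi_A,\pi_B,\pi_C>0$, set $P\coloneqq P_\lambda$ (which is $2$-LO theory-agnostic, the shared randomness being frozen), note that perfect sampling from $\{001,010,100\}$ on input $XYZ{=}101$ forces $P(ABC|101)$ to be supported there with $\pi_A=P(100|101)$, $\pi_B=P(010|101)$, $\pi_C=P(001|101)$ and hence $\pi_A+\pi_B+\pi_C=1$, and aim for a contradiction.

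Next I would instantiate Definition~\ref{def:LOTheoryAgnosticCorrelation} with $\cI$ the hexagon $A^1\!-\!B^1\!-\!C^1\!-\!A^2\!-\!B^2\!-\!C^2\!-\!A^1$, obtaining a nonsignalling $Q$ on $\cI$ satisfying $(C1)$--$(C3)$. Giving input $X{=}1$ to every $A$-copy, $Y{=}0$ to every $B$-copy, $Z{=}1$ to every $C$-copy, write $X_1,\dots,X_6$ for the resulting binary outputs around the cycle, so $(X_1,X_3,X_5)$ are the outputs of the alternating triple $(A^1,C^1,B^2)$. Each cycle-edge, together with its two pendant sources, is a sub-network of $\cI$ isomorphic to the matching two-party sub-network of the triangle, so $(C1)$ gives $Q_{|A^iB^i}=P_{|AB}$, $Q_{|B^iC^i}=P_{|BC}$, $Q_{|C^iA^{i+1}}=P_{|CA}$; with the support of $P$ and No Signalling, every one of these pair distributions puts probability $0$ on the outcome $(1,1)$ and its single outputs have means $\pi_A,\pi_B,\pi_C$ in the appropriate rotation. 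Consequently each $X_i$ has one of $\pi_A,\pi_B,\pi_C$ as its mean, $\mathrm{Cov}(X_i,X_j)=-\pi\pi'$ for cycle-adjacent pairs, and $\mathrm{Cov}(X_i,X_j)=0$ for every other pair --- those two parties share no source, so $(C3)$ makes them independent. A short computation then gives $\mathrm{Var}(X_1+\dots+X_6)=2\bigl(\pi_A+\pi_B+\pi_C\bigr)\bigl(1-(\pi_A+\pi_B+\pi_C)\bigr)=0$ while $\mean{X_1+\dots+X_6}=2$, so $X_1+\dots+X_6=2$ almost surely.

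The contradiction is then immediate: the parties $A^1,C^1,B^2$ are pairwise source-disjoint in $\cI$, so iterating $(C3)$ makes them mutually independent, and $(C1)$ assigns each of them the one-party marginal of $P$; hence $Q(X_1{=}1,X_3{=}1,X_5{=}1)=\pi_A\pi_B\pi_C>0$, which is impossible since that event forces $X_1+X_3+X_5=3$, a fortiori $X_1+\dots+X_6\ge3$. This refutes the assumption $\pi_A,\pi_B,\pi_C>0$ and establishes Ineq.~\eqref{sharing-of-the-one}. I expect the genuine difficulty to be finding this route rather than executing it: the ``at least one $1$'' feature of the target distribution is irreducibly tripartite, yet no triple of hexagon parties forms a triangle sub-network, so $(C1)$ cannot transfer it directly; the variance identity is precisely the device that converts the rigid two-party marginals forced by $(C1)$, together with the $(C3)$ independences, into the forbidden threefold coincidence. (The same mechanism runs on any $3k$-ring, which is presumably why the full $\ket{{\rm W}}$ argument is set up on a larger inflation.)
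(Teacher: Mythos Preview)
Your proof is correct and takes a genuinely different route from the paper's. The paper proceeds via two lemmas set on line cuts of a \emph{third-order} ring inflation: first (Lemma~\ref{shortlinelemma}) it conditions on two Bob-copies outputting $0$ to force $A^1C^1\in\{01,10\}$, then (Lemma~\ref{longlinelemma}) it extends the line and conditions on $A^2{=}C^2{=}1$ to conclude that $P(B{=}1\mid Y{=}0,\Lambda{=}\lambda)=0$ whenever $f_{AC}^\lambda=1$; symmetrizing over the three parties finishes. Your argument instead lives entirely on the \emph{second-order} hexagon and replaces those conditioning steps by a single moment computation: the adjacent-pair marginals imported via $(C1)$ together with the $(C3)$ independences of non-adjacent pairs force $\mathrm{Var}(X_1+\cdots+X_6)=2(\pi_A{+}\pi_B{+}\pi_C)\bigl(1-(\pi_A{+}\pi_B{+}\pi_C)\bigr)=0$, while the mutual independence of the alternating triple $(A^1,C^1,B^2)$ supplies an event of positive probability on which the sum is at least $3$. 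Your route is shorter, uses a strictly lower inflation order, and avoids any case analysis or permutation; the paper's route is more operational in that it explicitly pins down \emph{which} party deterministically outputs $0$ for each $\lambda$, and it reuses the third-order ring that the surrounding $\ket{\rm W}$ argument needs anyway.
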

In other words, for each instance $\lambda$ of the shared randomness, at least one player disregards its GPT sources and deterministically outputs $0$.

\begin{proof}
Each term in the sum is by definition either 0 or 1.
We first prove that, for any value $\lambda$ for which ``sometimes Alice outputs 1; and sometimes Charlie outputs 1,'' then ``one of them \emph{will} output 1,'' hence Bob can never output 1 for this $\lambda$. In other terms, we prove:
\begin{equation}\begin{split}
 & \forall \lambda: \left\{ f_{AC}^\lambda= 1  \implies f_{AB}^\lambda=f_{BC}^\lambda=0 \right\}\,.\label{eq-monogamy1_nope}\end{split}
\end{equation}

Our proof use two inflated scenarios. It is a direct corollary of Lemma~\ref{longlinelemma} (below), which uses Lemma~\ref{shortlinelemma} (also below).
Then, the full proposition results from repeating the argument with permuted players (\emph{e.g.}, by replacing $ABC$ and inputs $XYZ{=}101$ with $BAC$ and inputs  $YXZ{=}100$).
\end{proof}

\begin{lemma}\label{shortlinelemma}
In the short-line inflation illustrated in Figure~\ref{line-inflation-1}, $\forall \lambda $ such that $f^\lambda_{AC}{=}1,$
\begin{equation}
  P(A^1C^1\in \{01,10 \}|B^1B^2{=}00,Y^1Y^2{=}00,\Lambda{=}\lambda)=1\,.
\end{equation}

\begin{figure*}[htb]\centering
\includegraphics[width=0.7\linewidth]{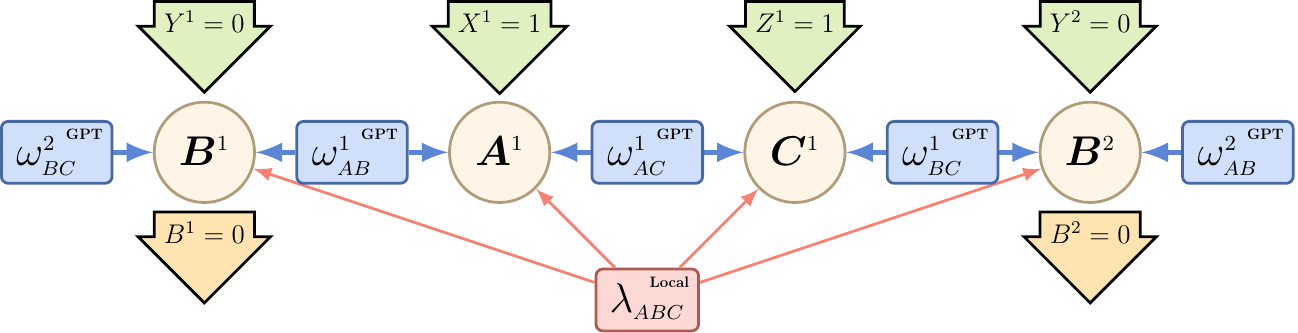}
\caption{Sufficient for Lemma~\ref{shortlinelemma}, this line inflation is in fact a cut of the third-order ring inflation. Given $\lambda$, we condition on the output $B_\lambda^1=0=B_\lambda^2$ (whenever possible) and it imposes $A_\lambda^1C_\lambda^1\in\{01,10\}$.}
\label{line-inflation-1}
\end{figure*}

\begin{proof}

Consider $\lambda$ such that $ f_{AC}^\lambda= 1$. Remark first that we have $P(B{=}1|Y{=}0,\Lambda{=}\lambda)<1$ because if Bob were to always output 1 for that $\lambda$, then to reproduce the image $\{001,010,100\}$ neither Alice nor Charlie could ever output 1 for that $\lambda$.

We use the line inflation depicted in Figure~\ref{line-inflation-1}, and we condition, given $\lambda$, on $B^1=B^2=0$ (it has non-zero weight by the previous remark and because $B^1\sim B \sim B^2$). In what follows, all probabilities are conditioned on the inputs $XYZ{=}101$; we omit them to ease notation.

From inflation, we have \begin{align*}
    &\{A^1B^1\}\sim\{AB\}\,,\\
    &\{B^2C^1\}\sim\{BC\}\,,\\
    &\{A^1C^1\}\sim\{AC\}\,.
    \end{align*}
Because there is exactly one ``1,'' we have
both
\begin{align*}
&P(A{=}1|BC{=}00)=1\,,\\
&P(A{=}1|BC{=}01)=0\,.
\end{align*}
It also holds that 
\begin{align*}\begin{split}
P(A{=}1|B{=}0)=&P(C{=}0|B{=}0) P(A{=}1|BC{=}00)\\+&P(C{=}1|B{=}0) P(A{=}1|BC{=}01)\,.\end{split}\end{align*}

Taken all together, we obtain \begin{align*} P(A^1{=}1|B^1{=}0)+P(C^1{=}1|B^2{=}0)=1\,.\end{align*}
Moreover, we have
\begin{align*}
1=&P(A^1C^1{=}00|B^1B^2{=}00)
+P(A^1C^1{=}01|B^1B^2{=}00)\\
&+P(A^1C^1{=}10|B^1B^2{=}00)
+P(A^1C^1{=}11|B^1B^2{=}00)\,.
\end{align*}
As the two middle terms are\footnote{The second equality holds because $B^1$ and $B^2C^2$ are space-like separated, as well as $B^2$ and $A^1B^1$, and because $P(B^1{=}0|B^2{=}0)>0$ and \emph{vice versa} have non-zero probabilities.}
\begin{align*}
   &P(A^1C^1{=}01|B^1B^2{=}00)+P(A^1C^1{=}10|B^1B^2{=}00)\\ 
    =&P(C^1{=}1|B^1B^2{=}00)+P(A^1{=}1|B^1B^2{=}00)\\
    =&P(C^1{=}1|B^2{=}0)+P(A^1{=}1|B^1{=}0)\\
    =&1\,,
\end{align*}
we have that 
\begin{align*}
P(A^1C^1{=}00|B^1B^2{=}00)=0=P(A^1C^1{=}11|B^1B^2{=}00)\,.
\end{align*} 
In conclusion, for any $\lambda$ for which the conditioning on $B{=}0$ is possible (\emph{e.g.}, all $\lambda$s for which $f_{AC}^\lambda= 1$), we have
\begin{align*}
    &P(A^1C^1{=}00|B^1B^2{=}00,\Lambda=\lambda)=0\,\\
    &P(A^1C^1{=}11|B^1B^2{=}00,\Lambda=\lambda)=0\,.
\end{align*}
Therefore Lemma~\ref{shortlinelemma} holds.

\end{proof}
\end{lemma}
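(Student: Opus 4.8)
The plan is to use the ``exactly one $1$'' constraint of game~(iii), won perfectly, together with the source-sharing structure of the short-line inflation --- a path $B^1 - A^1 - C^1 - B^2$ in which $A^1$ shares a source with $B^1$, $A^1$ with $C^1$, and $C^1$ with $B^2$, and no other sources are shared among these four parties --- to pin down completely the conditional law of $(A^1,C^1)$ given $B^1{=}B^2{=}0$ (on the relevant inputs $X^1{=}Z^1{=}1$, $Y^1{=}Y^2{=}0$).

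First I would verify that conditioning on $B^1{=}B^2{=}0$ is legitimate for every $\lambda$ with $f^\lambda_{AC}{=}1$. Since game~(iii) is won perfectly, $ABC\in\{001,010,100\}$ on inputs $XYZ{=}101$; if Bob were to output $1$ deterministically for this $\lambda$, then Alice and Charlie would be forced to output $0$ deterministically, contradicting $f^\lambda_{AC}{=}1$. Hence $P(B{=}0|Y{=}0,\Lambda{=}\lambda)>0$, and since $B^1$ and $B^2$ share no source, $(C3)$ factorizes their joint law given $\lambda$ while $(C1)$ identifies each marginal with $P_\lambda|_B$; so $P(B^1B^2{=}00|\Lambda{=}\lambda)=P(B{=}0|Y{=}0,\Lambda{=}\lambda)^2>0$.

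Next I would reduce the conditional single-party probabilities to the original triangle. Because $B^1$ shares no source with $\{B^2,C^1\}$, further conditioning on $B^1{=}0$ does not change the law of $C^1$ given $B^2{=}0$; with the isomorphism $\{B^2C^1\}\sim\{BC\}$ this gives $P(C^1{=}1|B^1B^2{=}00,\Lambda{=}\lambda)=P(C{=}1|B{=}0,\Lambda{=}\lambda)$, and symmetrically $P(A^1{=}1|B^1B^2{=}00,\Lambda{=}\lambda)=P(A{=}1|B{=}0,\Lambda{=}\lambda)$ from $\{A^1B^1\}\sim\{AB\}$. The ``exactly one $1$'' constraint forces that, given $B{=}0$, exactly one of $A,C$ equals $1$, so those two right-hand sides sum to $1$. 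The same constraint forbids $AC{=}11$ (two $1$'s), so $\{A^1C^1\}\sim\{AC\}$ yields $P(A^1C^1{=}11|\Lambda{=}\lambda)=0$ and hence $P(A^1C^1{=}11|B^1B^2{=}00,\Lambda{=}\lambda)=0$ since the conditioning event has positive probability. Writing $p_{ac}=P(A^1C^1{=}ac|B^1B^2{=}00,\Lambda{=}\lambda)$, I then have $p_{00}+p_{01}+p_{10}+p_{11}=1$, $p_{11}=0$, and $(p_{10}+p_{11})+(p_{01}+p_{11})=1$; these three relations force $p_{00}=0$, which is exactly the claimed identity $P(A^1C^1\in\{01,10\}|B^1B^2{=}00,Y^1Y^2{=}00,\Lambda{=}\lambda)=1$.

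The step I expect to be the main obstacle is the bookkeeping around conditioning and source-sharing: one must check that $B^1B^2{=}00$ has positive probability for precisely the relevant $\lambda$'s, and that each reduction is applied to an honest sub-network cut so that $(C1)$--$(C3)$ really do deliver the independences and isomorphisms invoked (in particular that $B^1$ is independent of $\{B^2,C^1\}$, and $B^2$ of $\{A^1,B^1\}$, given $\lambda$). A subtler point is that collapsing $p_{00}$ all the way to $0$ --- rather than merely obtaining $p_{00}=p_{11}$ from the single-party marginals --- requires the separate observation that $AC{=}11$ is outright impossible in game~(iii); without it the argument would be one equation short.
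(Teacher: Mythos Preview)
Your proposal is correct and follows essentially the same route as the paper: verify the conditioning event has positive weight via $f^\lambda_{AC}{=}1$, pull the marginals of $A^1$ and $C^1$ under $B^1B^2{=}00$ back to the original triangle using the inflation isomorphisms $\{A^1B^1\}\sim\{AB\}$, $\{B^2C^1\}\sim\{BC\}$, $\{A^1C^1\}\sim\{AC\}$ and the source-independence of $B^1$ from $\{B^2,C^1\}$ (and symmetrically), then use the ``exactly one $1$'' constraint to force $p_{01}+p_{10}=1$. Your explicit isolation of $p_{11}=0$ from $\{A^1C^1\}\sim\{AC\}$ is in fact a point the paper leaves implicit (its equality $p_{01}+p_{10}=P(C^1{=}1|\cdots)+P(A^1{=}1|\cdots)$ already presupposes $p_{11}=0$), so your write-up is if anything slightly cleaner on that step.
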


\begin{lemma}\label{longlinelemma}
In the triangle scenario (see Fig~\ref{fig:GPT_triangle}), when sampling perfectly from the image $\{001,010,100\}$,
$\forall \lambda $ such that $f^\lambda_{AC}{=}1,$
\begin{equation}
 P(B{=}1|Y{=}0,\Lambda{=}\lambda)=0\ \,.\label{lemma4eq}
\end{equation}

\begin{figure*}[htb]
\begin{center}
\includegraphics[width=1\linewidth]{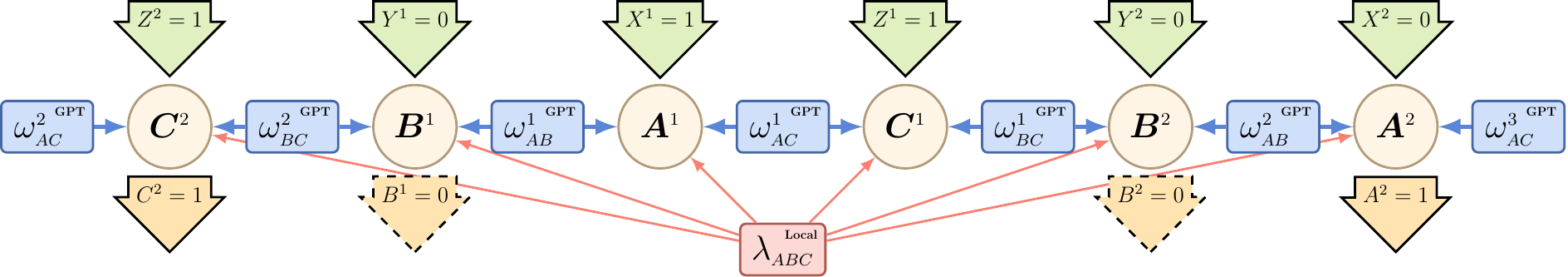}
\caption{
Representing the second step of the proof of Proposition~\ref{monogamy1}, this line inflation is in fact a cut of the third-order ring inflation. Given $\lambda$, we condition on the output $C_\lambda^2=1=A_\lambda^2$, which, to simulate without error a distribution in $\{001,010,100\}$, forces $B_\lambda^1=0=B_\lambda^2$.}
\label{line-inflation-2}
\end{center}
\end{figure*}

\begin{proof}
We consider the slightly extended line inflation of Figure~\ref{line-inflation-2}. We again consider any $\lambda$ for which $P(A{=}1|\lambda)>0$\footnote{The inputs $XYZ=101$ are still omitted.} and $P(C{=}1|\lambda)>0$ (\emph{i.e.}, $f^\lambda_{AC}{=}1$).
Note that $\{C^2A^2|\Lambda{=}\lambda\}\sim\{C^2|\Lambda{=}\lambda\}\{A^2|\Lambda{=}\lambda\}$ ({\it i.e.}, they are independent given $\lambda$).
We use that opportunity to condition on $A^2=1=C^2$.
Because we are reproducing the image of the ${\rm W}$ distribution and because $\{A^2B^2\}\sim\{AB\}$ and $\{B^1C^2\}\sim\{BC\}$, we have
\begin{align*}
    P(B^2{=}0|A^2{=}1)=1\,,\\
    P(B^1{=}0|C^2{=}1)=1\,.
\end{align*}
By Lemma~\ref{shortlinelemma}, we end up with
\begin{align*}
    &P(A^1C^1{\,\in\,} \{01,10 \}|A^2C^2{=}11,\Lambda{=}\lambda)\\
    =&P(A^1C^1{\,\in\,} \{01,10 \}|B^1B^2{=}00,\Lambda{=}\lambda)\\
    =&1
\end{align*}
for all $\lambda$.
Finally, the inflation tells us that
\begin{align*}
    \{A^1C^1|A^2C^2{=}11,\Lambda{=}\lambda\}\sim \{A^1C^1|\Lambda{=}\lambda\}\sim\{AC|\Lambda{=}\lambda\}\,.
\end{align*}
Hence we can remove the conditioning on $A^2C^2{=}11$ and obtain that $1=P(A^1C^1{\,\in\,} \{01,10 \}|\Lambda{=}\lambda)$ (remember that we are assuming $f^\lambda_{AC}{=}1$). Eq.~\eqref{lemma4eq} follows.
\end{proof}
\end{lemma}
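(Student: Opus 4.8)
The plan is to argue by conditioning inside a line-shaped cut of the third-order ring inflation, feeding on the two ingredients already in hand: perfect reproduction of the image $\{001,010,100\}$ on inputs $XYZ{=}101$, and Lemma~\ref{shortlinelemma}. Fix $\lambda$ with $f^\lambda_{AC}{=}1$, so that $P(A{=}1|X{=}1,\Lambda{=}\lambda)>0$ and $P(C{=}1|Z{=}1,\Lambda{=}\lambda)>0$; all probabilities below are conditioned on the inputs $XYZ{=}101$, which I suppress.

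First I would set up the line inflation depicted in Figure~\ref{line-inflation-2}: a path carrying $A^1,B^1,C^1$ continued by a parallel stretch carrying $A^2,B^2,C^2$, arranged so that the relevant pairs are isomorphic to their originals --- $\{A^1B^1\}\sim\{AB\}$, $\{B^1C^2\}\sim\{BC\}$, $\{A^2B^2\}\sim\{AB\}$, $\{B^2C^1\}\sim\{BC\}$, $\{A^1C^1\}\sim\{AC\}$ --- and, crucially, so that $A^2$ and $C^2$ have no source in common, hence are independent given $\Lambda$, while the distribution $\{A^1C^1|\Lambda{=}\lambda\}$ is insensitive to conditioning on $(A^2,C^2)$. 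Since every copy inherits the single-party marginals of the original, $P(A^2{=}1|\Lambda{=}\lambda)$ and $P(C^2{=}1|\Lambda{=}\lambda)$ are both strictly positive, so by that conditional independence the event $\{A^2{=}1{=}C^2\}$ has positive weight given $\lambda$; condition on it.

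Next I would transfer this conditioning to Bob. Perfect sampling from $\{001,010,100\}$ gives $P(B{=}1|A{=}1){=}0$ and $P(B{=}1|C{=}1){=}0$ as plain facts about the target distribution, and a zero-probability event remains zero under any further conditioning of positive weight; combined with $\{A^2B^2\}\sim\{AB\}$ and $\{B^1C^2\}\sim\{BC\}$ this forces $B^1{=}0{=}B^2$ almost surely once we have conditioned on $A^2{=}1{=}C^2$ (and $\lambda$). The short-line configuration of Lemma~\ref{shortlinelemma} sits inside this cut, so that lemma applies and gives $P(A^1C^1\in\{01,10\}|B^1B^2{=}00,\Lambda{=}\lambda){=}1$; chaining with the previous line yields $P(A^1C^1\in\{01,10\}|A^2C^2{=}11,\Lambda{=}\lambda){=}1$. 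Dropping the now-inert conditioning on $A^2C^2{=}11$ via the independence noted above gives $P(A^1C^1\in\{01,10\}|\Lambda{=}\lambda){=}1$, whence $P(AC\in\{01,10\}|\Lambda{=}\lambda){=}1$ by $\{A^1C^1\}\sim\{AC\}$; and then, since the image is $\{001,010,100\}$, exactly one $1$ among $A$ and $C$ forces $B{=}0$, which is precisely $P(B{=}1|Y{=}0,\Lambda{=}\lambda){=}0$.

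The main obstacle is the inflation bookkeeping rather than any inequality: one has to exhibit a single line-shaped cut of the third-order ring in which all the listed pair-isomorphisms hold simultaneously, in which $A^2$ and $C^2$ are independent given $\Lambda$ while $\{A^1C^1|\Lambda\}$ is unaffected by conditioning on $(A^2,C^2)$, and in which the short-line configuration of Lemma~\ref{shortlinelemma} embeds as a sub-cut --- and one must check that each conditioning event ($A^2{=}1{=}C^2$, then $B^1{=}0{=}B^2$) retains positive weight given $\lambda$, which is exactly where the hypothesis $f^\lambda_{AC}{=}1$ (and the non-triviality $P(B{=}1|Y{=}0,\lambda)<1$ already invoked inside Lemma~\ref{shortlinelemma}) is used. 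Everything else reduces to the law of total probability together with the fact that vanishing probabilities survive further conditioning.
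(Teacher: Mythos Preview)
Your proposal is correct and follows essentially the same route as the paper's own proof: you work in the same extended line cut, condition on $A^2{=}1{=}C^2$ (using $f^\lambda_{AC}{=}1$ and the independence of $A^2,C^2$ given $\lambda$), force $B^1{=}B^2{=}0$ via the $\{A^2B^2\}\sim\{AB\}$ and $\{B^1C^2\}\sim\{BC\}$ isomorphisms, invoke Lemma~\ref{shortlinelemma} on the embedded short line, and then drop the conditioning on $(A^2,C^2)$ by independence to conclude $P(AC\in\{01,10\}|\Lambda{=}\lambda){=}1$. Your write-up is in fact slightly more explicit than the paper's about why each conditioning event has positive weight and about the final implication from $AC\in\{01,10\}$ to $B{=}0$.
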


\paragraph{\texorpdfstring{Proof that winning perfectly all games imply  $\ \mathop{\mathbb{E}}_\lambda( f_{{A}{B}}^{\lambda}+f_{{B}{C}}^{\lambda})\ge 4/3$.}{Proof part one}}

We look at the first two games --- violating BKP inequalities conditioned on $C{=}0$, or $A{=}0$, respectively\, --- and prove a lower bound that contradicts the upper bound found in \emph{a.}

\begin{prop}
In all nonsignalling LOSR GPT scenarios, if the players succeed perfectly at both Game~\ref{BKPgameAB} and Game~\ref{BKPgameBC}, and also at Game~\ref{One1Two0}, then

\begin{equation}
\ \mathop{\mathbb{E}}_\lambda( f_{{A}{B}}^{\lambda}+f_{{B}{C}}^{\lambda})\ge 4/3\,.
\end{equation}
\end{prop}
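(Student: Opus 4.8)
The plan is to prove the two halves $\mathbb{E}_\lambda[f_{AB}^\lambda]\ge 2/3$ and $\mathbb{E}_\lambda[f_{BC}^\lambda]\ge 2/3$ separately, the second being the Alice$\leftrightarrow$Charlie mirror of the first, and then add them. Fix, as throughout this section, the triangle scenario (Fig.~\ref{fig:GPT_triangle}) and a shared-randomness decomposition $P=\int\dd\mu(\lambda)\,P_\lambda$ into $2$-LO theory-agnostic correlations. Since Game~\ref{One1Two0} is won with certainty, for $\mu$-almost every $\lambda$ the correlation $P_\lambda$ is supported on $\{001,010,100\}$ on inputs $XYZ{=}101$. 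Since the $C{=}0$-conditioned BKP functional is affine in the decomposition --- its value on $P$, call it $I_{{\rm BKP}_M}^{C{=}0}[P]$, is a convex combination of the values $I_{{\rm BKP}_M}^{C{=}0}[P_\lambda]\ge 0$ with weights proportional to $\dd\mu(\lambda)\,P_\lambda(C{=}0|Z{=}1)$ --- winning Game~\ref{BKPgameAB} (that is, $\lim_M I_{{\rm BKP}_M}^{C{=}0}[P]=0$) forces $\lim_M I_{{\rm BKP}_M}^{C{=}0}[P_\lambda]=0$ for $\mu$-almost every $\lambda$ in the set $\Omega_C:=\{\lambda:P_\lambda(C{=}0|Z{=}1)>0\}$.

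The crucial step is to feed in the randomness-certification fact recalled in the Preliminaries (Theorem~1 of Ref.~\cite{AugusiakMonogamy}): if Alice's output on input $X{=}1$ were a deterministic function of Charlie's output within $P_\lambda$, then $I_{{\rm BKP}_M}^{C{=}0}[P_\lambda]\ge 1$. Hence for $\mu$-almost every $\lambda\in\Omega_C$ one has $0<P_\lambda(A{=}1\mid X{=}1,Z{=}1,C{=}0)<1$. I expect this to be the hard point: it requires applying the cited monogamy relation on the appropriate inflation cut (a second copy of Charlie sharing the Alice--Charlie source with Alice-1) with the certified input pinned precisely to $X{=}1$, together with the mild measure-theoretic care that the $\mu$-null exceptional set a priori depends on $M$ before the $M\to\infty$ limit is taken.

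Granting this, for every $\lambda\in\Omega_C$ outside a $\mu$-null set: nonsignalling together with $P_\lambda(C{=}0|Z{=}1)>0$ gives $P_\lambda(A{=}1\mid X{=}1)\ge P_\lambda(A{=}1\mid X{=}1,Z{=}1,C{=}0)\,P_\lambda(C{=}0\mid Z{=}1)>0$, so the Alice-clause of $f_{AB}^\lambda$ holds; likewise $P_\lambda(A{=}0,C{=}0\mid X{=}1,Y{=}0,Z{=}1)=P_\lambda(A{=}0\mid X{=}1,Z{=}1,C{=}0)\,P_\lambda(C{=}0\mid Z{=}1)>0$, and on that event winning Game~\ref{One1Two0} forces $B{=}1$, so $P_\lambda(B{=}1\mid Y{=}0)>0$ and the Bob-clause holds; therefore $f_{AB}^\lambda=1$. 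Consequently $\mathbb{E}_\lambda[f_{AB}^\lambda]=\mu(\{f_{AB}^\lambda{=}1\})\ge\mu(\Omega_C)=1-\mu(\{\lambda:P_\lambda(C{=}1|Z{=}1){=}1\})\ge 1-\int\dd\mu(\lambda)\,P_\lambda(C{=}1|Z{=}1)=1-P(C{=}1|Z{=}1)=\tfrac23$, where $P(C{=}1|Z{=}1)=1/3$ because Game~\ref{BKPgameAB} requires $P(C{=}0|Z{=}1)=2/3$.

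The mirror argument applied to Game~\ref{BKPgameBC} (BKP between Charlie and Bob conditioned on $A{=}0$, with the constraint $P(A{=}0|X{=}1)=2/3$) yields, in exactly the same way, $\mathbb{E}_\lambda[f_{BC}^\lambda]\ge 1-P(A{=}1|X{=}1)=\tfrac23$. Adding the two bounds gives $\mathbb{E}_\lambda(f_{AB}^\lambda+f_{BC}^\lambda)\ge 4/3$, which is the assertion; combined with the upper bound $\mathbb{E}_\lambda(f_{AB}^\lambda+f_{BC}^\lambda)\le 1$ obtained earlier from Game~\ref{One1Two0}, this produces the desired contradiction, so these $\ket{\rm W}$ statistics cannot be reproduced by bipartite resources supplemented with shared randomness.
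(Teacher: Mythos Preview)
Your proof is correct and follows essentially the same route as the paper. Both arguments establish $\mathbb{E}_\lambda[f_{AB}^\lambda]\ge 2/3$ (and its Alice$\leftrightarrow$Charlie mirror) by combining the Augusiak monogamy bound for BKP with the weight estimate $\mu(\Omega_C)\ge 2/3$; the paper packages the first step as a contrapositive lemma ($f_{AB}^\lambda=0\Rightarrow I_{{\rm BKP}_M}^{C=0}(\lambda)\ge 1$, splitting into the two cases ``Alice deterministic'' and ``Bob deterministic $\Rightarrow$ Alice determined by $C$'') whereas you argue the forward direction and extract the Bob-clause from Game~(iii) via $P_\lambda(A{=}0,C{=}0)>0\Rightarrow P_\lambda(B{=}1)>0$. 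One phrasing point: to deduce specifically $0<P_\lambda(A{=}1\mid X{=}1,C{=}0)<1$ you need the slightly sharper antecedent ``$A_{X=1}$ is deterministic \emph{conditioned on $C{=}0$}'' (which still implies $I_{{\rm BKP}_M}^{C=0}\ge 1$, since the $C{=}0$-conditioned correlation is nonsignalling and has a deterministic Alice-node); the weaker ``deterministic function of $C$'' would only contrapose to nondeterminism at \emph{some} value of $C$.
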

\noindent The proof follows from Corollary~\ref{appbcor1}.

We start with a lemma.
\begin{lemma}\label{appBlemma1}
If the players win perfectly Game~\ref{One1Two0}, then $\forall \lambda$ such that $f_{AB}^\lambda=0$, $\forall M$,
\begin{equation}
    I_{{BKP}_M}^{C{=}0}(\lambda)\ge 1\,.
\end{equation}
Note that a similar statement holds when Alice and Charlie are switched.
\begin{proof}
On inputs $XYZ=101$: Given any $\lambda$, if $f_{AB}^\lambda=0$, at least either Alice or Bob outputs deterministically 0.
If it is Alice, then her strategy for Game~\ref{BKPgameAB} is local, and (pre-processing on $C{=}0$ or not) the local bound $I_{{BKP}_M}^{C{=}0}(\lambda)\ge 1$ applies.
If it is Bob that always output 0 for that $\lambda$, then Alice must irremediably output 1 whenever $C=0$. Her strategy for Game~\ref{BKPgameAB} is therefore also local in reference to Bob when conditioned on $C=0$, and the local bound $I_{{BKP}_M}^{C{=}0}(\lambda)\ge 1$ also applies.
\end{proof}
\end{lemma}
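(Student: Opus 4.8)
The plan is to reduce the hypothesis $f_{AB}^\lambda=0$ to the one situation in which the local BKP bound is already available, namely the fact recalled in the Preliminaries: for any $\lambda$ for which Alice's output on the randomness-certifying input $X{=}1$ is completely determined (given $\lambda$ and the conditioning event $C{=}0$), the corollary of Theorem~1 of Ref.~\citep[Eq.~(11)]{AugusiakMonogamy} forces $I_{{BKP}_M}^{C{=}0}(\lambda)\ge 1$ for every $M$. The whole proof then amounts to checking that, whenever $f_{AB}^\lambda=0$, Alice's $X{=}1$ output is indeed deterministic after conditioning on $C{=}0$. By the definition of the indicator, $f_{AB}^\lambda=0$ splits into two (non-exclusive) cases: either $P(A{=}1|X{=}1,\Lambda{=}\lambda)=0$ (Alice is trivial) or $P(B{=}1|Y{=}0,\Lambda{=}\lambda)=0$ (Bob is trivial). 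I treat each in turn.

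In the first case I would argue that Alice simply ignores her Alice--Bob resource. Since $P(A{=}1|X{=}1,\Lambda{=}\lambda)=0$ is a statement about the full marginal, the sub-event $\{A{=}1,C{=}0\}$ also carries zero weight, whence $P(A{=}1|X{=}1,C{=}0,\Lambda{=}\lambda)=0$ as well; the conditioning on $C{=}0$ is legitimate because $P(C{=}0|Z{=}1)=2/3>0$. Thus on input $X{=}1$ Alice outputs $0$ deterministically whether or not one pre-processes on $C{=}0$, her strategy is local across the Alice--Bob cut, and the cited corollary yields $I_{{BKP}_M}^{C{=}0}(\lambda)\ge 1$ at once.

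The second case is where the perfect simulation of Game~\ref{One1Two0} enters. Here Bob outputs $0$ deterministically on $Y{=}0$, so on the inputs $XYZ{=}101$ the event $B{=}0$ is forced for this $\lambda$; conditioning further on $C{=}0$, the requirement that exactly one of $A,B,C$ equal $1$ (the image $\{001,010,100\}$) leaves $A{=}1$ as the only admissible value, i.e.\ $P(A{=}1|X{=}1,C{=}0,\Lambda{=}\lambda)=1$. The key structural step is Alice's \emph{obliviousness}: on input $X{=}1$ she cannot tell Game~\ref{One1Two0} from Game~\ref{BKPgameAB}, so this deterministic conclusion transfers verbatim to her behaviour inside the BKP game. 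Alice's $X{=}1$ output is therefore again completely determined given $C{=}0$, and the same corollary gives $I_{{BKP}_M}^{C{=}0}(\lambda)\ge 1$. The statement with Alice and Charlie swapped follows by the symmetry of the construction.

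The main obstacle, and the step I would handle most carefully, is this second case: one must legitimately transport a constraint that lives on the Game~\ref{One1Two0} inputs ($Y{=}0$, which lies outside the BKP input range $Y\in\{1,\dots,M\}$) into a deterministic constraint on Alice's BKP input $X{=}1$. This hinges on (i) obliviousness, so that Alice's marginal on $X{=}1$ is game-independent, and (ii) the positivity $P(C{=}0|Z{=}1)=2/3>0$, which makes the conditioning on $C{=}0$ well defined and non-vacuous. Once these are in place, matching the phrase \enquote{Alice's output is completely determined given output $C$} to the precise hypothesis of the Augusiak monogamy corollary is immediate, and the per-$\lambda$ bound holds uniformly in $M$.
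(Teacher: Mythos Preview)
Your proof is correct and follows essentially the same two-case argument as the paper: split on whether it is Alice or Bob who is forced to output $0$, and in each case show that Alice's $X{=}1$ output is deterministic (possibly after conditioning on $C{=}0$), so that the local BKP bound from the Augusiak monogamy corollary applies. Your write-up is in fact more explicit than the paper's about obliviousness and the legitimacy of the conditioning; the only small caveat is that $P(C{=}0|Z{=}1)=2/3$ is a global statement rather than a per-$\lambda$ one, but this is harmless since the lemma is only ever invoked (in the subsequent corollary) for $\lambda$ with $P(C{=}0|Z{=}1,\Lambda{=}\lambda)>0$.
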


\begin{cor}\label{appbcor1}
If the players win perfectly Game~\ref{BKPgameAB} (\emph{i.e.}, $\lim_{M\to \infty} I_{{BKP}_M}^{C{=}0}=0$ and $P(C{=}0|Z{=}1)=2/3$) and Game~\ref{One1Two0},
\begin{equation}
\mathop{\mathbb{E}}_\lambda (f_{{A}{B}}^{\lambda})\ge 2/3\,.
\end{equation}
Note that the equivalent for Game~\ref{BKPgameBC} (replacing Game~\ref{BKPgameAB}) holds, when Alice and Charlie are switched.

\begin{proof}
By contradiction from combining Lemma~\ref{appBlemma1} and $\lim_{M\to \infty}I_{{BKP}_M}^{C{=}0}=0$, we have that $\forall \lambda$ such that $P(C{=}0|Z{=}1,\Lambda{=}\lambda)\neq 0$, $f_{AB}^\lambda=1$. To have $P(C{=}0|Z{=}1)=2/3$, the summed weight of these $\lambda$s must be at least $2/3$. Therefore, $\mathop{\mathbb{E}}_\lambda (f_{{A}{B}}^{\lambda})\ge 2/3$.
\end{proof}
\end{cor}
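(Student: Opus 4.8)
The plan is to argue by contradiction, leveraging Lemma~\ref{appBlemma1} together with the hypothesis that the conditioned chained-Bell score attains its algebraic minimum in the limit $M\to\infty$. The core observation is that, for correlations produced by shared randomness $\Lambda$ and bipartite GPT resources, the functional $I_{{\rm BKP}_M}^{C{=}0}$ is a convex average over $\lambda$ of the per-$\lambda$ scores $I_{{\rm BKP}_M}^{C{=}0}(\lambda)$, while Lemma~\ref{appBlemma1} pins every such per-$\lambda$ score at or above $1$ whenever $f_{AB}^\lambda=0$. I would first make this decomposition and its weights explicit, then show that a single offending $\lambda$ carrying positive weight forbids convergence to $0$, and finally convert the resulting pointwise statement into the claimed lower bound on $\mathop{\mathbb{E}}_\lambda(f_{AB}^\lambda)$ by accounting for probability weights.

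First I would establish the decomposition. Every term of $I_{{\rm BKP}_M}^{C{=}0}$ has Charlie's input fixed to $Z{=}1$ and is conditioned on $C{=}0$; since the inputs are independent of $\Lambda$ and, by No Signalling, Charlie's marginal output is independent of Alice's and Bob's inputs, Bayes' rule yields the $(x,y)$-independent posterior $P(\Lambda{=}\lambda|C{=}0,Z{=}1)=P(C{=}0|Z{=}1,\lambda)\,P(\lambda)/P(C{=}0|Z{=}1)$. Summing the terms then gives
\[
I_{{\rm BKP}_M}^{C{=}0}=\mathop{\mathbb{E}}_\lambda\!\left[\frac{P(C{=}0|Z{=}1,\lambda)}{P(C{=}0|Z{=}1)}\,I_{{\rm BKP}_M}^{C{=}0}(\lambda)\right],
\]
a genuine convex combination of the nonnegative quantities $I_{{\rm BKP}_M}^{C{=}0}(\lambda)\ge 0$.

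Next I would derive the contradiction. Suppose some $\lambda_0$ had $f_{AB}^{\lambda_0}=0$ while $P(C{=}0|Z{=}1,\lambda_0)>0$. By Lemma~\ref{appBlemma1} (which uses that Game~\ref{One1Two0} is won perfectly) we have $I_{{\rm BKP}_M}^{C{=}0}(\lambda_0)\ge 1$ for every $M$, so the displayed average is bounded below by the $M$-independent positive constant $P(\lambda_0)\,P(C{=}0|Z{=}1,\lambda_0)/P(C{=}0|Z{=}1)$, contradicting $\lim_{M\to\infty}I_{{\rm BKP}_M}^{C{=}0}=0$. Hence every $\lambda$ with $P(C{=}0|Z{=}1,\lambda)>0$ must satisfy $f_{AB}^\lambda=1$. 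Writing $S=\{\lambda:P(C{=}0|Z{=}1,\lambda)>0\}$, the pointwise bound $P(C{=}0|Z{=}1,\lambda)\le \mathbf{1}_S(\lambda)$ together with the hypothesis $P(C{=}0|Z{=}1)=2/3$ gives $2/3=\mathop{\mathbb{E}}_\lambda[P(C{=}0|Z{=}1,\lambda)]\le \mathop{\mathbb{E}}_\lambda[\mathbf{1}_S]$, so $S$ carries prior weight at least $2/3$; since $f_{AB}^\lambda=1$ throughout $S$, this yields $\mathop{\mathbb{E}}_\lambda(f_{AB}^\lambda)\ge 2/3$. The Alice--Charlie swap (for Game~\ref{BKPgameBC}) follows verbatim.

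The main obstacle I anticipate is justifying the convex-decomposition step cleanly: namely that conditioning on $C{=}0$ reweights the $\lambda$-ensemble by the factor $P(C{=}0|Z{=}1,\lambda)$ rather than leaving the prior intact, and that this same reweighting applies simultaneously to all terms of the functional (which is precisely what the $(x,y)$-independence of the posterior, hence No Signalling, secures). Once that is in place, the remaining steps are elementary: the nonnegativity of each $I_{{\rm BKP}_M}^{C{=}0}(\lambda)$ turns a single offending $\lambda$ into an $M$-uniform obstruction to reaching the algebraic minimum, and the constant $2/3$ propagates directly through the weight accounting.
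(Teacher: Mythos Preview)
Your proposal is correct and follows essentially the same route as the paper's proof: both argue that any $\lambda$ with $P(C{=}0|Z{=}1,\lambda)>0$ must have $f_{AB}^\lambda=1$ (else Lemma~\ref{appBlemma1} obstructs $I_{{\rm BKP}_M}^{C{=}0}\to 0$), and then observe that the prior mass of such $\lambda$ is at least $2/3$ because $\mathop{\mathbb{E}}_\lambda[P(C{=}0|Z{=}1,\lambda)]=2/3$ and $P(C{=}0|Z{=}1,\lambda)\le \mathbf{1}_S(\lambda)$. Your version simply makes explicit the posterior-reweighted convex decomposition and the No-Signalling justification for its $(x,y)$-independence, which the paper leaves implicit in its one-line contradiction; the only cosmetic point is that your single-$\lambda_0$ phrasing tacitly assumes atoms, but this is the same level of rigor as the paper and is trivially patched by replacing the point with a positive-measure set.
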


\section{A computational method to prove the genuineness of LOSR multipartite nonlocality}\label{sec:NumericalNoiseTolerence}
In the previous section, we exhibited several genuinely LOSR-multipartite-nonlocal correlations. 
We proved that the quantum states $\ket{{\rm GHZ}_N}$ and $\ket{\rm W}$ can produce such correlations.
We obtained some (limited) noise-tolerant results for the $\ket{{\rm GHZ}_N}$ state.
In this section, we provide a linear-programming method to obtain certificates of genuine LOSR multipartite nonlocality based on the inflation technique. 
We show that this method improves the noise tolerance that we obtained in the previous section for $\ket{\rm GHZ_3}$. 
The improved noise tolerance makes within experimental reach a demonstration that Nature is not merely bipartite. This method consists in a hierarchy of linear-programming (LP) problems able to characterize a relaxation of the set of LOSR-theory-agnostic correlations.

We first introduce the set of \emph{weakly ${(N{-}1)}$-\textbf{LOSR} theory-agnostic correlations} which we then show can be freely strengthened, and from which we then define an explicit hierarchy.

\begin{definition}[Weakly ${(N{-}1)}$-\textbf{LOSR} theory-agnostic correlation]\label{def:WeakLOSRTheoryAgnosticCorrelation}
Consider an $N$-partite nonsignalling correlation $P$. We say that $P$ is a \term{Weakly ${(N{-}1)}$-LOSR theory-agnostic correlation} if for every any-order nonfanout inflation $\cI$ of $\cN_N$, there exists a nosignalling correlation $Q$ of the party of $\cI$ such that $Q$ satisfies (C1) with respect to $P$, and (C2) with respect to $\cI$.
\end{definition}

Note first that due to the comment at the end of Definition~\ref{def:LOSRTheoryAgnosticCorrelation}, the set of Weakly ${(N{-}1)}$-{LOSR} theory-agnostic correlations is clearly a relaxation of the set of ${(N{-}1)}$-{LOSR} theory-agnostic correlations. 

One can readily anticipate an implementation in terms of linear programs (see Section~\ref{sec:LPHierarchyWeakLOSRMultNonloc}), as the conditions over $Q$ are linear for any fixed inflation $\cI$.
In the following, using De Finetti's theorem, we first show that a looser version of the nonlinear condition $(C3)$ can be derived from the relaxed definition. 

\subsection{A free strengthening of the defining conditions for weakly \texorpdfstring{${(N{-}1)}$}{(N-1)}-LOSR theory-agnostic correlations}

The following Proposition shows that a looser version of the nonlinear condition $(C3)$ is implied by Definition~\ref{def:WeakLOSRTheoryAgnosticCorrelation}. 
\begin{proposition}[Weakly ${(N{-}1)}$-\textbf{LOSR} theory-agnostic correlation]\label{propo:WeakLOSRTheoryAgnosticCorrelation}
Consider an $N$-partite nonsignalling correlation $P$. $P$ is a \emph{Weak ${(N{-}1)}$-LOSR theory-agnostic correlation} if and only if, for every any-order nonfanout inflation $\cI$ of $\cN_N$, there exists a random variable $\lambda$ of density $d\mu(\lambda)$ and nonsignalling correlation $Q_\lambda$ such that, with $Q=\int \dd\mu(\lambda)Q_\lambda$:
\begin{enumerate}
    \item $Q$ satisfies $(C1)$ with respect to $P$: for all two $(G_1,G_2)$ sub-networks of $(\cI,\cN_N)$, if the two are isomorphic, then $\int \dd\mu(\lambda){Q_\lambda}_{|{G_1}}=P_{|{G_2}}$.
    \item $Q$ satisfies $(C2)$ with respect to $\cI$: for all two $(G_1,G_2)$ sub-networks of $(\cI,\cI)$, if the two are isomorphic, then $\int \dd\mu(\lambda){Q_\lambda}_{|{G_1}}=\int \dd\mu(\lambda){Q_\lambda}_{|{G_2}}$.
    \item $Q$ satisfies a loosening of $(C3)$ with respect to $\cI$: For all two non-overlapping $(G_1,G_2)$ sub-networks of $\cI$, if the two have no sources in common, then $\int \dd\mu(\lambda){Q_\lambda}_{|G_1\cupdot G_2}= \int \dd\mu(\lambda){Q_\lambda}_{|G_1}\cdot{Q_\lambda}_{|G_2}$. 
\end{enumerate}
\end{proposition}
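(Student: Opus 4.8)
The nontrivial direction is the "only if": given that $P$ is weakly $(N{-}1)$-LOSR theory-agnostic, I must produce, for every inflation $\cI$, a mixture $Q=\int \dd\mu(\lambda)Q_\lambda$ satisfying conditions 1--3. The key idea is that condition (C2) — invariance under relabelling of copy indices — is precisely an \emph{exchangeability} statement, and the quantum/classical De Finetti theorem says exchangeable states are mixtures of i.i.d.\ ones. So first I would fix an inflation $\cI$ and, rather than working with $\cI$ at fixed order $K$, consider a tower of inflations $\cI, \cI^{\otimes 2}, \cI^{\otimes 3}, \dots$ (or, in the language of Figure~\ref{Fig:3WayL2Classes}, the inflation $\cJ$ built from many disjoint relabelled copies of $\cI$). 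Definition~\ref{def:WeakLOSRTheoryAgnosticCorrelation} guarantees a nonsignalling $Q^{(n)}$ on $n$ copies of $\cI$ satisfying (C1) and (C2); crucially (C2) forces $Q^{(n)}$ to be symmetric under permutations of the $n$ blocks, and the family $\{Q^{(n)}\}_n$ is consistent (marginalizing $Q^{(n+1)}$ onto $n$ blocks gives $Q^{(n)}$, again by (C1)/(C2)). This is an infinite exchangeable sequence of nonsignalling boxes.

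Next I would invoke the De Finetti representation theorem for exchangeable sequences of conditional probability distributions (equivalently, for nonsignalling boxes — see the classical version; a suitable statement is that an infinitely exchangeable sequence of behaviours is a convex mixture of i.i.d.\ behaviours). This yields a measure $\dd\mu(\lambda)$ on single-copy nonsignalling correlations $Q_\lambda$ of $\cI$ such that $Q^{(n)}=\int\dd\mu(\lambda)\,Q_\lambda^{\otimes n}$ for all $n$. Setting $Q\coloneqq\int\dd\mu(\lambda)Q_\lambda = Q^{(1)}$, condition 1 (i.e.\ (C1) for $Q$) is immediate since $Q=Q^{(1)}$ already satisfies (C1), and condition 2 ((C2) for $Q$) follows by linearity of the integral from (C2) of $Q^{(1)}$. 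For condition 3, take non-overlapping sub-networks $G_1,G_2$ of $\cI$ with no common sources. The point is that inside $\cI^{\otimes 2}$ one finds an isomorphic copy of $G_1$ sitting in the first block and an isomorphic copy of $G_2$ in the second block, and since these live in \emph{different} i.i.d.\ blocks, the De Finetti product structure gives $Q^{(2)}_{|G_1^{(1)}\cupdot G_2^{(2)}}=\int\dd\mu(\lambda)\,{Q_\lambda}_{|G_1}\cdot{Q_\lambda}_{|G_2}$. Meanwhile (C2) within $\cI^{\otimes 2}$ identifies $G_1^{(1)}\cupdot G_2^{(2)}$ with $G_1\cupdot G_2$ (a sub-network of a single copy of $\cI$), so $Q^{(2)}_{|G_1^{(1)}\cupdot G_2^{(2)}}=Q^{(1)}_{|G_1\cupdot G_2}=Q_{|G_1\cupdot G_2}=\int\dd\mu(\lambda){Q_\lambda}_{|G_1\cupdot G_2}$. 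Equating the two expressions yields the loosened (C3) of condition 3. The converse direction is trivial: conditions 1 and 2 are literally (C1) and (C2) for $Q$, so $Q$ witnesses that $P$ is weakly $(N{-}1)$-LOSR theory-agnostic.

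\textbf{The main obstacle} I expect is making the exchangeability-to-product step fully rigorous: one must check that the sub-network isomorphism bookkeeping genuinely places $G_1$ and $G_2$ into distinct i.i.d.\ De Finetti blocks (this uses that $G_1,G_2$ share no sources, so their union in $\cI^{\otimes 2}$ can be realized across two copies without violating the inflation-compatibility rules), and one must use the correct quantitative De Finetti statement — a finite-$n$ version with vanishing error as $n\to\infty$ suffices since the conditions in question are closed, but the cleanest route is the exact infinite-exchangeability theorem applied to the consistent tower $\{Q^{(n)}\}$. A secondary subtlety is that the relevant objects are conditional distributions (boxes with inputs), not plain probability measures, so one should cite or adapt a De Finetti theorem in that setting; the nonsignalling constraint is preserved under the mixture, which keeps everything within the stated hypotheses.
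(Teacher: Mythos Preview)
Your proposal is correct and follows essentially the same route as the paper: build the infinite inflation consisting of i.i.d.\ copies of $\cI$, use (C2) to get exchangeability, apply De Finetti to obtain the mixture $\int\dd\mu(\lambda)\,Q_\lambda^{\otimes n}$, and derive the loosened (C3) by identifying $G_1\cupdot G_2$ inside a single copy with $G_1^{(1)}\cupdot G_2^{(2)}$ spread across two copies. The only cosmetic difference is that the paper works directly with a single correlation $R$ on the infinite inflation $\cJ=\{\cI_p\}_{p\in\mathbb{N}}$ rather than with a consistent tower $\{Q^{(n)}\}$, and it does not pause over the De Finetti subtleties for conditional distributions that you (rightly) flag.
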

Remark that in Definitions~\ref{def:LOSRTheoryAgnosticCorrelation}, the conditions $(C1), (C2), (C3)$ were imposed for every fixed $\lambda$. This proposition replaces all these conditions by weaker versions in which one first integrate over $\lambda$ before imposing the constraint.

\begin{proof}
Consider a $P$ satisfying the proposition's conditions. 
Consider an inflation $\cI$ of $\cN_N$. We need to find a correlation $Q$ of the parties in $\cI$ which decomposes as $Q=\int \dd\mu(\lambda)Q_\lambda$ such that the conditions $1., 2., 3.$ of the proposition are satisfied.
For this, we introduce a larger (infinite) inflation $\cJ=\{\cI_p\}_{p\in\mathrm{N}}$ of both $\cN_N$ and $\cI$, which consists of infinitely many independent copies of $\cI$.
As $P$ satisfies the proposition's conditions, there exists a nonsignalling correlation $R=R_{|\{\cP_p\}_{p\in\mathrm{N}}}$ over all the parties in $\cJ$ such that $R$ satisfies $1.$ and $2.$.

Remark first that as $2.$ is satisfied, for any permutation $\sigma$ of $\mathrm{N}$, the inflation $\cJ^\sigma=\{\cI_{\sigma(p)}\}_{p\in\mathrm{N}}$ which consists in a reordering of the copies of $\cI$ is isomorphic to $\cJ$. Hence, $R$ is invariant under any permutation $\sigma$ of the parties: $R_{|\{\cP_p\}_{p\in\mathrm{N}}}=R_{|\{\cP_{\sigma{(p)}}\}_{p\in\mathrm{N}}}$.
By the De Finetti theorem, this implies that $R$ is a mixture of independent and identically distributed probability distributions over the $\{\cP_p\}$:
\begin{equation}\label{eq:DeFinetti}
    R = \int d\mu(\lambda) (Q_\lambda)^{\otimes\infty}
\end{equation}

We consider the marginal of $R$ over $\cI_0$, a sub-network of $\cJ=\{\cI_p\}_{p\in\mathrm{N}}$, which can be written $Q=R_{|\cI_0}=\int d\mu(\lambda) Q_{\lambda}$. $Q$ can also be seen as a distribution over the parties of $\cI$.
As $R$ satisfies $1.$ and $2.$, $Q$ clearly satisfies $1.$ and $2.$.
Moreover, consider two sub-networks $G_1,G_2$ of $\cI$ which have no sources in common. Then, there exist two ways to see the network $(G_1,G_2)$ as a sub-network of $\cJ$: 
\begin{itemize}
    \item $(G_1,G_2)$ is a sub-network of $\cI_0$, hence of $\cJ$: we call it $G\subset \cI_0\subset \cJ$. Note that $G$ can also be seen as a sub-network of $\cI$.
    \item We can also see $(G_1,G_2)$ as a sub-network of $\cI_1\times \cI_2$ where $G_1\subset \cI_1$ and $G_2\subset \cI_2$. In this case $(G_1,G_2)$ can be seen as a different sub-network of $\cJ$: we call it $G'\subset \cI_1\times \cI_2 \subset \cJ$ .
\end{itemize}
Remark that, as $G_1,G_2$ have no sources in common, the two sub-networks $G$ and $G'$ of $\cJ$ are isomorphic.

Then, we have $ \int \dd\mu(\lambda){Q_\lambda}_{|G_1G_2}= Q_{|G} = R_{|G} = R_{|G'} = \int \dd\mu(\lambda){Q_\lambda}_{|G_1}\cdot{Q_\lambda}_{|G_2}$, where we used the fact that $R$ satisfies $2.$ in the third equality and Eq.~\eqref{eq:DeFinetti} in the fifth equality.
\end{proof}

\subsection{A Linear Programming Hierarchy}\label{sec:LPHierarchyWeakLOSRMultNonloc}

Definition~\ref{def:WeakLOSRTheoryAgnosticCorrelation}
leads to a natural algorithmic way to prove that a nonsignaling correlation $P$ is not a weakly ${(N{-}1)}$-LOSR theory-agnostic correlation. 
As this notion is a relaxation, the success of the algorithm directly implies that $P$ is not a ${(N{-}1)}$-LOSR theory-agnostic correlation, hence that $P$ is genuinely LOSR $N$-partite nonlocal.

Our hierarchy is based on enumerating \emph{all} $K^{\rm th}$-order inflations $\cI_1, ..., \cI_{p_K}$ of $\cN_N$, requiring a $Q$ satisfying Definition~\ref{def:WeakLOSRTheoryAgnosticCorrelation} for each of them, but also imposing cross-inflation compatibility constraints, which would normally only show up at a higher-order inflation. Nevertheless, adding these extra constraints does not require any increase in the number of variables in the linear program, and hence it would be wasteful in practice not to include them.
\begin{definition}[The $K^{\rm th}$-order inflation test for evaluating if $P$ might be a weakly ${(N{-}1)}$-LOSR theory-agnostic correlation]
Consider an $N$-partite nonsignalling correlation $P$. Then, a necessary condition for $P$ to be a weakly ${(N{-}1)}$-LOSR theory-agnostic correlation is that for every nonfanout inflation $\cI$ of $\cN_N$ (up to order $K$), there exists a nonsignalling correlation $Q^{(\cI)}$ of $N\times K$ parties such that:
\begin{enumerate}
    \item[(C1)]
    For all two $(G_1,G_2)$ sub-networks of $(\cI,\cN_N)$, if the two are isomorphic, then $Q^{(\cI)}_{|{G_1}}=P_{|{G_2}}$.
    \item[(C2+)] For all two $(G_1,G_2)$ sub-networks of a pair of $K^{\rm th}$-order inflations $(\cI_1,\cI_2)$, including but not limited to the special case $\cI_1=\cI_2$, if $G_1$ and $G_2$ are isomorphic, then $Q^{(\cI_1)}_{|{G_1}}=Q^{(\cI_2)}_{|{G_2}}$.
\end{enumerate}
\end{definition}

This algorithm is a direct adaptation of the algorithms presented in the original papers on the inflation technique~\cite{Wolfe2016inflation,Navascues2017completion}, hence we only sketch it.

\newcommand{\blockindent}[1]{{\addtolength{\leftskip}{\algorithmicindent}\relax  {\noindent #1 \par}}}

\begin{algorithm}[H]
	\caption{The $K^{\rm th}$-order inflation test for evaluating if $P$ might be a weakly ${(N{-}1)}$-LOSR theory-agnostic correlation} \label{algo:WeakLOSTthAgnCorr}
	\begin{algorithmic}[1]

	    \State \textbf{INPUT:} \emph{An $N$-partite nonsignalling correlation $P$ and an integer $K$ specifying the hierarchy order}
			\State Enumerate all $K^{\rm th}$-order inflations $\cI_1, ..., \cI_{p_K}$ of $\cN_N$.
		
			\For{$i=1,\ldots,p_K$}

								\vspace{2pt}\State
		
				    \vspace{-\baselineskip}
				\blockindent{Find $A_1^{(i)}$, $B_1^{(i)}$ such that the linear-compatibility conditions $(C1)$ between the unknown correlation $Q^{(\cI_i)}$ and the distribution $P$ can be written as ${A_1^{(i)}\cdot Q^{(\cI_i)} =B_1^{(i)}}$.}

				\For {$j=i,\ldots,p_K$}
				
								\vspace{2pt}\State
				
				    \vspace{-\baselineskip} \blockindent{\blockindent{Find $A_2^{(i)}$, $A_3^{(j)}$ such that for every pair of isomorphic subgraphs of $\cI_i$ and $\cI_j$ the linear-compatibility conditions $(C2+)$ between the unknown marginal correlations $Q^{(\cI_1)}_{|{G_1}}$ and $Q^{(\cI_2)}_{|{G_2}}$ can be captured by constraints ${A_2^{(i)}\cdot Q^{(\cI_i)} =A_3^{(j)}\cdot Q^{(\cI_j)}}$.}}
				\EndFor
		\EndFor

			\State Solve the Linear Program (LP) regarding the existence of vectors $0\leq Q^{(\cI_1)},...,Q^{(\cI_{p_K})}\leq 1$ such that \begin{compactitem}[\hspace{15pt}\raisebox{0.25ex}{\tiny $\bullet$}]
			    \item for all $i\in\{1,\ldots,p_K\}$, each $Q^{(\cI_i)}$ is a correlation;
			    \item and for all $i\in\{1,\ldots,p_K\}$, the correlation satisfies ${A_1^{(i)}\cdot Q^{(\cI_i)} =B_1^{(i)}}$;
			    \item and for all $i,j\in\{1,\ldots,p_K\}$ such that $i\leq j$, the pairs of correlations satisfy ${A_2^{(i)}\cdot Q^{(\cI_i)} =A_3^{(j)}\cdot Q^{(\cI_j)}}$.
			\end{compactitem}

			\If  {LP is infeasible (\emph{i.e.}, the constraints cannot be simultaneously satisfied)}

								\vspace{0pt}\State
				
				    \vspace{-\baselineskip}
				    \blockindent{Output ``$P$ is not a Weakly ${(N{-}1)}$-LOSR theory-agnostic Correlation.''}
			\EndIf
	\end{algorithmic} 
\end{algorithm}

Details of the practical technicalities involved with formulating the appropriate $A$~matrices and $B$~vectors can be found in Ref.~\cite{Wolfe2016inflation}. Infeasibility of the LP indicates that $P$ is not a weak ${(N{-}1)}$-LOSR theory-agnostic correlation, and hence that $P$ is genuinely LOSR $N$-partite nonlocal.

Note that in finite time one can only ever test $K^{\rm th}$-order inflations up to some finite $K$. Hence, in finite time, the algorithm cannot prove that $P$ \emph{is} a weakly ${(N{-}1)}$-LOSR theory-agnostic correlation (but it can prove it is \emph{not}). 
In other words, this algorithm is only useful in proving the genuine LOSR $N$-partite nonlocality of a distribution; happily, this is precisely our goal.

\subsection{A better noise tolerance for \texorpdfstring{$\ket{{\rm GHZ}_3}$}{ket(GHZ3)}}

Our proposition~\ref{prop:ghz3-technical} in Section~\ref{sec:GHZ3}, which generalizes Proposition~3 of~\cite{PRL}, proves that the state $\ket{{\rm GHZ}_3}$ can produce genuinely LOSR-tripartite-nonlocal correlations. 
In this section, we focus on the noise tolerance of this claim: With a noisy source of $\ket{{\rm GHZ}_3}$ states, can one still observe genuinely LOSR-tripartite-nonlocal correlations? 
This question is of particular interest for experimental concerns.

For simplicity, we focus on the case of white noise, for a state measured with optimal measurements operators (the following can be generalized to other noise models).
We consider a source emitting a mixture of $\ket{{\rm GHZ}_3}$ with the maximally mixed state,
\begin{equation}
\rho_p = p\ket{{\rm GHZ}_3}\bra{{\rm GHZ}_3} + (1-p) \id / 8\,,
\end{equation}
and look for conditions on $p$ ensuring that $\rho_p$ can demonstrate genuinely LOSR-tripartite-nonlocal correlations.
The fidelity of $\rho_p$ with $\ket{{\rm GHZ}_3}$ is $f=\bra{GHZ_3}\rho_p\ket{{\rm GHZ}_3}=(1+7p)/8$, \emph{i.e.}, $p=(8f-1)/7$.

Remark first that Proposition~\ref{prop:ghz3-technical} already allows to find a first noise-tolerant bound: with $\rho_p$, performing the same measurements as in the ideal protocol, $I_{Bell}^{C_1=1}[\rho_p]=p\cdot 2\sqrt{2}$, $I_{Same}[\rho_p]=p\cdot 2$ and $\langle C_1\rangle=0$, hence~\eqref{eq:ghz3-technical} is violated as long as $p\cdot(2\sqrt{2}+8)>10$. Hence, we obtain a first proof of genuine LOSR tripartite nonlocality for $p \gtrsim92\%$, corresponding to a fidelity $f \gtrsim 93\%$. 
This bound is experimentally challenging. 
For instance, recent experimental work could prove a violation of Mermin and Svetlichny inequalities with a three-photon $\ket{{\rm GHZ}_3}$ state of fidelity $\sim 86\%$~\cite{Hamel2014GHZ3MerminSvetlichny}.

To improve the noise tolerance of our result, we implemented Algorithm~\ref{algo:WeakLOSTthAgnCorr} using Mathematica and evaluated it for the inflation test of order ${K{=}2}$.
Considering again the correlation obtained by measuring $\rho_p$ with the same measurements as in the ideal protocol of Proposition~\ref{prop:ghz3-technical}, we obtained a certificate of LOSR tripartite nonlocal genuiness for all state with $p > 2\sqrt(2)-2 \approx 83\%$, \emph{i.e.}, of fidelity $f \gtrsim 85\%$.

This improvement shows the importance of the computation approach for experimental proofs of the claim that Nature is not merely $N$-partite for low $N$ (see~\cite{Hamel2014GHZ3MerminSvetlichny,GHZExperiment6Photons,Chao2019,Proietti2019} for experimental capabilities up to $N=6$). 
Considering higher order inflation tests may result in better noise-tolerant results. We also emphasize that Algorithm~\ref{algo:WeakLOSTthAgnCorr} can also be applied to the $\ket{{\rm GHZ}_N}$ and $\ket{{\rm W}}$ cases, possibly with alternative quantum measurements.

\section{On LOCC vs LOSR and everything in between}\label{Sec:LOCCvsLOSR}

\subsection{A generalization of \texorpdfstring{$(N{-}1)$}{N-1}-theory-agnostic correlations to \texorpdfstring{$k$}{k}-theory-agnostic correlations}

In Section~\ref{Sec:GenuineMultipNonlocCorr}, we introduced Definition~\ref{def:LOSRTheoryAgnosticCorrelation}, namely ${(N{-}1)}$-LOSR theory-agnostic correlations, for describing the set of $N$-partite correlations that can be obtained by local composition with any GPT ${(N{-}1)}$-partite resources as well as as $N$-partite shared randomness.

Here we firstly note, in an informal way, that this definition can easily be altered to characterize instead the $N$-partite theory-agnostic correlations that can be obtained by allowing for shared randomness alongside GPT \emph{$k$-partite resources}, for some $k<N$.
To do this, one needs simply consider the $N$-party network scenario in which every subset of $k$ parties is connected to an arbitrary causal GPT resource. One can then proceed, as before, by considering $K^{th}$-order nonfanout inflation of this network scenario.

\subsection{Several definitions of genuine multipartite nonlocality}

We defined as genuinely LOSR $N$-multipartite nonlocal the correlations which are not ${(N{-}1)}$-theory-agnostic. One can, however, consider a variety of related definitions of genuinely multipartite-nonlocal distributions, and of associated $k-$theory-agnostic correlations. Here we enumerate some of them, and discuss on how they interrelate to one another.

In the following, we consider various physical causal theories for correlations such as the classical, quantum and boxworld theories, or the signalling-boxes theory, which allows for signalling distributions. We consider the set of all $k-$partite \emph{$\mathcal{R}$-like boxes}, that is, all correlations which can be obtained in a $k-$party scenario in a theory $\mathcal{R}$. We call ${\boldsymbol{P}}_{\mathcal{R}}^{\leq k}$ this set of correlations. Given $\mathcal{R}$, one can define a notion of LOCC nonclassicality as follows:

\begin{definition}[Genuine LOCC-$\mathcal{R}$ tripartite nonlocality]
A tripartite nonsignalling correlation $P$ is said to be \term{LOCC-$\mathcal{R}$ tripartite producible} if $P$ can be decomposed into a convex mixture of products of onepartite and bipartite $\mathcal{R}$-like boxes, \emph{i.e.}, correlations in ${\boldsymbol{P}}_{\mathcal{R}}^{\leq 2}$.\\
A distribution which is not LOCC-$\mathcal{R}$ bipartite producible is said to be \term{genuinely LOCC-$\mathcal{R}$ tripartite nonlocal}.
\end{definition}

In this definition, the term LOCC indicates that one is allowed to perform local operation over the \emph{classical} inputs and output of $\mathcal{R}$-like boxes, such as post-processing or wiring. Local operations over the \emph{physical states} in theory $\mathcal{R}$ are not allowed. In particular, when $\mathcal{R}\to {\mathcal{Q}}$ is quantum theory, entanglement swapping is not an allowed operation as it cannot be performed via local operation on some quantum measurement classical outputs.

This definition specializes to the standard Svetlichny notion of multipartite nonlocality (Definition~2 of \cite{PRL}) upon taking ${\boldsymbol{P}}_{\mathcal{R}}^{\leq 2}$ to be the set of all 
(onepartite and bipartite) correlations, including signalling correlations, \emph{i.e.} ,
$\mathcal{R}\to {\mathcal{S}\mathscr{ig}}$. One can also take 
$\mathcal{R}\to {\mathcal{NS}}$
to be the set of all (onepartite and bipartite) nonsignalling boxes, such as in Refs.~\cite{Curchod2015MultipartiteNonlocality,Baccari2018NonlocalityDepth}. Additional significant choices for $\mathcal{R}$ include the $\mathcal{R}\to {\mathcal{Q}}$
--- obtaining ${\boldsymbol{P}}_{\mathcal{Q}}^{\leq 2}$, the set of all quantum (onepartite and bipartite) correlations --- as well as 
$\mathcal{R}\to \mathcal{TOBL}$
--- obtaining ${\boldsymbol{P}}_{\mathcal{TOBL}}^{\leq 2}$, the set of all (onelocal and bilocal) time-ordered (TOBL) correlations, see Refs.~\cite{Gallego2012TOBL,bancal2013definitions}.

Ref.~\cite{Curchod2015MultipartiteNonlocality} provides a quantitative generalization of these LOCC-centric definitions of multipartite nonlocality to more than three parties: 
\begin{definition}[LOCC-$\mathcal{R}$ minimal group size]
An $N$-partite correlation $P$ is said to be \term{LOCC-$\mathcal{R}$ $k$-partite producible} if $P$ can be decomposed into a convex mixture of products of \emph{at most} $k$-partite $\mathcal{R}$-like boxes, \emph{i.e.}, correlations in ${\boldsymbol{P}}_{\mathcal{R}}^{\leq k}$.\\
 The \term{LOCC-$\mathcal{R}$ minimal group size} of a correlation $P$ is the smallest $k$ such that $P$ is LOCC-$\mathcal{R}$ $k$-partite producible.
\end{definition}

A plurality of definitions can similarly be encompassed within a spectrum of notions of LOSR multipartite nonlocality. 

It is important to keep in mind that in the LOCC-$\mathcal{R}$ sub-definitions, one is limited to the type \enquote{$\mathcal{R}$} \emph{boxes}, producing the correlations ${\boldsymbol{P}}_{\mathcal{R}}^{\leq k}$, to find a convex decomposition of $P$. 
By contrast, in the following LOSR-$\mathcal{S}$ sub-definitions, one can use some family of $k$-way \emph{sources} $\omega^{\leq k}_{\mathcal{S}}$ that are comprising the elementary constituents of a physical network. 

For any class of type-$\mathcal{S}$ sources which can serve as nonclassical resources in a physical network we have:
\begin{definition}[Genuine LOSR-$\mathcal{S}$ tripartite nonlocality]
A tripartite nonsignalling correlation $P$ is said to be \term{LOSR-$\mathcal{S}$ tripartite producible} if $P$ can be obtained by local operations over any 2-way $\mathcal{S}$-type resources $\omega^{\leq k}_{\mathcal{S}}$ along with 3-way shared randomness between all parties.\\
A distribution which is not LOSR-$\mathcal{S}$ tripartite producible is said to be \term{genuinely LOSR-$\mathcal{S}$ tripartite nonlocal}.
\end{definition}
We define bipartite GPT states as the states which allow to recover our definition of genuinely LOSR-multipartite-nonlocal correlations for $N=2$, that is, the states ${\boldsymbol{\omega}}_{\mathcal{GPT}}^{\leq 2}$ recovers Definition~\ref{def:GenuineLOSRMultip}. Alternatively, one could take the 2-way resources to be the quantum states ${\boldsymbol{\psi}}_{\mathcal{Q}}^{\leq 2}$. This \emph{quantum} causal notion of LOSR multipartite nonlocality is equivalent to the definition of 3-way nonlocality given in Ref.~\cite{LOCCInappropriate}. One could also imagine explicit nonclassical theories distinct from quantum theory, such as explicit variants of the \emph{boxworld} GPT~\cite{Janotta2012Boxword}. 

The multipartite generalization is as follows:
\begin{definition}[LOSR-$\mathcal{S}$ minimal group size]
An $N$-partite correlation $P$ is said to be \term{LOSR-$\mathcal{S}$ $k$-partite producible} if $P$ can be obtained via local operations acting on some network consisting of various $k$-way $\mathcal{S}$-type sources along with $N$-way classical randomness shared between all parties.\\
 The \term{LOSR-$\mathcal{S}$ minimal group size} of a correlation $P$ is the smallest $k$ such that $P$ is LOSR-$\mathcal{S}$ $k$-partite producible.
\end{definition}

\subsection{Networks with sources distributing nonlocal boxes instead of entangled states}

To assess LOSR multipartite nonlocality, we are imagining $N$-partite networks wherein every size $k$ subset of parties shares a nonclassical source. Accordingly, every individual party is connected to $\binom{N{-}1}{k-1}$ distinct sources. (For the triangle we have $N{=}3$ and $k{=}2$, and every party is connected to two sources.)

This manuscript is concerned with GPT sources ${\boldsymbol{\omega}}_{\mathcal{GPT}}$, \emph{i.e.}, sources which distribute GPT entanglement. We have also alluded to sources which distribute \emph{quantum} entanglement ${\boldsymbol{\psi}}_{\mathcal{Q}}$. In all such cases, we are considering the sources themselves to be described as multipartite entangled \emph{states}. One can, however, imagine a network in which the $k$-way sources connecting noncommunicating parties are taken to be \emph{nonlocal boxes} instead of entangled states. 

That is to say, in addition to possibly considering ${\boldsymbol{\omega}}_{\mathcal{GPT}}^{\leq k}$ and ${\boldsymbol{\psi}}_{\mathcal{Q}}^{\leq k}$ we can imagine to consider ${\mathcal{S}\to {\boldsymbol{P}}_{\mathcal{NS}}^{\leq k}}$ (resp. ${\mathcal{S}\to {\boldsymbol{P}}_{\mathcal{Q}}^{\leq k}}$), that is to consider correlations in ${\boldsymbol{P}}_{\mathcal{NS}}^{\leq k}$ (resp. ${\boldsymbol{P}}_{\mathcal{Q}}^{\leq k}$) as our sources. 

When the sources in a network are themselves multipartite entangled states, then the local operations performed by a single party (say, Alice) are described by \term{entangled measurements} applied to Alice's subspaces within her $\binom{N{-}1}{k-1}$ connected sources. By contrast, when the sources in a network are themselves nonlocal boxes, then the local operations performed by Alice are described by \term{wirings} that she applies to her portions of the $\binom{N{-}1}{k-1}$ nonlocal boxes that she is connected to. 

Entangled measurements are more general than wirings. Accordingly, the set of correlations realizable using a network of $k$-way quantum sources (${\boldsymbol{\psi}}_{\mathcal{Q}}^{\leq k}$) includes the set of correlations realizable using a network of $k$-way quantum-correlation boxes (${\mathcal{S} \to {\boldsymbol{P}}_{\mathcal{Q}}^{\leq k}}$).

It is worth emphasizing that the inclusion is \emph{strict} however.

\begin{prop}The set of tripartite correlations which are LOSR-producible using sources $ {\boldsymbol{\psi}}_{\mathcal{Q}}^{\leq 2}$ is a strict superset of the tripartite correlations that are LOSR-producible using $\mathcal{S} \to {\boldsymbol{P}}_{\mathcal{Q}}^{\leq 2}$.
\end{prop}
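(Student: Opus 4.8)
The superset inclusion is immediate, as already noted above: a bipartite quantum-correlation box is nothing but a bipartite quantum state acted on by local measurements, and a wiring of boxes is itself a valid local operation; so replacing every box in an $\mathcal{S}\to{\boldsymbol{P}}_{\mathcal{Q}}^{\leq 2}$ network by a state-plus-measurement realization, and absorbing the wirings into the parties' local maps, yields a ${\boldsymbol{\psi}}_{\mathcal{Q}}^{\leq 2}$ network producing the same tripartite correlation. The entire content of the statement is therefore the strictness, and the plan is to display a single separating correlation.

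The plan is to build the witness by entanglement swapping in the triangle: let each source distribute a two-qubit entangled state, let one party (say Charlie) perform a fixed \emph{entangled} measurement on the two qubits reaching it, obtaining an outcome $c$, and let Alice and Bob perform on inputs $x,y$ \emph{fixed} local measurements (so that, in particular, they never learn $c$). One must choose Charlie's measurement basis $\{\ket{\Phi_c}\}$ of post-selected Alice--Bob states carefully: if these were the Bell basis they would be related by local Pauli relabellings, and then the correlation could be reproduced with boxes by having Charlie simply broadcast $c$ through a classical bipartite box while Alice Pauli-corrects --- so the basis must be one (e.g.\ a non-maximally-entangled one, or one in which the $c$-label is effectively shared between Alice's and Bob's required corrections rather than reconstructible by either alone) in which no local classical post-processing applied to a fixed shared box reduces all the $\ket{\Phi_c}$ to a common reference. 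The resulting $P^\star(abc\mid xy)$ is in the ${\boldsymbol{\psi}}_{\mathcal{Q}}^{\leq 2}$ set by construction, and the contrast to exploit is that in any $\mathcal{S}\to{\boldsymbol{P}}_{\mathcal{Q}}^{\leq 2}$ realization each party's operation on its two incident boxes is only a wiring --- read the outcomes, re-feed and post-process classically --- and never a coherent correction to a box.

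Turning this contrast into a proof is the main obstacle, and it has two parts. First, one must verify that $P^\star$ genuinely resists \emph{all} classical-correction simulations, not just the naive broadcast one. Second --- the crux --- one must show that for \emph{every} wiring of \emph{every} triple of quantum-correlation boxes supplemented by shared randomness, the induced family $\{P^\star(ab\mid xy,c)\}_c$ obeys some linear functional bound (a suitable combination of conditional Bell-type expressions) that $P^\star$ violates. Unlike generic network-nonlocality separations, this cannot lean on the usual ``no long-range correlation through wired nonsignalling boxes'' argument, since in the triangle every pair of parties is directly joined by a source; it must instead quantify the strictly weaker ``coherent coupling'' power that separates an entangled measurement on a party's two incident sources from any wiring of two boxes. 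Making that gap quantitative is the hard part; alternatively, the strict inclusion may be imported from the existing separations between state-based and box-based network nonlocality.
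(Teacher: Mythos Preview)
Your high-level intuition is right: entanglement swapping is indeed the separating phenomenon, and the inclusion direction is trivial. But your proposed construction and proof strategy leave a genuine gap, and the paper closes it with a trick you are missing.

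In your setup, Charlie performs the entangled measurement and Alice and Bob merely measure locally on inputs $x,y$. You then face, as you yourself note, the hard problem of ruling out \emph{every} wiring of \emph{every} triple of quantum-correlation boxes. You try to sidestep the obvious Bell-basis simulation by asking for a non-Pauli-related basis of post-selected states, but this does not obviously help: the difficulty is not the relabellings, it is that with boxes on all three edges you have no device-independent handle on what Alice's (or Bob's) strategy actually is, so you cannot exclude that the $AC$ box is doing the work. Your final sentence concedes this.

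The paper's construction avoids this entirely by a rigidity/monogamy argument. It gives the entangled measurement to \emph{Bob}, not Charlie, and crucially equips Bob with \emph{three} settings: for $y\in\{0,1\}$ Bob ignores the $BC$ source and plays CHSH with Alice, achieving the Tsirelson value $2\sqrt{2}$; for $y{=}2$ he performs the Bell-state measurement. The maximal CHSH violation on $y\in\{0,1\}$ device-independently certifies that Alice's outcome is independent of anything outside the $AB$ source --- in particular, of the $AC$ source --- for \emph{all} of Alice's inputs (she only has two, and they are the CHSH ones). With Alice thus pinned down, the observed maximal CHSH violation between Alice and Charlie conditioned on $y{=}2,\,b{=}0$ becomes impossible in any box network: Alice provably does not touch the $AC$ box, and no wiring by Bob of the $AB$ and $BC$ boxes can manufacture Alice--Charlie nonlocality (there is no nonlocality coupler for wirings). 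So the separating correlation is exhibited and its infeasibility in the box model follows from monogamy, not from an exhaustive analysis of wirings.

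The missing idea, then, is to add the auxiliary CHSH test that self-tests one party's strategy and thereby eliminates one edge of the triangle from the box simulation; once that edge is gone, the swapped nonlocality has no classical explanation left.
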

\begin{proof}
The following proof makes use of the entanglement swapping, which is the paradigmatic advantage of sharing bipartite entanglement compared to sharing bipartite nonlocal correlations. The closest analog of entanglement swapping is nonlocal coupling~\cite{Skrzypczyk2009couplers,Short2010couplers}, but nonlocality coupling is not possible in a paradigm where local operations on boxes are limited to classical wirings. 
Consider a tripartite correlation with two settings for Alice and Charlie and three settings for Bob, ($x\in\{0,1\}$, $y\in\{0,1,2\}$, $z\in\{0,1\}$). Alice and Charlie always measure according to mutually unbiased bases. Bob, however, will ignore the singlet shared with Charlie for his first two settings, choosing instead measurements which lead to maximal violation of the CHSH inequality with Alice. For Bob's third setting, he performs a Bell-state measurement on the two singlets, coarse graining the outcome of that measurement to be $0$ if the postselected state on Alice and Charlie is the singlet, and $1$ otherwise. This ${\boldsymbol{\psi}}_{\mathcal{Q}}^{\leq 2}$ based strategy produces a correlation of the form:
\begin{equation}
P(abc|xyz) = 
\begin{cases}
 \frac{2+(-1)^{a \oplus b \oplus x y}\sqrt{2}}{16} & y\in\{0,1\} \\
 \frac{4 -2 (-1)^b + (-1)^{a\oplus b \oplus c \oplus x z}\sqrt{2}}{32} & y{=}2
\end{cases}
\end{equation}
We now argue that this correlation is not LOSR-producible using $\mathcal{S} \to {\boldsymbol{P}}_{\mathcal{Q}}^{\leq 2}$. 
From the maximal CHSH-inequality violation between Alice and Bob achieved when $y\in\{0,1\}$, we conclude that the measurements performed by Alice cannot depend in any way on the source that she shares with Charlie. On the other hand, we observe significant (maximal) CHSH-inequality violation between Alice and Charlie when we condition on $y{=}2$ and $b{=}0$. This Alice--Charlie nonlocality induced by postselection on Bob's measurement can \emph{only} be explained by entanglement swapping, since we have eliminated the possibility that $P(abc|xyz) $ utilizes any Alice--Charlie source.
\end{proof}

Let us conclude this section with a conjecture. Remark first that the set of correlations realizable using a network of $k$-way GPT sources (${\boldsymbol{\omega}}_{\mathcal{GPT}}^{\leq k}$) naturally includes the set of correlations realizable using a network of $k$-way nonsignalling boxes ($\mathcal{S} \to {\boldsymbol{P}}_{\mathcal{NS}}^{\leq k}$). We conjecture that here too, the inclusion is strict.
\begin{conj}
The set of tripartite correlation which are LOSR-producible using ${\boldsymbol{\omega}}_{\mathcal{GPT}}^{\leq 2}$ is a strict superset of the tripartite correlations that are LOSR-producible using $\mathcal{S} \to {\boldsymbol{P}}_{\mathcal{NS}}^{\leq 2}$.
\end{conj}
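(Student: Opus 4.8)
The plan is to imitate the proof of the preceding Proposition, replacing \enquote{entanglement swapping of two singlets} by its generalized-probabilistic analogue: a \emph{coupler}, i.e.\ a joint GPT effect that Bob applies to his shares of two bipartite sources and whose coarse-grained post-selected action on the two outer systems projects them onto a PR-box-like nonsignalling state, exactly the way a Bell-state measurement steers the two outer halves of singlets. Since $\boldsymbol{\omega}_{\mathcal{GPT}}^{\leq 2}$ already contains the boxes $\boldsymbol{P}_{\mathcal{NS}}^{\leq 2}$ as a special case, only strictness of the inclusion is in question, so it suffices to produce a single tripartite correlation $P(abc|xyz)$ that is LOSR-realizable from bipartite GPT \emph{states} but not from bipartite nonsignalling \emph{boxes} manipulated by classical wirings. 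I would work in the triangle of Figure~\ref{fig:GPT_triangle}, with two settings for Alice and Charlie and three for Bob.

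The candidate correlation is built as follows. Fix a GPT whose bipartite state space contains PR-box-like states and which admits a valid two-system coupler on Bob's side. Alice and Charlie always use mutually unbiased settings. For $y\in\{0,1\}$ Bob ignores his $BC$-source and plays a perfect PR box with Alice, so that $a\oplus b = xy$ deterministically and the $AB$-marginal for $y\in\{0,1\}$ is a perfect PR box. For $y{=}2$ Bob applies the coupler to his two shares and coarse-grains its outcome to $b{=}0$ when the post-selected $AC$ state is the PR-box state and to $b{=}1$ otherwise; by construction, conditioning $P$ on $y{=}2$ and $b{=}0$ leaves Alice and Charlie sharing PR-box correlations, in particular violating the $AC$-CHSH inequality.

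Next I would show $P$ is not LOSR-realizable with $\mathcal{S}\to\boldsymbol{P}_{\mathcal{NS}}^{\leq 2}$. Suppose it were. The identity $a\oplus b = xy$ must hold with certainty for $x,y\in\{0,1\}$; since Bob is not connected to the $AC$-source, $b$ does not depend on Alice's output of her $AC$-box, hence neither does $a = xy\oplus b$, and since Alice is oblivious to $y$ this persists for all of her settings --- this is the wiring analogue of the step \enquote{we have eliminated the possibility that $P$ utilizes any Alice--Charlie source} in the preceding proof. Now let $\Lambda$ collect the shared randomness together with Bob's outputs of his $AB$- and $BC$-boxes. Conditioned on $\Lambda$, Alice's output $a$ is a function of $x$ and $A$-side local randomness, Charlie's output $c$ is a function of $z$ and $C$-side local randomness, and these two are independent given $\Lambda$ because the $AB$-, $BC$- and $AC$-boxes are three separate resources; thus $P(ac|xz,\Lambda)=P(a|x,\Lambda)\,P(c|z,\Lambda)$. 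Since Bob's output $b$ is itself a function of $\Lambda$, conditioning on $y{=}2,\,b{=}0$ merely renormalizes the mixture over $\Lambda$, so $P(ac|xz,y{=}2,b{=}0)$ stays a convex mixture of product distributions, hence Bell-local --- contradicting the $AC$-CHSH violation above. This establishes the strict inclusion.

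The hard part --- and the reason the statement is posed only as a conjecture --- is the very first step: exhibiting an explicit GPT in which the required coupler is a \emph{legitimate} effect, that is, positive and normalized on \emph{every} state of the theory and not merely on the states one wishes to couple. This is precisely the obstruction studied in the coupler literature~\cite{Skrzypczyk2009couplers,Short2010couplers}: full boxworld does \emph{not} admit a universal PR-box coupler, so one would have to engineer a restricted GPT --- for instance via a restricted state space or a restricted tensor product --- that still hosts PR-box states and the local operations used above while permitting the coupler to survive. A secondary technical point is to make the nonsignalling-box impossibility fully rigorous against adaptive wirings in which Bob's output depends jointly on both of his boxes; doing so cleanly may require routing the argument through an inflation of the triangle, in the style of Section~\ref{sec:GenuineLOSRMultipartitenessGHZW}, rather than reasoning directly about $P$.
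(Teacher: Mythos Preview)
The statement is a \emph{conjecture}; the paper does not prove it, so there is no proof to match. The paper's supporting discussion takes a route quite different from yours. Rather than building an explicit GPT-with-coupler realization, the paper points to a concrete candidate correlation --- the extremal class~\#4 tripartite nonsignalling box of Ref.~\cite{Pironio2011tripartiteextremal} --- which is already \emph{known} (from Ref.~\cite{FeatsFeaturesFailures}) to be incompatible with triangle networks of type $\mathcal{S}\to\boldsymbol{P}_{\mathcal{NS}}^{\leq 2}$, and then reports numerical evidence that this correlation passes the triangle-inflation linear constraints, suggesting it is at least \emph{weakly} 2-LOSR theory-agnostic. In other words, the paper outsources the ``not producible from NS boxes'' half to existing literature and attacks the ``is GPT-producible'' half via inflation feasibility, which in this framework is the operational \emph{definition} of $\boldsymbol{\omega}_{\mathcal{GPT}}^{\leq 2}$-producibility; no explicit state or measurement in any concrete GPT is ever invoked.

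Your proposal is the natural analogue of the preceding quantum proposition, and you correctly diagnose its weak point: the Short--Barrett obstruction~\cite{Short2010couplers} means the coupler you need is forbidden in full boxworld, so one must engineer a restricted GPT that still hosts PR-type states \emph{and} a legitimate coupler effect --- an open problem. The paper's route sidesteps this obstruction entirely (no explicit theory is needed, only inflation feasibility), at the price of yielding only numerical evidence rather than a proof. Your route would give a constructive proof if the coupler obstacle were overcome; since it is not, the statement stays a conjecture either way. Your secondary caveat about making the NS-box impossibility rigorous against adaptive wirings is also well placed, though note that the paper's candidate avoids this step too, since the NS-box impossibility of class~\#4 is taken as established.
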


Our conjecture is based on the extremal-class \#4 of the set of extremal nonsignalling tripartite correlations, as enumerated in Ref.~\cite{Pironio2011tripartiteextremal}.
These correlations are known to be incompatible with a triangle network of type $\mathcal{S} \to {\boldsymbol{P}}_{\mathcal{NS}}^{\leq 2}$ per Ref.~\cite[Sec.~5-A]{FeatsFeaturesFailures}. Hence it is sufficient to prove they are LOSR-producible using ${\boldsymbol{\omega}}_{\mathcal{GPT}}^{\leq 2}$, that is, that they are 2-LOSR theory-agnostic. 
We found evidence that the linear constraint given by all the triangle inflations cannot rule out these correlations, suggesting they are at least \emph{weakly} 2-LOSR theory-agnostic correlations.

\subsection{Comparing and contrasting LOCC and LOSR producibility}\label{sec:SvetContrast}

If a correlation $P$ is in ${\boldsymbol{P}}_{\mathcal{R}}^{\leq k}$, \emph{i.e.}, LOCC-${\mathcal{R}}$ $k$-producible, then $P$ is also LOSR-producible using $\mathcal{S} \to {\boldsymbol{P}}_{\mathcal{R}}^{\leq k}$. The LOSR network which realizes the LOCC-relevant convex decomposition utilizes the shared randomness as a switch variable.
Accordingly, if $P$ is genuinely LOSR-$\mathcal{GPT}$ multipartite nonlocal, then $P$ is also certainly genuinely LOCC $k$-partite nonlocal relative to $\mathcal{R}\to {\mathcal{NS}}$. From the well-known containment of ${\boldsymbol{P}}_{\mathcal{Q}}\subset{\boldsymbol{P}}_{\mathcal{NS}}$, we further establish that if $P$ is genuinely LOSR-$\mathcal{GPT}$ multipartite nonlocal, then $P$ is also certainly genuinely LOCC $k$-partite nonlocal relative to ${\mathcal{Q}}$.

It is worth emphasizing that the implications about LOCC multipartite nonlocality from LOSR multipartite nonlocality run strictly \emph{one way}. That is, there are correlations which are genuinely LOCC-$\mathcal{NS}$ multipartite nonlocal which are \emph{not} genuinely LOSR-$\mathcal{NS}$ multipartite nonlocal. Perhaps the most famous example is the Svetlichny box; see Ref.~\cite[Fig.~5]{Barrett2005SvetFromPR}. Another simple example is the parallel composition of two distinct bipartite Tsirelson boxes, one for Alice--Bob and another for Bob--Charlie, as discussed in~\cite{PRL}. The resulting tripartite correlation (involving a 2-bit setting variable and 2-bit outcome variable for Bob) is genuinely LOCC-$\mathcal{NS}$ multipartite nonlocal~\cite{Tejada2020NetworkGMNL}, but, by construction, \emph{not} genuinely LOSR-$\mathcal{NS}$ multipartite nonlocal. 

It is also critical to recognize that \emph{Svetlichny} genuine multipartite nonlocality is \emph{not} implied by LOSR multipartite nonlocality. This is readily evident by noticing that inequality~\eqref{eq:ghz3-technical} is strongly violated by nonsignalling correlations generated via causal models wherein $a$ is allowed to functionally depend on $b$ and $y$ (hidden signalling from Bob to Alice). 
Consider the following fine-tuned (hidden signalling) local-hidden-variable model (LHVM): Let $\lambda$ indicate the value of the globally shared classical hidden random variable, such that $\lambda$ is uniformly distributed amongst the dichotomous values $\{+1,-1\}$. Let $c{=}\lambda$ always, \emph{i.e.}, for both $z\in\{0,1\}$; similarly, let $b{=}\lambda$ always, \emph{i.e.}, for all cases $y\in\{0,1,2\}$. Finally, let Alice's outcome depend on $y$ such that $a=b\times(-1)^{x y}$, an effect of which is that $a{=}b{=}c$ with unit probability for $y{=}2$.

For a further example, consider Box \#8 in the set of extremal nonsignalling tripartite correlations as enumerated in Ref.~\cite{Pironio2011tripartiteextremal}. Such correlations are known to be LOCC-producible using correlations in ${\boldsymbol{P}}_{\mathcal{S}\mathscr{ig}}^{\leq 2}$. Nevertheless, one can use nonfanout inflation to readily prove that such correlations are \emph{not} LOSR-producible within triangle networks using sources of type ${\boldsymbol{\omega}}_{\mathcal{GPT}}^{\leq 2}$. As such, Box \#8 is genuinely LOSR multipartite nonlocal yet not Svetlichny genuinely multipartite.

\section{Conclusion}

In this paper, we focused on correlations that cannot be obtained from arbitrary $(N-1)$-partite causal GPT resources and $N$-shared randomness, for any fixed $N$, which we called \emph{genuinely LOSR-multipartite-nonlocal correlations}. We proved that the (noisy) $\ket{{\rm GHZ}_N}$ states and the $\ket{{\rm W}}$ state can produce such correlations. This proves Theorem~\ref{thm:NatureNotNloc}, the main result of this paper: \emph{Nature is not merely ${N}$-partite, for any $N$}.
As this definition relies on an infinite hierarchy of nonlinear existence problems involving linear- and nonlinear- equality constraints of factorization, it is hard to manipulate in practice.
Using De Finetti's theorem, we obtained a nontrivial relaxation of the set of genuinely LOSR-multipartite-nonlocal correlations, which can be characterized by an infinite hierarchy of LP existence problems.
We illustrated its usefulness by improving the noise tolerance of our analysis of $\ket{{\rm GHZ}_3}$, making an experimental proof accessible to current technologies.
At last, we compared our introduced concept to already-existing definitions of genuine multipartite nonlocality.
We finish this paper with some comments and open questions.

Note that in our introduction, we argued that $N$-partite resource models of correlations should include classical shared randomness. This is motivated by the fact that, for instance, pre-established shared randomness can be stored in classical local memories.
We now remark that more general forms of randomness can \emph{a priori} be shared in the same way: For instance, pre-stored quantum states in quantum local memories can, in principle, also simulate a ``live" shared random quantum source. 
Certainly such unlimited quantum local memories are fundamentally more technologically demanding. Nevertheless, we also want to appeal to more foundations arguments for why the storage of many-partite GPT resources should be treated as costly; see, for example, the resource theory of quantum memory developed in Ref.~\cite{QuantumMemoryRT}. The trade-off between nonclassical-memory capacity and the resource value of genuinely LOSR multipartite theory-agnostic correlations is a topic we highlight for future research.

The connections between our own definition of causal GPT in networks, which is based on the concepts of causality and device replication, and the standard GPT framework~\cite{Skrzypczyk2009couplers,Short2010couplers,Barrett2007GPT,Chiribella2011Reconstruction,Janotta2012Boxword} are also left for future work~\cite{Pironio2021InPreparation}.

Motivated by a desire to concretely formulate computational algorithms, we relaxed the set of LOSR theory-agnostic Correlations into the set of \emph{weakly} LOSR theory-agnostic correlations. 
The question of the differences between these two sets remains open.
It might be that a refined version of the argument based on De Finetti's theorem could prove that the two coincide.
It might also be that our relaxation is strict and that there exists a correlation that is in the relaxed set without being in the original one.
We expect that such approaches will allow to find better noise-tolerant results for practical experimental demonstrations that Nature is not merely genuinely LOSR $N$-partite nonlocal for low $N$~\cite{GHZExperiment6Photons,Chao2019,Proietti2019}, and will allow to extend our analytical proofs to more quantum states, such as the generalization of the tripartite $\ket{\W}$ state.

\emph{Acknowledgements.---}
We thank Claude Crépeau, Nicolas Gisin, Miguel Navascués, and Stefano Pironio for valuable discussions.
This research was supported by the Swiss National Science Foundation (SNF), the Fonds de recherche du Qu\'ebec -- Nature et technologies (FRQNT), and Perimeter Institute for Theoretical Physics. Research at Perimeter Institute is supported in part by the Government of Canada through the Department of Innovation, Science and Economic Development Canada and by the Province of Ontario through the Ministry of Colleges and Universities. M.-O.R.~is supported by the Swiss National Fund Early Mobility Grant P2GEP2\_191444 and acknowledges the Government of Spain (FIS2020-TRANQI and Severo Ochoa CEX2019-000910-S), Fundació Cellex, Fundació Mir-Puig, Generalitat de Catalunya (CERCA, AGAUR SGR 1381) and the ERC AdG CERQUTE.

\bigskip
\nocite{apsrev42Control}
\setlength{\bibsep}{1pt plus 1pt minus 1pt}
\bibliographystyle{apsrev4-2-wolfe}
\bibliography{GPTandOtherConstraints}

\begin{thebibliography}{47}%
\makeatletter
\providecommand \@ifxundefined [1]{%
 \@ifx{#1\undefined}
}%
\providecommand \@ifnum [1]{%
 \ifnum #1\expandafter \@firstoftwo
 \else \expandafter \@secondoftwo
 \fi
}%
\providecommand \@ifx [1]{%
 \ifx #1\expandafter \@firstoftwo
 \else \expandafter \@secondoftwo
 \fi
}%
\providecommand \natexlab [1]{#1}%
\providecommand \enquote  [1]{``#1''}%
\providecommand \bibnamefont  [1]{#1}%
\providecommand \bibfnamefont [1]{#1}%
\providecommand \citenamefont [1]{#1}%
\providecommand \href@noop [0]{\@secondoftwo}%
\providecommand \href [0]{\begingroup \@sanitize@url \@href}%
\providecommand \@href[1]{\@@startlink{#1}\@@href}%
\providecommand \@@href[1]{\endgroup#1\@@endlink}%
\providecommand \@sanitize@url [0]{\catcode `\\12\catcode `\$12\catcode
  `\&12\catcode `\#12\catcode `\^12\catcode `\_12\catcode `\%12\relax}%
\providecommand \@@startlink[1]{}%
\providecommand \@@endlink[0]{}%
\providecommand \url  [0]{\begingroup\@sanitize@url \@url }%
\providecommand \@url [1]{\endgroup\@href {#1}{\urlprefix }}%
\providecommand \urlprefix  [0]{URL }%
\providecommand \Eprint [0]{\href }%
\providecommand \doibase [0]{https://doi.org/}%
\providecommand \selectlanguage [0]{\@gobble}%
\providecommand \bibinfo  [0]{\@secondoftwo}%
\providecommand \bibfield  [0]{\@secondoftwo}%
\providecommand \translation [1]{[#1]}%
\providecommand \BibitemOpen [0]{}%
\providecommand \bibitemStop [0]{}%
\providecommand \bibitemNoStop [0]{.\EOS\space}%
\providecommand \EOS [0]{\spacefactor3000\relax}%
\providecommand \BibitemShut  [1]{\csname bibitem#1\endcsname}%
\let\auto@bib@innerbib\@empty
\bibitem [{\citenamefont {Bell}(1964)}]{bell1964einstein}%
  \BibitemOpen
  \bibfield  {author} {\bibinfo {author} {\bibfnamefont {J.~S.}\ \bibnamefont
  {Bell}},\ }\bibfield  {title} {\enquote {\bibinfo {title} {On the
  {E}instein-{P}odolsky-{R}osen paradox},}\ }\href
  {http://cds.cern.ch/record/111654/files/} {\bibfield  {journal} {\bibinfo
  {journal} {Physics}\ }\textbf {\bibinfo {volume} {1}},\ \bibinfo {pages}
  {195} (\bibinfo {year} {1964})}\BibitemShut {NoStop}%
\bibitem [{\citenamefont {Pearl}(2009)}]{pearl2009causality}%
  \BibitemOpen
  \bibfield  {author} {\bibinfo {author} {\bibfnamefont {J.}~\bibnamefont
  {Pearl}},\ }\href {https://doi.org/10.1017/CBO9780511803161} {\emph {\bibinfo
  {title} {{Causality: Models, Reasoning, and Inference}}}}\ (\bibinfo
  {publisher} {Cambridge University Press},\ \bibinfo {year}
  {2009})\BibitemShut {NoStop}%
\bibitem [{\citenamefont {Wolfe}\ \emph {et~al.}(2020)\citenamefont {Wolfe},
  \citenamefont {Schmid}, \citenamefont {Sainz}, \citenamefont {Kunjwal},\ and\
  \citenamefont {Spekkens}}]{Wolfe2020quantifyingbell}%
  \BibitemOpen
  \bibfield  {author} {\bibinfo {author} {\bibfnamefont {E.}~\bibnamefont
  {Wolfe}}, \bibinfo {author} {\bibfnamefont {D.}~\bibnamefont {Schmid}},
  \bibinfo {author} {\bibfnamefont {A.~B.}\ \bibnamefont {Sainz}}, \bibinfo
  {author} {\bibfnamefont {R.}~\bibnamefont {Kunjwal}},\ and\ \bibinfo {author}
  {\bibfnamefont {R.~W.}\ \bibnamefont {Spekkens}},\ }\bibfield  {title}
  {\enquote {\bibinfo {title} {Quantifying {B}ell: the {R}esource {T}heory of
  {N}onclassicality of {C}ommon-{C}ause {B}oxes},}\ }\href
  {https://doi.org/10.22331/q-2020-06-08-280} {\bibfield  {journal} {\bibinfo
  {journal} {{Quantum}}\ }\textbf {\bibinfo {volume} {4}},\ \bibinfo {pages}
  {280} (\bibinfo {year} {2020})}\BibitemShut {NoStop}%
\bibitem [{\citenamefont {Clauser}\ \emph {et~al.}(1969)\citenamefont
  {Clauser}, \citenamefont {Horne}, \citenamefont {Shimony},\ and\
  \citenamefont {Holt}}]{CHSHOriginal}%
  \BibitemOpen
  \bibfield  {author} {\bibinfo {author} {\bibfnamefont {J.~F.}\ \bibnamefont
  {Clauser}}, \bibinfo {author} {\bibfnamefont {M.~A.}\ \bibnamefont {Horne}},
  \bibinfo {author} {\bibfnamefont {A.}~\bibnamefont {Shimony}},\ and\ \bibinfo
  {author} {\bibfnamefont {R.~A.}\ \bibnamefont {Holt}},\ }\bibfield  {title}
  {\enquote {\bibinfo {title} {{Proposed Experiment to Test Local
  Hidden-Variable Theories}},}\ }\href
  {https://doi.org/10.1103/PhysRevLett.23.880} {\bibfield  {journal} {\bibinfo
  {journal} {Phys. Rev. Lett.}\ }\textbf {\bibinfo {volume} {23}},\ \bibinfo
  {pages} {880} (\bibinfo {year} {1969})}\BibitemShut {NoStop}%
\bibitem [{\citenamefont {Popescu}\ and\ \citenamefont
  {Rohrlich}(1994)}]{Popescu1994}%
  \BibitemOpen
  \bibfield  {author} {\bibinfo {author} {\bibfnamefont {S.}~\bibnamefont
  {Popescu}}\ and\ \bibinfo {author} {\bibfnamefont {D.}~\bibnamefont
  {Rohrlich}},\ }\bibfield  {title} {\enquote {\bibinfo {title} {{Quantum
  Nonlocality as an Axiom}},}\ }\href {https://doi.org/10.1007/BF02058098}
  {\bibfield  {journal} {\bibinfo  {journal} {Found. Phys.}\ }\textbf {\bibinfo
  {volume} {24}},\ \bibinfo {pages} {379} (\bibinfo {year} {1994})}\BibitemShut
  {NoStop}%
\bibitem [{\citenamefont {Hensen}\ \emph {et~al.}(2015)\citenamefont {Hensen}
  \emph {et~al.}}]{Hensen2015}%
  \BibitemOpen
  \bibfield  {author} {\bibinfo {author} {\bibfnamefont {B.}~\bibnamefont
  {Hensen}} \emph {et~al.},\ }\bibfield  {title} {\enquote {\bibinfo {title}
  {{Loophole-free Bell inequality violation using electron spins separated by
  1.3 kilometres}},}\ }\href {http://dx.doi.org/10.1038/nature15759} {\bibfield
   {journal} {\bibinfo  {journal} {Nature}\ }\textbf {\bibinfo {volume}
  {526}},\ \bibinfo {pages} {682} (\bibinfo {year} {2015})}\BibitemShut
  {NoStop}%
\bibitem [{\citenamefont {Janotta}(2012)}]{Janotta2012Boxword}%
  \BibitemOpen
  \bibfield  {author} {\bibinfo {author} {\bibfnamefont {P.}~\bibnamefont
  {Janotta}},\ }\bibfield  {title} {\enquote {\bibinfo {title}
  {{Generalizations of Boxworld}},}\ }\href
  {https://doi.org/10.4204/eptcs.95.13} {\bibfield  {journal} {\bibinfo
  {journal} {Proc. Theo. Comp. Sci.}\ }\textbf {\bibinfo {volume} {95}},\
  \bibinfo {pages} {183–192} (\bibinfo {year} {2012})}\BibitemShut {NoStop}%
\bibitem [{\citenamefont {{Chao}}\ and\ \citenamefont
  {{Reichardt}}(2017)}]{Chao2017genuinemultipartite}%
  \BibitemOpen
  \bibfield  {author} {\bibinfo {author} {\bibfnamefont {R.}~\bibnamefont
  {{Chao}}}\ and\ \bibinfo {author} {\bibfnamefont {B.~W.}\ \bibnamefont
  {{Reichardt}}},\ }\bibfield  {title} {\enquote {\bibinfo {title} {{Test to
  separate quantum theory from non-signaling theories}},}\ }\href
  {https://arxiv.org/abs/1706.02008} {\bibfield  {journal} {\bibinfo  {journal}
  {arXiv:1706.02008}\ } (\bibinfo {year} {2017})}\BibitemShut {NoStop}%
\bibitem [{\citenamefont {Weilenmann}\ and\ \citenamefont
  {Colbeck}(2020)}]{WeilenmannQuantumBest}%
  \BibitemOpen
  \bibfield  {author} {\bibinfo {author} {\bibfnamefont {M.}~\bibnamefont
  {Weilenmann}}\ and\ \bibinfo {author} {\bibfnamefont {R.}~\bibnamefont
  {Colbeck}},\ }\bibfield  {title} {\enquote {\bibinfo {title} {{Self-Testing
  of Physical Theories, or, Is Quantum Theory Optimal with Respect to Some
  Information-Processing Task?}}}\ }\href
  {https://doi.org/10.1103/PhysRevLett.125.060406} {\bibfield  {journal}
  {\bibinfo  {journal} {Phys. Rev. Lett.}\ }\textbf {\bibinfo {volume} {125}},\
  \bibinfo {pages} {060406} (\bibinfo {year} {2020})}\BibitemShut {NoStop}%
\bibitem [{\citenamefont {Bierhorst}(2020)}]{Bierhorst2020Tripartite}%
  \BibitemOpen
  \bibfield  {author} {\bibinfo {author} {\bibfnamefont {P.}~\bibnamefont
  {Bierhorst}},\ }\bibfield  {title} {\enquote {\bibinfo {title} {{Ruling out
  Bipartite Nonsignaling Nonlocal Models for Tripartite Correlations}},}\
  }\href {https://arxiv.org/abs/2012.11132} {\bibfield  {journal} {\bibinfo
  {journal} {arXiv:2012.11132}\ } (\bibinfo {year} {2020})}\BibitemShut
  {NoStop}%
\bibitem [{\citenamefont {Barrett}(2007)}]{Barrett2007GPT}%
  \BibitemOpen
  \bibfield  {author} {\bibinfo {author} {\bibfnamefont {J.}~\bibnamefont
  {Barrett}},\ }\bibfield  {title} {\enquote {\bibinfo {title} {Information
  processing in generalized probabilistic theories},}\ }\href
  {https://link.aps.org/doi/10.1103/PhysRevA.75.032304} {\bibfield  {journal}
  {\bibinfo  {journal} {Phys. Rev. A}\ }\textbf {\bibinfo {volume} {75}},\
  \bibinfo {pages} {032304} (\bibinfo {year} {2007})}\BibitemShut {NoStop}%
\bibitem [{\citenamefont {Short}\ and\ \citenamefont
  {Barrett}(2010)}]{Short2010couplers}%
  \BibitemOpen
  \bibfield  {author} {\bibinfo {author} {\bibfnamefont {A.~J.}\ \bibnamefont
  {Short}}\ and\ \bibinfo {author} {\bibfnamefont {J.}~\bibnamefont
  {Barrett}},\ }\bibfield  {title} {\enquote {\bibinfo {title} {Strong
  nonlocality: a trade-off between states and measurements},}\ }\href
  {http://stacks.iop.org/1367-2630/12/i=3/a=033034} {\bibfield  {journal}
  {\bibinfo  {journal} {New J. Phys.}\ }\textbf {\bibinfo {volume} {12}},\
  \bibinfo {pages} {033034} (\bibinfo {year} {2010})}\BibitemShut {NoStop}%
\bibitem [{\citenamefont {Henson}\ \emph {et~al.}(2014)\citenamefont {Henson},
  \citenamefont {Lal},\ and\ \citenamefont {Pusey}}]{Henson2014}%
  \BibitemOpen
  \bibfield  {author} {\bibinfo {author} {\bibfnamefont {J.}~\bibnamefont
  {Henson}}, \bibinfo {author} {\bibfnamefont {R.}~\bibnamefont {Lal}},\ and\
  \bibinfo {author} {\bibfnamefont {M.~F.}\ \bibnamefont {Pusey}},\ }\bibfield
  {title} {\enquote {\bibinfo {title} {{Theory-independent limits on
  correlations from generalized Bayesian networks}},}\ }\href
  {http://stacks.iop.org/1367-2630/16/i=11/a=113043} {\bibfield  {journal}
  {\bibinfo  {journal} {New J. Phys.}\ }\textbf {\bibinfo {volume} {16}},\
  \bibinfo {pages} {113043} (\bibinfo {year} {2014})}\BibitemShut {NoStop}%
\bibitem [{\citenamefont {Gisin}\ \emph {et~al.}(2020)\citenamefont {Gisin},
  \citenamefont {Bancal}, \citenamefont {Cai}, \citenamefont {Remy},
  \citenamefont {Tavakoli}, \citenamefont {Cruzeiro}, \citenamefont {Popescu},\
  and\ \citenamefont {Brunner}}]{GisinNSI}%
  \BibitemOpen
  \bibfield  {author} {\bibinfo {author} {\bibfnamefont {N.}~\bibnamefont
  {Gisin}}, \bibinfo {author} {\bibfnamefont {J.-D.}\ \bibnamefont {Bancal}},
  \bibinfo {author} {\bibfnamefont {Y.}~\bibnamefont {Cai}}, \bibinfo {author}
  {\bibfnamefont {P.}~\bibnamefont {Remy}}, \bibinfo {author} {\bibfnamefont
  {A.}~\bibnamefont {Tavakoli}}, \bibinfo {author} {\bibfnamefont {E.~Z.}\
  \bibnamefont {Cruzeiro}}, \bibinfo {author} {\bibfnamefont {S.}~\bibnamefont
  {Popescu}},\ and\ \bibinfo {author} {\bibfnamefont {N.}~\bibnamefont
  {Brunner}},\ }\bibfield  {title} {\enquote {\bibinfo {title} {Constraints on
  nonlocality in networks from no-signaling and independence},}\ }\href
  {https://doi.org/10.1038/s41467-020-16137-4} {\bibfield  {journal} {\bibinfo
  {journal} {Nature Comm.}\ }\textbf {\bibinfo {volume} {11}},\ \bibinfo
  {pages} {2378} (\bibinfo {year} {2020})}\BibitemShut {NoStop}%
\bibitem [{\citenamefont {Chiribella}\ \emph {et~al.}(2011)\citenamefont
  {Chiribella}, \citenamefont {D'Ariano},\ and\ \citenamefont
  {Perinotti}}]{Chiribella2011Reconstruction}%
  \BibitemOpen
  \bibfield  {author} {\bibinfo {author} {\bibfnamefont {G.}~\bibnamefont
  {Chiribella}}, \bibinfo {author} {\bibfnamefont {G.~M.}\ \bibnamefont
  {D'Ariano}},\ and\ \bibinfo {author} {\bibfnamefont {P.}~\bibnamefont
  {Perinotti}},\ }\bibfield  {title} {\enquote {\bibinfo {title} {Informational
  derivation of quantum theory},}\ }\href
  {https://doi.org/10.1103/PhysRevA.84.012311} {\bibfield  {journal} {\bibinfo
  {journal} {Phys. Rev. A}\ }\textbf {\bibinfo {volume} {84}},\ \bibinfo
  {pages} {012311} (\bibinfo {year} {2011})}\BibitemShut {NoStop}%
\bibitem [{\citenamefont {Chiribella}(2014)}]{Chiribella_2014}%
  \BibitemOpen
  \bibfield  {author} {\bibinfo {author} {\bibfnamefont {G.}~\bibnamefont
  {Chiribella}},\ }\bibfield  {title} {\enquote {\bibinfo {title} {Dilation of
  states and processes in operational-probabilistic theories},}\ }\href
  {https://doi.org/10.4204/eptcs.172.1} {\bibfield  {journal} {\bibinfo
  {journal} {E. Proc. Theo. Comp. Sci.}\ }\textbf {\bibinfo {volume} {172}},\
  \bibinfo {pages} {1–14} (\bibinfo {year} {2014})}\BibitemShut {NoStop}%
\bibitem [{\citenamefont {Bancal}\ and\ \citenamefont
  {Gisin}(2021)}]{Bancal2021Networks}%
  \BibitemOpen
  \bibfield  {author} {\bibinfo {author} {\bibfnamefont {J.-D.}\ \bibnamefont
  {Bancal}}\ and\ \bibinfo {author} {\bibfnamefont {N.}~\bibnamefont {Gisin}},\
  }\bibfield  {title} {\enquote {\bibinfo {title} {{Non-Local Boxes for
  Networks}},}\ }\href {https://arxiv.org/abs/2102.03597} {\bibfield  {journal}
  {\bibinfo  {journal} {arXiv:2102.03597}\ } (\bibinfo {year}
  {2021})}\BibitemShut {NoStop}%
\bibitem [{\citenamefont {Beigi}\ and\ \citenamefont
  {Renou}(2021)}]{Beigi2021}%
  \BibitemOpen
  \bibfield  {author} {\bibinfo {author} {\bibfnamefont {S.}~\bibnamefont
  {Beigi}}\ and\ \bibinfo {author} {\bibfnamefont {M.-O.}\ \bibnamefont
  {Renou}},\ }\bibfield  {title} {\enquote {\bibinfo {title} {{Covariance
  Decomposition as a Universal Limit on Correlations in Networks}},}\ }\href
  {https://arxiv.org/abs/2103.14840} {\bibfield  {journal} {\bibinfo  {journal}
  {arXiv:2103.14840}\ } (\bibinfo {year} {2021})}\BibitemShut {NoStop}%
\bibitem [{\citenamefont {Pironio}(2021)}]{Pironio2021InPreparation}%
  \BibitemOpen
  \bibfield  {author} {\bibinfo {author} {\bibfnamefont {S.}~\bibnamefont
  {Pironio}},\ }\href@noop {} {\enquote {\bibinfo {title} {{In Preparation}},}\
  } (\bibinfo {year} {2021})\BibitemShut {NoStop}%
\bibitem [{\citenamefont {Coiteux-Roy}\ \emph {et~al.}(2021)\citenamefont
  {Coiteux-Roy}, \citenamefont {Wolfe},\ and\ \citenamefont {Renou}}]{PRL}%
  \BibitemOpen
  \bibfield  {author} {\bibinfo {author} {\bibfnamefont {X.}~\bibnamefont
  {Coiteux-Roy}}, \bibinfo {author} {\bibfnamefont {E.}~\bibnamefont {Wolfe}},\
  and\ \bibinfo {author} {\bibfnamefont {M.-O.}\ \bibnamefont {Renou}},\
  }\bibfield  {title} {\enquote {\bibinfo {title} {No bipartite-nonlocal causal
  theory can explain nature's correlations},}\ }\bibfield  {journal} {\bibinfo
  {journal} {Phys. Rev. Lett.}\ }\textbf {\bibinfo {volume} {127}},\ \href
  {https://doi.org/10.1103/PhysRevLett.127.200401}
  {10.1103/PhysRevLett.127.200401} (\bibinfo {year} {2021})\BibitemShut
  {NoStop}%
\bibitem [{\citenamefont {Svetlichny}(1987)}]{Svetlichny}%
  \BibitemOpen
  \bibfield  {author} {\bibinfo {author} {\bibfnamefont {G.}~\bibnamefont
  {Svetlichny}},\ }\bibfield  {title} {\enquote {\bibinfo {title}
  {{Distinguishing three-body from two-body nonseparability by a Bell-type
  inequality}},}\ }\href {https://doi.org/10.1103/PhysRevD.35.3066} {\bibfield
  {journal} {\bibinfo  {journal} {Phys. Rev. D}\ }\textbf {\bibinfo {volume}
  {35}},\ \bibinfo {pages} {3066} (\bibinfo {year} {1987})}\BibitemShut
  {NoStop}%
\bibitem [{\citenamefont {Tavakoli}\ \emph {et~al.}(2021)\citenamefont
  {Tavakoli}, \citenamefont {Pozas-Kerstjens}, \citenamefont {Luo},\ and\
  \citenamefont {Renou}}]{TavakoliReview}%
  \BibitemOpen
  \bibfield  {author} {\bibinfo {author} {\bibfnamefont {A.}~\bibnamefont
  {Tavakoli}}, \bibinfo {author} {\bibfnamefont {A.}~\bibnamefont
  {Pozas-Kerstjens}}, \bibinfo {author} {\bibfnamefont {M.-X.}\ \bibnamefont
  {Luo}},\ and\ \bibinfo {author} {\bibfnamefont {M.-O.}\ \bibnamefont
  {Renou}},\ }\href {https://arxiv.org/abs/2104.10700} {\enquote {\bibinfo
  {title} {Bell nonlocality in networks},}\ } (\bibinfo {year}
  {2021})\BibitemShut {NoStop}%
\bibitem [{\citenamefont {Fritz}(2012)}]{fritz2012bell}%
  \BibitemOpen
  \bibfield  {author} {\bibinfo {author} {\bibfnamefont {T.}~\bibnamefont
  {Fritz}},\ }\bibfield  {title} {\enquote {\bibinfo {title} {Beyond {B}ell's
  theorem: correlation scenarios},}\ }\href
  {https://doi.org/10.1088/1367-2630/14/10/103001} {\bibfield  {journal}
  {\bibinfo  {journal} {New J. Phys.}\ }\textbf {\bibinfo {volume} {14}},\
  \bibinfo {pages} {103001} (\bibinfo {year} {2012})}\BibitemShut {NoStop}%
\bibitem [{\citenamefont {Branciard}\ \emph {et~al.}(2010)\citenamefont
  {Branciard}, \citenamefont {Gisin},\ and\ \citenamefont
  {Pironio}}]{Branciard2010}%
  \BibitemOpen
  \bibfield  {author} {\bibinfo {author} {\bibfnamefont {C.}~\bibnamefont
  {Branciard}}, \bibinfo {author} {\bibfnamefont {N.}~\bibnamefont {Gisin}},\
  and\ \bibinfo {author} {\bibfnamefont {S.}~\bibnamefont {Pironio}},\
  }\bibfield  {title} {\enquote {\bibinfo {title} {{Characterizing the Nonlocal
  Correlations Created via Entanglement Swapping}},}\ }\href
  {https://doi.org/10.1103/PhysRevLett.104.170401} {\bibfield  {journal}
  {\bibinfo  {journal} {Phys. Rev. Lett.}\ }\textbf {\bibinfo {volume} {104}},\
  \bibinfo {pages} {170401} (\bibinfo {year} {2010})}\BibitemShut {NoStop}%
\bibitem [{\citenamefont {Renou}\ \emph {et~al.}(2019)\citenamefont {Renou},
  \citenamefont {B\"aumer}, \citenamefont {Boreiri}, \citenamefont {Brunner},
  \citenamefont {Gisin},\ and\ \citenamefont {Beigi}}]{Renou2019}%
  \BibitemOpen
  \bibfield  {author} {\bibinfo {author} {\bibfnamefont {M.-O.}\ \bibnamefont
  {Renou}}, \bibinfo {author} {\bibfnamefont {E.}~\bibnamefont {B\"aumer}},
  \bibinfo {author} {\bibfnamefont {S.}~\bibnamefont {Boreiri}}, \bibinfo
  {author} {\bibfnamefont {N.}~\bibnamefont {Brunner}}, \bibinfo {author}
  {\bibfnamefont {N.}~\bibnamefont {Gisin}},\ and\ \bibinfo {author}
  {\bibfnamefont {S.}~\bibnamefont {Beigi}},\ }\bibfield  {title} {\enquote
  {\bibinfo {title} {{Genuine Quantum Nonlocality in the Triangle Network}},}\
  }\href {https://doi.org/10.1103/PhysRevLett.123.140401} {\bibfield  {journal}
  {\bibinfo  {journal} {Phys. Rev. Lett.}\ }\textbf {\bibinfo {volume} {123}},\
  \bibinfo {pages} {140401} (\bibinfo {year} {2019})}\BibitemShut {NoStop}%
\bibitem [{\citenamefont {Wolfe}\ \emph {et~al.}(2019)\citenamefont {Wolfe},
  \citenamefont {Spekkens},\ and\ \citenamefont {Fritz}}]{Wolfe2016inflation}%
  \BibitemOpen
  \bibfield  {author} {\bibinfo {author} {\bibfnamefont {E.}~\bibnamefont
  {Wolfe}}, \bibinfo {author} {\bibfnamefont {R.~W.}\ \bibnamefont
  {Spekkens}},\ and\ \bibinfo {author} {\bibfnamefont {T.}~\bibnamefont
  {Fritz}},\ }\bibfield  {title} {\enquote {\bibinfo {title} {{The Inflation
  Technique for Causal Inference with Latent Variables}},}\ }\href
  {https://doi.org/10.1515/jci-2017-0020} {\bibfield  {journal} {\bibinfo
  {journal} {J. Causal Inference}\ }\textbf {\bibinfo {volume} {7}},\ \bibinfo
  {pages} {0020} (\bibinfo {year} {2019})}\BibitemShut {NoStop}%
\bibitem [{\citenamefont {Augusiak}\ \emph {et~al.}(2014)\citenamefont
  {Augusiak}, \citenamefont {Demianowicz}, \citenamefont {Paw\l{}owski},
  \citenamefont {Tura},\ and\ \citenamefont {Ac\'{\i}n}}]{AugusiakMonogamy}%
  \BibitemOpen
  \bibfield  {author} {\bibinfo {author} {\bibfnamefont {R.}~\bibnamefont
  {Augusiak}}, \bibinfo {author} {\bibfnamefont {M.}~\bibnamefont
  {Demianowicz}}, \bibinfo {author} {\bibfnamefont {M.}~\bibnamefont
  {Paw\l{}owski}}, \bibinfo {author} {\bibfnamefont {J.}~\bibnamefont {Tura}},\
  and\ \bibinfo {author} {\bibfnamefont {A.}~\bibnamefont {Ac\'{\i}n}},\
  }\bibfield  {title} {\enquote {\bibinfo {title} {Elemental and tight monogamy
  relations in nonsignaling theories},}\ }\href
  {https://doi.org/10.1103/PhysRevA.90.052323} {\bibfield  {journal} {\bibinfo
  {journal} {Phys. Rev. A}\ }\textbf {\bibinfo {volume} {90}},\ \bibinfo
  {pages} {052323} (\bibinfo {year} {2014})}\BibitemShut {NoStop}%
\bibitem [{\citenamefont {Barrett}\ \emph {et~al.}(2006)\citenamefont
  {Barrett}, \citenamefont {Kent},\ and\ \citenamefont {Pironio}}]{BKP}%
  \BibitemOpen
  \bibfield  {author} {\bibinfo {author} {\bibfnamefont {J.}~\bibnamefont
  {Barrett}}, \bibinfo {author} {\bibfnamefont {A.}~\bibnamefont {Kent}},\ and\
  \bibinfo {author} {\bibfnamefont {S.}~\bibnamefont {Pironio}},\ }\bibfield
  {title} {\enquote {\bibinfo {title} {{Maximally Nonlocal and Monogamous
  Quantum Correlations}},}\ }\href
  {https://doi.org/10.1103/PhysRevLett.97.170409} {\bibfield  {journal}
  {\bibinfo  {journal} {Phys. Rev. Lett.}\ }\textbf {\bibinfo {volume} {97}},\
  \bibinfo {pages} {170409} (\bibinfo {year} {2006})}\BibitemShut {NoStop}%
\bibitem [{\citenamefont {Navascu\'es}\ \emph {et~al.}(2020)\citenamefont
  {Navascu\'es}, \citenamefont {Wolfe}, \citenamefont {Rosset},\ and\
  \citenamefont {Pozas-Kerstjens}}]{NetworkEntanglement2020}%
  \BibitemOpen
  \bibfield  {author} {\bibinfo {author} {\bibfnamefont {M.}~\bibnamefont
  {Navascu\'es}}, \bibinfo {author} {\bibfnamefont {E.}~\bibnamefont {Wolfe}},
  \bibinfo {author} {\bibfnamefont {D.}~\bibnamefont {Rosset}},\ and\ \bibinfo
  {author} {\bibfnamefont {A.}~\bibnamefont {Pozas-Kerstjens}},\ }\bibfield
  {title} {\enquote {\bibinfo {title} {{Genuine Network Multipartite
  Entanglement}},}\ }\href {https://doi.org/10.1103/PhysRevLett.125.240505}
  {\bibfield  {journal} {\bibinfo  {journal} {Phys. Rev. Lett.}\ }\textbf
  {\bibinfo {volume} {125}},\ \bibinfo {pages} {240505} (\bibinfo {year}
  {2020})}\BibitemShut {NoStop}%
\bibitem [{\citenamefont {Pearle}(1970)}]{pearle1970hidden}%
  \BibitemOpen
  \bibfield  {author} {\bibinfo {author} {\bibfnamefont {P.~M.}\ \bibnamefont
  {Pearle}},\ }\bibfield  {title} {\enquote {\bibinfo {title} {Hidden-variable
  example based upon data rejection},}\ }\href
  {https://doi.org/10.1103/PhysRevD.2.1418} {\bibfield  {journal} {\bibinfo
  {journal} {Phys. Rev. D}\ }\textbf {\bibinfo {volume} {2}},\ \bibinfo {pages}
  {1418} (\bibinfo {year} {1970})}\BibitemShut {NoStop}%
\bibitem [{\citenamefont {Braunstein}\ and\ \citenamefont
  {Caves}(1990)}]{braunstein1990wringing}%
  \BibitemOpen
  \bibfield  {author} {\bibinfo {author} {\bibfnamefont {S.~L.}\ \bibnamefont
  {Braunstein}}\ and\ \bibinfo {author} {\bibfnamefont {C.~M.}\ \bibnamefont
  {Caves}},\ }\bibfield  {title} {\enquote {\bibinfo {title} {{Wringing out
  better Bell inequalities}},}\ }\href
  {https://doi.org/10.1016/0003-4916(90)90339-p} {\bibfield  {journal}
  {\bibinfo  {journal} {Ann. Phys.}\ }\textbf {\bibinfo {volume} {202}},\
  \bibinfo {pages} {22} (\bibinfo {year} {1990})}\BibitemShut {NoStop}%
\bibitem [{\citenamefont {Navascu{\'{e}}s}\ and\ \citenamefont
  {Wolfe}(2020)}]{Navascues2017completion}%
  \BibitemOpen
  \bibfield  {author} {\bibinfo {author} {\bibfnamefont {M.}~\bibnamefont
  {Navascu{\'{e}}s}}\ and\ \bibinfo {author} {\bibfnamefont {E.}~\bibnamefont
  {Wolfe}},\ }\bibfield  {title} {\enquote {\bibinfo {title} {{The Inflation
  Technique Completely Solves the Causal Compatibility Problem}},}\ }\href
  {https://doi.org/10.1515/jci-2018-0008} {\bibfield  {journal} {\bibinfo
  {journal} {J. Caus. Inf.}\ }\textbf {\bibinfo {volume} {8}},\ \bibinfo
  {pages} {70} (\bibinfo {year} {2020})}\BibitemShut {NoStop}%
\bibitem [{\citenamefont {Hamel}\ \emph {et~al.}(2014)\citenamefont {Hamel},
  \citenamefont {Shalm}, \citenamefont {Hübel}, \citenamefont {Miller},
  \citenamefont {Marsili}, \citenamefont {Verma}, \citenamefont {Mirin},
  \citenamefont {Nam}, \citenamefont {Resch},\ and\ \citenamefont
  {Jennewein}}]{Hamel2014GHZ3MerminSvetlichny}%
  \BibitemOpen
  \bibfield  {author} {\bibinfo {author} {\bibfnamefont {D.~R.}\ \bibnamefont
  {Hamel}}, \bibinfo {author} {\bibfnamefont {L.~K.}\ \bibnamefont {Shalm}},
  \bibinfo {author} {\bibfnamefont {H.}~\bibnamefont {Hübel}}, \bibinfo
  {author} {\bibfnamefont {A.~J.}\ \bibnamefont {Miller}}, \bibinfo {author}
  {\bibfnamefont {F.}~\bibnamefont {Marsili}}, \bibinfo {author} {\bibfnamefont
  {V.~B.}\ \bibnamefont {Verma}}, \bibinfo {author} {\bibfnamefont {R.~P.}\
  \bibnamefont {Mirin}}, \bibinfo {author} {\bibfnamefont {S.~W.}\ \bibnamefont
  {Nam}}, \bibinfo {author} {\bibfnamefont {K.~J.}\ \bibnamefont {Resch}},\
  and\ \bibinfo {author} {\bibfnamefont {T.}~\bibnamefont {Jennewein}},\
  }\bibfield  {title} {\enquote {\bibinfo {title} {Direct generation of
  three-photon polarization entanglement},}\ }\href
  {https://doi.org/10.1038/nphoton.2014.218} {\bibfield  {journal} {\bibinfo
  {journal} {Nature Photonics}\ }\textbf {\bibinfo {volume} {8}},\ \bibinfo
  {pages} {801–807} (\bibinfo {year} {2014})}\BibitemShut {NoStop}%
\bibitem [{\citenamefont {Zhang}\ \emph {et~al.}(2015)\citenamefont {Zhang},
  \citenamefont {Huang}, \citenamefont {Wang}, \citenamefont {Liu},
  \citenamefont {Li},\ and\ \citenamefont {Guo}}]{GHZExperiment6Photons}%
  \BibitemOpen
  \bibfield  {author} {\bibinfo {author} {\bibfnamefont {C.}~\bibnamefont
  {Zhang}}, \bibinfo {author} {\bibfnamefont {Y.-F.}\ \bibnamefont {Huang}},
  \bibinfo {author} {\bibfnamefont {Z.}~\bibnamefont {Wang}}, \bibinfo {author}
  {\bibfnamefont {B.-H.}\ \bibnamefont {Liu}}, \bibinfo {author} {\bibfnamefont
  {C.-F.}\ \bibnamefont {Li}},\ and\ \bibinfo {author} {\bibfnamefont {G.-C.}\
  \bibnamefont {Guo}},\ }\bibfield  {title} {\enquote {\bibinfo {title}
  {{Experimental Greenberger-Horne-Zeilinger-Type Six-Photon Quantum
  Nonlocality}},}\ }\href {https://doi.org/10.1103/PhysRevLett.115.260402}
  {\bibfield  {journal} {\bibinfo  {journal} {Phys. Rev. Lett.}\ }\textbf
  {\bibinfo {volume} {115}},\ \bibinfo {pages} {260402} (\bibinfo {year}
  {2015})}\BibitemShut {NoStop}%
\bibitem [{\citenamefont {Zhang}\ \emph {et~al.}(2019)\citenamefont {Zhang},
  \citenamefont {Bromley}, \citenamefont {Huang}, \citenamefont {Cao},
  \citenamefont {Lv}, \citenamefont {Liu}, \citenamefont {Li}, \citenamefont
  {Guo}, \citenamefont {Cianciaruso},\ and\ \citenamefont {Adesso}}]{Chao2019}%
  \BibitemOpen
  \bibfield  {author} {\bibinfo {author} {\bibfnamefont {C.}~\bibnamefont
  {Zhang}}, \bibinfo {author} {\bibfnamefont {T.~R.}\ \bibnamefont {Bromley}},
  \bibinfo {author} {\bibfnamefont {Y.-F.}\ \bibnamefont {Huang}}, \bibinfo
  {author} {\bibfnamefont {H.}~\bibnamefont {Cao}}, \bibinfo {author}
  {\bibfnamefont {W.-M.}\ \bibnamefont {Lv}}, \bibinfo {author} {\bibfnamefont
  {B.-H.}\ \bibnamefont {Liu}}, \bibinfo {author} {\bibfnamefont {C.-F.}\
  \bibnamefont {Li}}, \bibinfo {author} {\bibfnamefont {G.-C.}\ \bibnamefont
  {Guo}}, \bibinfo {author} {\bibfnamefont {M.}~\bibnamefont {Cianciaruso}},\
  and\ \bibinfo {author} {\bibfnamefont {G.}~\bibnamefont {Adesso}},\
  }\bibfield  {title} {\enquote {\bibinfo {title} {{Demonstrating Quantum
  Coherence and Metrology that is Resilient to Transversal Noise}},}\ }\href
  {https://doi.org/10.1103/PhysRevLett.123.180504} {\bibfield  {journal}
  {\bibinfo  {journal} {Phys. Rev. Lett.}\ }\textbf {\bibinfo {volume} {123}},\
  \bibinfo {pages} {180504} (\bibinfo {year} {2019})}\BibitemShut {NoStop}%
\bibitem [{\citenamefont {Proietti}\ \emph {et~al.}(2019)\citenamefont
  {Proietti}, \citenamefont {Pickston}, \citenamefont {Graffitti},
  \citenamefont {Barrow}, \citenamefont {Kundys}, \citenamefont {Branciard},
  \citenamefont {Ringbauer},\ and\ \citenamefont {Fedrizzi}}]{Proietti2019}%
  \BibitemOpen
  \bibfield  {author} {\bibinfo {author} {\bibfnamefont {M.}~\bibnamefont
  {Proietti}}, \bibinfo {author} {\bibfnamefont {A.}~\bibnamefont {Pickston}},
  \bibinfo {author} {\bibfnamefont {F.}~\bibnamefont {Graffitti}}, \bibinfo
  {author} {\bibfnamefont {P.}~\bibnamefont {Barrow}}, \bibinfo {author}
  {\bibfnamefont {D.}~\bibnamefont {Kundys}}, \bibinfo {author} {\bibfnamefont
  {C.}~\bibnamefont {Branciard}}, \bibinfo {author} {\bibfnamefont
  {M.}~\bibnamefont {Ringbauer}},\ and\ \bibinfo {author} {\bibfnamefont
  {A.}~\bibnamefont {Fedrizzi}},\ }\bibfield  {title} {\enquote {\bibinfo
  {title} {Experimental test of local observer independence},}\ }\href
  {https://doi.org/10.1126/sciadv.aaw9832} {\bibfield  {journal} {\bibinfo
  {journal} {Science Advances}\ }\textbf {\bibinfo {volume} {5}},\ \bibinfo
  {pages} {eaaw9832} (\bibinfo {year} {2019})}\BibitemShut {NoStop}%
\bibitem [{\citenamefont {Curchod}\ \emph {et~al.}(2015)\citenamefont
  {Curchod}, \citenamefont {Gisin},\ and\ \citenamefont
  {Liang}}]{Curchod2015MultipartiteNonlocality}%
  \BibitemOpen
  \bibfield  {author} {\bibinfo {author} {\bibfnamefont {F.~J.}\ \bibnamefont
  {Curchod}}, \bibinfo {author} {\bibfnamefont {N.}~\bibnamefont {Gisin}},\
  and\ \bibinfo {author} {\bibfnamefont {Y.-C.}\ \bibnamefont {Liang}},\
  }\bibfield  {title} {\enquote {\bibinfo {title} {Quantifying multipartite
  nonlocality via the size of the resource},}\ }\href
  {https://link.aps.org/doi/10.1103/PhysRevA.91.012121} {\bibfield  {journal}
  {\bibinfo  {journal} {Phys. Rev. A}\ }\textbf {\bibinfo {volume} {91}},\
  \bibinfo {pages} {012121} (\bibinfo {year} {2015})}\BibitemShut {NoStop}%
\bibitem [{\citenamefont {Baccari}\ \emph {et~al.}(2019)\citenamefont
  {Baccari}, \citenamefont {Tura}, \citenamefont {Fadel}, \citenamefont {Aloy},
  \citenamefont {Bancal}, \citenamefont {Sangouard}, \citenamefont
  {Lewenstein}, \citenamefont {Ac\'{\i}n},\ and\ \citenamefont
  {Augusiak}}]{Baccari2018NonlocalityDepth}%
  \BibitemOpen
  \bibfield  {author} {\bibinfo {author} {\bibfnamefont {F.}~\bibnamefont
  {Baccari}}, \bibinfo {author} {\bibfnamefont {J.}~\bibnamefont {Tura}},
  \bibinfo {author} {\bibfnamefont {M.}~\bibnamefont {Fadel}}, \bibinfo
  {author} {\bibfnamefont {A.}~\bibnamefont {Aloy}}, \bibinfo {author}
  {\bibfnamefont {J.-D.}\ \bibnamefont {Bancal}}, \bibinfo {author}
  {\bibfnamefont {N.}~\bibnamefont {Sangouard}}, \bibinfo {author}
  {\bibfnamefont {M.}~\bibnamefont {Lewenstein}}, \bibinfo {author}
  {\bibfnamefont {A.}~\bibnamefont {Ac\'{\i}n}},\ and\ \bibinfo {author}
  {\bibfnamefont {R.}~\bibnamefont {Augusiak}},\ }\bibfield  {title} {\enquote
  {\bibinfo {title} {Bell correlation depth in many-body systems},}\ }\href
  {https://doi.org/10.1103/PhysRevA.100.022121} {\bibfield  {journal} {\bibinfo
   {journal} {Phys. Rev. A}\ }\textbf {\bibinfo {volume} {100}},\ \bibinfo
  {pages} {022121} (\bibinfo {year} {2019})}\BibitemShut {NoStop}%
\bibitem [{\citenamefont {Gallego}\ \emph {et~al.}(2012)\citenamefont
  {Gallego}, \citenamefont {W\"urflinger}, \citenamefont {Ac\'{\i}n},\ and\
  \citenamefont {Navascu\'es}}]{Gallego2012TOBL}%
  \BibitemOpen
  \bibfield  {author} {\bibinfo {author} {\bibfnamefont {R.}~\bibnamefont
  {Gallego}}, \bibinfo {author} {\bibfnamefont {L.~E.}\ \bibnamefont
  {W\"urflinger}}, \bibinfo {author} {\bibfnamefont {A.}~\bibnamefont
  {Ac\'{\i}n}},\ and\ \bibinfo {author} {\bibfnamefont {M.}~\bibnamefont
  {Navascu\'es}},\ }\bibfield  {title} {\enquote {\bibinfo {title}
  {{Operational Framework for Nonlocality}},}\ }\href
  {https://link.aps.org/doi/10.1103/PhysRevLett.109.070401} {\bibfield
  {journal} {\bibinfo  {journal} {Phys. Rev. Lett.}\ }\textbf {\bibinfo
  {volume} {109}},\ \bibinfo {pages} {070401} (\bibinfo {year}
  {2012})}\BibitemShut {NoStop}%
\bibitem [{\citenamefont {Bancal}\ \emph {et~al.}(2013)\citenamefont {Bancal},
  \citenamefont {Barrett}, \citenamefont {Gisin},\ and\ \citenamefont
  {Pironio}}]{bancal2013definitions}%
  \BibitemOpen
  \bibfield  {author} {\bibinfo {author} {\bibfnamefont {J.-D.}\ \bibnamefont
  {Bancal}}, \bibinfo {author} {\bibfnamefont {J.}~\bibnamefont {Barrett}},
  \bibinfo {author} {\bibfnamefont {N.}~\bibnamefont {Gisin}},\ and\ \bibinfo
  {author} {\bibfnamefont {S.}~\bibnamefont {Pironio}},\ }\bibfield  {title}
  {\enquote {\bibinfo {title} {{Definitions of multipartite nonlocality}},}\
  }\href {https://link.aps.org/doi/10.1103/PhysRevA.88.014102} {\bibfield
  {journal} {\bibinfo  {journal} {Phys. Rev. A}\ }\textbf {\bibinfo {volume}
  {88}},\ \bibinfo {pages} {014102} (\bibinfo {year} {2013})}\BibitemShut
  {NoStop}%
\bibitem [{\citenamefont {Schmid}\ \emph {et~al.}(2020)\citenamefont {Schmid},
  \citenamefont {Fraser}, \citenamefont {Kunjwal}, \citenamefont {Sainz},
  \citenamefont {Wolfe},\ and\ \citenamefont {Spekkens}}]{LOCCInappropriate}%
  \BibitemOpen
  \bibfield  {author} {\bibinfo {author} {\bibfnamefont {D.}~\bibnamefont
  {Schmid}}, \bibinfo {author} {\bibfnamefont {T.~C.}\ \bibnamefont {Fraser}},
  \bibinfo {author} {\bibfnamefont {R.}~\bibnamefont {Kunjwal}}, \bibinfo
  {author} {\bibfnamefont {A.~B.}\ \bibnamefont {Sainz}}, \bibinfo {author}
  {\bibfnamefont {E.}~\bibnamefont {Wolfe}},\ and\ \bibinfo {author}
  {\bibfnamefont {R.~W.}\ \bibnamefont {Spekkens}},\ }\bibfield  {title}
  {\enquote {\bibinfo {title} {{Understanding the interplay of entanglement and
  nonlocality: motivating and developing a new branch of entanglement
  theory}},}\ }\href {https://arxiv.org/abs/2004.09194} {\bibfield  {journal}
  {\bibinfo  {journal} {arXiv:2004.09194}\ } (\bibinfo {year}
  {2020})}\BibitemShut {NoStop}%
\bibitem [{\citenamefont {Skrzypczyk}\ and\ \citenamefont
  {Brunner}(2009)}]{Skrzypczyk2009couplers}%
  \BibitemOpen
  \bibfield  {author} {\bibinfo {author} {\bibfnamefont {P.}~\bibnamefont
  {Skrzypczyk}}\ and\ \bibinfo {author} {\bibfnamefont {N.}~\bibnamefont
  {Brunner}},\ }\bibfield  {title} {\enquote {\bibinfo {title} {Couplers for
  non-locality swapping},}\ }\href
  {http://stacks.iop.org/1367-2630/11/i=7/a=073014} {\bibfield  {journal}
  {\bibinfo  {journal} {New J. Phys.}\ }\textbf {\bibinfo {volume} {11}},\
  \bibinfo {pages} {073014} (\bibinfo {year} {2009})}\BibitemShut {NoStop}%
\bibitem [{\citenamefont {Pironio}\ \emph {et~al.}(2011)\citenamefont
  {Pironio}, \citenamefont {Bancal},\ and\ \citenamefont
  {Scarani}}]{Pironio2011tripartiteextremal}%
  \BibitemOpen
  \bibfield  {author} {\bibinfo {author} {\bibfnamefont {S.}~\bibnamefont
  {Pironio}}, \bibinfo {author} {\bibfnamefont {J.-D.}\ \bibnamefont
  {Bancal}},\ and\ \bibinfo {author} {\bibfnamefont {V.}~\bibnamefont
  {Scarani}},\ }\bibfield  {title} {\enquote {\bibinfo {title} {Extremal
  correlations of the tripartite no-signaling polytope},}\ }\href
  {http://stacks.iop.org/1751-8121/44/i=6/a=065303} {\bibfield  {journal}
  {\bibinfo  {journal} {J. Phys. A}\ }\textbf {\bibinfo {volume} {44}},\
  \bibinfo {pages} {065303} (\bibinfo {year} {2011})}\BibitemShut {NoStop}%
\bibitem [{\citenamefont {Scarani}(2006)}]{FeatsFeaturesFailures}%
  \BibitemOpen
  \bibfield  {author} {\bibinfo {author} {\bibfnamefont {V.}~\bibnamefont
  {Scarani}},\ }\bibfield  {title} {\enquote {\bibinfo {title} {{Feats,
  Features and Failures of the {PR}-box}},}\ }in\ \href
  {https://doi.org/10.1063/1.2219371} {\emph {\bibinfo {booktitle} {{AIP}
  Conference Proceedings}}}\ (\bibinfo  {publisher} {{AIP}},\ \bibinfo {year}
  {2006})\BibitemShut {NoStop}%
\bibitem [{\citenamefont {Barrett}\ \emph {et~al.}(2005)\citenamefont
  {Barrett}, \citenamefont {Linden}, \citenamefont {Massar}, \citenamefont
  {Pironio}, \citenamefont {Popescu},\ and\ \citenamefont
  {Roberts}}]{Barrett2005SvetFromPR}%
  \BibitemOpen
  \bibfield  {author} {\bibinfo {author} {\bibfnamefont {J.}~\bibnamefont
  {Barrett}}, \bibinfo {author} {\bibfnamefont {N.}~\bibnamefont {Linden}},
  \bibinfo {author} {\bibfnamefont {S.}~\bibnamefont {Massar}}, \bibinfo
  {author} {\bibfnamefont {S.}~\bibnamefont {Pironio}}, \bibinfo {author}
  {\bibfnamefont {S.}~\bibnamefont {Popescu}},\ and\ \bibinfo {author}
  {\bibfnamefont {D.}~\bibnamefont {Roberts}},\ }\bibfield  {title} {\enquote
  {\bibinfo {title} {{Nonlocal correlations as an information-theoretic
  resource}},}\ }\href {https://doi.org/10.1103/PhysRevA.71.022101} {\bibfield
  {journal} {\bibinfo  {journal} {Phys. Rev. A}\ }\textbf {\bibinfo {volume}
  {71}},\ \bibinfo {pages} {022101} (\bibinfo {year} {2005})}\BibitemShut
  {NoStop}%
\bibitem [{\citenamefont {Contreras-Tejada}\ \emph {et~al.}(2021)\citenamefont
  {Contreras-Tejada}, \citenamefont {Palazuelos},\ and\ \citenamefont
  {de~Vicente}}]{Tejada2020NetworkGMNL}%
  \BibitemOpen
  \bibfield  {author} {\bibinfo {author} {\bibfnamefont {P.}~\bibnamefont
  {Contreras-Tejada}}, \bibinfo {author} {\bibfnamefont {C.}~\bibnamefont
  {Palazuelos}},\ and\ \bibinfo {author} {\bibfnamefont {J.~I.}\ \bibnamefont
  {de~Vicente}},\ }\bibfield  {title} {\enquote {\bibinfo {title} {{Genuine
  Multipartite Nonlocality Is Intrinsic to Quantum Networks}},}\ }\href
  {https://doi.org/10.1103/PhysRevLett.126.040501} {\bibfield  {journal}
  {\bibinfo  {journal} {Phys. Rev. Lett.}\ }\textbf {\bibinfo {volume} {126}},\
  \bibinfo {pages} {040501} (\bibinfo {year} {2021})}\BibitemShut {NoStop}%
\bibitem [{\citenamefont {Rosset}\ \emph {et~al.}(2018)\citenamefont {Rosset},
  \citenamefont {Buscemi},\ and\ \citenamefont {Liang}}]{QuantumMemoryRT}%
  \BibitemOpen
  \bibfield  {author} {\bibinfo {author} {\bibfnamefont {D.}~\bibnamefont
  {Rosset}}, \bibinfo {author} {\bibfnamefont {F.}~\bibnamefont {Buscemi}},\
  and\ \bibinfo {author} {\bibfnamefont {Y.-C.}\ \bibnamefont {Liang}},\
  }\bibfield  {title} {\enquote {\bibinfo {title} {{Resource Theory of Quantum
  Memories and Their Faithful Verification with Minimal Assumptions}},}\ }\href
  {https://doi.org/10.1103/PhysRevX.8.021033} {\bibfield  {journal} {\bibinfo
  {journal} {Phys. Rev. X}\ }\textbf {\bibinfo {volume} {8}},\ \bibinfo {pages}
  {021033} (\bibinfo {year} {2018})}\BibitemShut {NoStop}%
\end{thebibliography}%

\end{document}